\begin{document}
\newcommand{\fr}[2]{\frac{\;#1\;}{\;#2\;}}
\newtheorem{theorem}{Theorem}[section]
\newtheorem{lemma}{Lemma}[section]
\newtheorem{proposition}{Proposition}[section]
\newtheorem{corollary}{Corollary}[section]
\newtheorem{conjecture}{Conjecture}[section]
\newtheorem{remark}{Remark}[section]
\newtheorem{definition}{Definition}[section]
\newtheorem{example}{Example}[section]
\newtheorem{notation}{Notation}[section]
\numberwithin{equation}{section}
\newcommand{\Aut}{\mathrm{Aut}\,}
\newcommand{\CSupp}{\mathrm{CSupp}\,}
\newcommand{\Supp}{\mathrm{Supp}\,}
\newcommand{\rank}{\mathrm{rank}\,}
\newcommand{\col}{\mathrm{col}\,}
\newcommand{\len}{\mathrm{len}\,}
\newcommand{\ind}{\mathrm{ind}\,}
\newcommand{\leftlen}{\mathrm{leftlen}\,}
\newcommand{\rightlen}{\mathrm{rightlen}\,}
\newcommand{\length}{\mathrm{length}\,}
\newcommand{\wt}{\mathrm{wt}\,}
\newcommand{\diff}{\mathrm{diff}\,}
\newcommand{\lcm}{\mathrm{lcm}\,}
\newcommand{\dom}{\mathrm{dom}\,}
\newcommand{\SUPP}{\mathrm{SUPP}\,}
\newcommand{\supp}{\mathrm{supp}\,}
\newcommand{\End}{\mathrm{End}\,}
\newcommand{\Hom}{\mathrm{Hom}\,}
\newcommand{\ran}{\mathrm{ran}\,}
\newcommand{\Mat}{\mathrm{Mat}\,}
\newcommand{\rk}{\mathrm{rk}\,}
\newcommand{\rs}{\mathrm{rs}\,}
\newcommand{\piv}{\mathrm{piv}\,}
\newcommand{\perm}{\mathrm{perm}\,}
\newcommand{\inv}{\mathrm{inv}\,}
\newcommand{\orb}{\mathrm{orb}\,}
\newcommand{\id}{\mathrm{id}\,}
\newcommand{\soc}{\mathrm{soc}\,}
\newcommand{\Jac}{\mathrm{Jac}\,}
\newcommand{\GL}{\mathrm{GL}\,}

\title{Isometries and MacWilliams Extension Property for Weighted Poset Metric$^\ast$}

\author{Yang Xu$^1$ \,\,\,\,\,\, Haibin Kan$^2$\,\,\,\,\,\,Guangyue Han$^3$}

\maketitle

\renewcommand{\thefootnote}{\fnsymbol{footnote}}

\footnotetext{\hspace*{-12mm} \begin{tabular}{@{}r@{}p{13.4cm}@{}}
$^\ast$ & A preliminary version of this paper has been presented in IEEE International Symposium on Information Theory (ISIT) 2022.\\
$^1$ & Shanghai Key Laboratory of Intelligent Information Processing, School of Computer Science, Fudan University,
Shanghai 200433, China.\\
&Department of Mathematics, Faculty of Science, The University of Hong Kong, Pokfulam Road, Hong Kong, China. {E-mail:12110180008@fudan.edu.cn} \\
$^2$ & Shanghai Key Laboratory of Intelligent Information Processing, School of Computer Science, Fudan University,
Shanghai 200433, China.\\
&Shanghai Engineering Research Center of Blockchain, Shanghai 200433, China.\\
&Yiwu Research Institute of Fudan University, Yiwu City, Zhejiang 322000, China. {E-mail:hbkan@fudan.edu.cn} \\
$^3$ & Department of Mathematics, Faculty of Science, The University of Hong Kong, Pokfulam Road, Hong Kong, China. {E-mail:ghan@hku.hk} \\
\end{tabular}}

\vskip 3mm

{\hspace*{-6mm}\textbf{Abstract}---Let $\mathbf{H}$ be the cartesian product of a family of left modules over a ring $S$, indexed by a finite set $\Omega$. We are concerned with the $(\mathbf{P},\omega)$-weight on $\mathbf{H}$, where $\mathbf{P}=(\Omega,\preccurlyeq_{\mathbf{P}})$ is a poset and $\omega:\Omega\longrightarrow\mathbb{R}^{+}$ is a weight function. We characterize the group of $(\mathbf{P},\omega)$-weight isometries of $\mathbf{H}$, and give a canonical decomposition for semi-simple subcodes of $\mathbf{H}$ when $\mathbf{P}$ is hierarchical. We then study the MacWilliams extension property (MEP) for $(\mathbf{P},\omega)$-weight. We show that the MEP implies the unique decomposition property (UDP) of $(\mathbf{P},\omega)$, which further implies that $\mathbf{P}$ is hierarchical if $\omega$ is identically $1$. For the case that either $\mathbf{P}$ is hierarchical or $\omega$ is identically $1$, we show that the MEP for $(\mathbf{P},\omega)$-weight can be characterized in terms of the MEP for Hamming weight, and give necessary and sufficient conditions for $\mathbf{H}$ to satisfy the MEP for $(\mathbf{P},\omega)$-weight when $S$ is an Artinian simple ring (either finite or infinite). When $S$ is a finite field, in the context of $(\mathbf{P},\omega)$-weight, we compare the MEP with other coding theoretic properties including the MacWilliams identity, Fourier-reflexivity of partitions and the UDP, and show that the MEP is strictly stronger than all the rest among them.

\section{Introduction}

The notion of \textit{weighted poset metric} has been introduced by Hyun, Kim and Park in \cite{28} for binary field alphabet. A weighted poset metric is determined by a poset and a weight function, both defined on the coordinate set. In \cite{28}, the authors have classified all the weighted posets and directed graphs that admit the extended Hamming code $\widetilde{\mathcal{H}}_{3}$ to be a $2$-perfect code, and relevant results for more general $\widetilde{\mathcal{H}}_{k}$, $k\geqslant3$ have also been established. It has also been shown in \cite{28} that weighted poset metric can be viewed as an algebraic version of the directed graph metric introduced by Etzion, Firer and Machado in \cite{18}, and we refer the reader to [28, Sections I, II] and [18, Section III] for the connections between these two metrics.

Weighted poset metric is rather general in the sense that it includes some well studied metrics as special cases, such as poset metric (see \cite{7,27,38}) and weighted Hamming metric (see \cite{5}). Since the weight function takes values on each coordinate position, weighted poset metric can be useful to model some specific kind of channels for which the error probability depends on a codeword position, i.e., the distribution of errors is nonuniform, and can also be useful to perform bitwise or messagewise unequal error protection (see, e.g., the abstract of \cite{5} and [18, Section 1, Paragraph 6]).

More recently in \cite{35}, Machado and Firer have proposed and studied labeled-poset-block metric for finite field alphabet, which is a generalization of both the weighted poset metric in \cite{28} and the directed graph metric in \cite{18}. In \cite{35}, the authors have studied the group of linear isometries, the MacWilliams identity and the MacWilliams extension property (MEP) for labeled-poset-block metric. In particular, for binary field alphabet, they have given a necessary and sufficient condition for the MEP when the poset is hierarchical.

In this paper, we consider weighted poset metric for module alphabet. More precisely, the ambient space $\mathbf{H}=\prod_{i\in\Omega}H_{i}$ is the cartesian product of a family of left modules over a ring $S$, indexed by a finite set $\Omega$. This further generalizes the labeled-poset-block metric in \cite{35}. We will study the group of isometries and the MacWilliams extension property (MEP) for weighted poset metric.

Groups of linear isometries for various metrics have been studied extensively in the literature, and have been characterized for Rosenbloom-Tsfasman weight by Lee in \cite{31}, for crown weight by Cho and Kim in \cite{9}, for poset metric by Panek, Firer, Kim and Hyun in \cite{39}, for poset-block metric by Alves, Panek and Firer in \cite{1}, for directed graph metric by Etzion, Firer and Machado in \cite{18}, for combinatorial metric by Pinheiro, Machado and Firer in \cite{42}, and for labeled-poset-block metric by Machado and Firer in \cite{35}. We also refer the reader to \cite{35,40} for isometries for two general metrics in poset space.

In 1962, MacWilliams proved in \cite{37} that for a finite field $\mathbb{F}$ and a positive integer $n$, any Hamming weight preserving map between two linear codes extends to a Hamming weight isometry of $\mathbb{F}^{n}$ (also see \cite{6,46} for other proofs). Such a property, henceforth referred to as the MacWilliams extension property (MEP), has since been extended, generalized and discussed extensively in the literature: with respect to other weights and metrics such as symmetrized weight composition, homogeneous weight, bi-invariant weight over finite rings, rank metric, poset metric, combinatorial metric, directed graph metric and labeled-poset-block; with respect to codes over ring and module alphabets (both finite and infinite); and with respect to local-global property for subgroups of the general linear group, along with partitions of finite modules; see, among many others, \cite{3}, \cite{6}, \cite{12}--\cite{18}, \cite{21}--\cite{25}, \cite{29}, \cite{33}--\cite{35}, \cite{42}, \cite{44}, \cite{46}--\cite{49}.

The remainder of the paper is organized as follows. In Section 2, we give some definitions, notations and basic facts of weighted poset metric, the MEP and modules. In Section 3, we study the group of isometries for weighted poset metric. We consider a slightly more general case, and derive relevant results for weighted poset metric as a consequence. In Section 4, we study the MEP for $\mathbf{P}$-support, where $\mathbf{P}$ is a poset on $\Omega$. This is a special case of the MEP for general weighted poset metric. When $\mathbf{P}$ is hierarchical, we give a necessary and sufficient condition for $\mathbf{H}$ to satisfy the MEP, and give a canonical decomposition for semi-simple codes. Some other sufficient conditions for the MEP are also given for possibly non-hierarchical $\mathbf{P}$. In Section 5, we study \textit{isometry equation}, a notion that has been introduced by Dyshko to study the MEP for various weights (see \cite{13}--\cite{17} and [29, Lemma 4.4]). We derive the minimal length of nontrivial solutions to the isometry equation with respect to a finite lattice, which is then used to derive some sufficient conditions for Hamming weight preserving maps to be extendable.

In Section 6, we consider the MEP for $(\mathbf{P},\omega)$-weight for a poset $\mathbf{P}$ and a weight function $\omega:\Omega\longrightarrow\mathbb{R}^{+}$. In Section 6.1, we first show that with some seemingly relatively mild assumptions, the MEP implies the unique decomposition property (UDP) for $(\mathbf{P},\omega)$, which further implies that $\mathbf{P}$ is hierarchical if $\omega$ is identically $1$. Next, we focus on the case that either $\mathbf{P}$ is hierarchical or $\omega$ is identically $1$. We show that for such cases, the MEP for $(\mathbf{P},\omega)$-weight can be characterized in terms of the MEP for Hamming weight. We then give some explicit sufficient conditions for the MEP, and derive a necessary and sufficient condition for the MEP when $S$ is an Artinian simple ring (either finite or infinite). In Section 6.2, $\mathbf{H}$ is supposed to be a finite vector space. We compare the MEP with some other coding-theoretic properties including MacWilliams identity, Fourier-reflexivity of partitions, the UDP and that whether the group of isometries acts transitively on codewords with the same weight (see \cite{3,10,18,20,21,22,30,33,34,35,41,42,50,52}), and show that the MEP is strictly stronger than all the others.

\section{Preliminaries}
\setlength{\parindent}{2em}
We begin with some notations that are used throughout the remainder of the paper. For any $a,b\in\mathbb{Z}$, we let $[a,b]$ denote the set of all the integers between $a$ and $b$, i.e., $[a,b]=\{i\in\mathbb{Z}\mid a\leqslant i\leqslant b\}$. We also let $S$ be an associative ring with the multiplicative identity $1_{S}$, $\Omega$ be a nonempty finite set, $(H_{i}\mid i\in \Omega)$ be a family of left $S$-modules, and let
\begin{equation}\mathbf{H}=\prod_{i\in\Omega}H_{i}.\end{equation}
Any $S$-submodule of $\mathbf{H}$ is referred to as a \textit{linear code}. For any \textit{codeword} $\beta\in\mathbf{H}$, we let $\supp(\beta)$ denote the set
\begin{equation}\supp(\beta)\triangleq\{i\in\Omega\mid \beta_{i}\neq0\}.\end{equation}
For $i\in \Omega$, define $\pi_{i}:\mathbf{H}\longrightarrow H_{i}$ as $\pi_{i}(\alpha)=\alpha_{i}$, and define $\eta_{i}:H_{i}\longrightarrow \mathbf{H}$ as
\begin{equation}\forall~a\in H_{i}:\supp(\eta_{i}(a))\subseteq\{i\},~(\eta_{i}(a))_{i}=a.\end{equation}
For any $I\subseteq\Omega$, define $\delta(I)\subseteq\mathbf{H}$ as
\begin{equation}\delta(I)=\{\beta\in \mathbf{H}\mid\supp(\beta)\subseteq I\}.\end{equation}
It is known that $\End_{S}(\mathbf{H})$ and $\prod_{(i,j)\in\Omega\times\Omega}\Hom_{S}(H_{i},H_{j})$ can be identified via the one-to-one correspondence $\varphi\mapsto(\pi_{j}\circ\varphi\circ\eta_{i}\mid (i,j)\in\Omega\times\Omega)$ (see [2, Chapter 2, Section 6]).

\subsection{Weighted poset metric}

Throughout this subsection, we let $\mathbf{P}=(\Omega,\preccurlyeq_{\mathbf{P}})$ be a poset. A subset $B\subseteq\Omega$ is said to be an \textit{ideal} of $\mathbf{P}$ if for any $b\in B$ and $a\in\Omega$, $a\preccurlyeq_{\mathbf{P}}b$ implies that $a\in B$. We let $\mathcal{I}(\mathbf{P})$ denote the set of all the ideals of $\mathbf{P}$. For $B\subseteq\Omega$, we let $\langle B\rangle_{\mathbf{P}}$ denote the ideal $\{a\in\Omega\mid \exists~b\in B~s.t.~a\preccurlyeq_{\mathbf{P}}b\}$. In addition, $B$ is said to be a \textit{chain} in $\mathbf{P}$ if for any $a,b\in B$, either $a\preccurlyeq_{\mathbf{P}}b$ or $b\preccurlyeq_{\mathbf{P}}a$ holds, and $B$ is said to be an \textit{anti-chain} in $\mathbf{P}$ if for any $a,b\in B$, $a\preccurlyeq_{\mathbf{P}}b$ implies that $a=b$. For any $y\in\Omega$, we let $\len_{\mathbf{P}}(y)$ denote the largest cardinality of a chain in $\mathbf{P}$ containing $y$ as its greatest element. The \textit{dual poset} of $\mathbf{P}$ will be denoted by $\mathbf{\overline{P}}$, where $u\preccurlyeq_{\mathbf{\overline{P}}}v\Longleftrightarrow v\preccurlyeq_{\mathbf{P}}u$ for all $u,v\in\Omega$. The set of all the order automorphisms of $\mathbf{P}$ will be denoted by $\Aut(\mathbf{P})$.

\setlength{\parindent}{0em}
\begin{definition}
{{\bf{(1)}}\,\,$\mathbf{P}$ is said to be hierarchical if for any $u,v\in\Omega$ with $\len_{\mathbf{P}}(u)+1\leqslant\len_{\mathbf{P}}(v)$, it holds that $u\preccurlyeq_{\mathbf{P}}v$.

{\bf{(2)}}\,\,For $\omega:\Omega\longrightarrow\mathbb{R}^{+}$, we say that $(\mathbf{P},\omega)$ satisfies the unique decomposition property (UDP) if for any $I,J\in\mathcal{I}(\mathbf{P})$ with $\sum_{i\in I}\omega(i)=\sum_{j\in J}\omega(j)$, there exists $\lambda\in\Aut(\mathbf{P})$ such that $J=\lambda[I]$ and $\omega(\lambda(i))=\omega(i)$ for all $i\in\Omega$.
}
\end{definition}

\setlength{\parindent}{2em}
We note that hierarchical poset has been extensively studied for poset codes (see \cite{3,10,18,19,20,30,33,34,35,41,50}), and the UDP has been proposed in [18, Definition 2] and [35, Definition 11] in slightly different forms.

\setlength{\parindent}{2em}
Now we fix $\omega:\Omega\longrightarrow\mathbb{R}^{+}$. Following \cite{28}, $(\mathbf{P},\omega)$ is referred to as an \textit{$\omega$-weighted poset}. For any $\beta\in\mathbf{H}$, the $(\mathbf{P},\omega)$-weight of $\beta$ is defined as
\begin{equation}\wt_{(\mathbf{P},\omega)}(\beta)\triangleq\sum_{i\in\langle\supp(\beta)\rangle_{\mathbf{P}}}\omega(i).\end{equation}
It has been proven in \cite{28} that $d_{(\mathbf{P},\omega)}:\mathbf{H}\times \mathbf{H}\longrightarrow \mathbb{R}$ defined as \begin{equation}\mbox{$d_{(\mathbf{P},\omega)}(\alpha,\beta)=\wt_{(\mathbf{P},\omega)}(\beta-\alpha)$}\end{equation}
induces a metric on $\mathbf{H}$, which will henceforth be referred to as a \textit{weighted poset metric}. We note that if $\omega$ is identically $1$, then (2.5) recovers the definition of $\mathbf{P}$-weight (see \cite{7,27,38}), i.e.,
\begin{equation}\forall~\beta\in\mathbf{H}:\wt_{\mathbf{P}}(\beta)\triangleq|\langle\supp(\beta)\rangle_{\mathbf{P}}|.\end{equation}
If $\mathbf{P}$ is an anti-chain, then (2.6) recovers the notion of weighted Hamming metric (see \cite{5}). In addition, if $S$ is a finite field, $\mathbf{H}$ is finite and $\omega$ is integer-valued, then (2.6) becomes the labeled-poset-block metric proposed in \cite{35}.

\setlength{\parindent}{0em}
\begin{definition}
{\bf{(1)}}\,\,For a linear code $C\subseteq\mathbf{H}$ and $f\in\Hom_{S}(C,\mathbf{H})$, we say that $f$ preserves $(\mathbf{P},\omega)$-weight if $\wt_{(\mathbf{P},\omega)}(f(\alpha))=\wt_{(\mathbf{P},\omega)}(\alpha)$ for all $\alpha\in C$. Any $S$-module automorphism of $\mathbf{H}$ that preserves $(\mathbf{P},\omega)$-weight is referred to as a \textit{$(\mathbf{P},\omega)$-weight isometry} of $\mathbf{H}$. We let $\GL_{(\mathbf{P},\omega)}(\mathbf{H})$ denote the set of all the $(\mathbf{P},\omega)$-weight isometries of $\mathbf{H}$. Moreover, we say that $\mathbf{H}$ satisfies the MacWilliams extension property (MEP) for $(\mathbf{P},\omega)$-weight if for any linear code $C\subseteq\mathbf{H}$ and $f\in\Hom_{S}(C,\mathbf{H})$ such that $f$ preserves $(\mathbf{P},\omega)$-weight, there exists $\varphi\in\GL_{(\mathbf{P},\omega)}(\mathbf{H})$ with $\varphi\mid_{C}=f$.

{\bf{(2)}}\,\,For a linear code $C\subseteq\mathbf{H}$ and $f\in\Hom_{S}(C,\mathbf{H})$, we say that $f$ preserves $\mathbf{P}$-support if $\langle\supp(f(\alpha))\rangle_{\mathbf{P}}=\langle\supp(\alpha)\rangle_{\mathbf{P}}$ for all $\alpha\in C$. We let $\GL_{\mathbf{P}}(\mathbf{H})$ denote the set of all the $S$-module automorphisms of $\mathbf{H}$ that preserve $\mathbf{P}$-support. Moreover, we say that $\mathbf{H}$ satisfies the MEP for $\mathbf{P}$-support if for any linear code $C\subseteq\mathbf{H}$ and $f\in\Hom_{S}(C,\mathbf{H})$ such that $f$ preserves $\mathbf{P}$-support, there exists $\varphi\in\GL_{\mathbf{P}}(\mathbf{H})$ with $\varphi\mid_{C}=f$.
\end{definition}

\begin{remark}
The MEP for $\mathbf{P}$-support is indeed a special case of the MEP for $(\mathbf{P},\omega)$-weight. More precisely, let $\sigma:\Omega\longrightarrow[0,|\Omega|-1]$ be a bijection, and set $\omega:\Omega\longrightarrow\mathbb{R}^{+}$ as $\omega(i)=2^{\sigma(i)}$. By (2.5), we infer that for any $\alpha,\beta\in\mathbf{H}$, $\wt_{(\mathbf{P},\omega)}(\alpha)=\wt_{(\mathbf{P},\omega)}(\beta)\Longleftrightarrow\langle\supp(\alpha)\rangle_{\mathbf{P}}=\langle\supp(\beta)\rangle_{\mathbf{P}}$. Hence a map preserves $(\mathbf{P},\omega)$-weight if and only if it preserves $\mathbf{P}$-support, and consequently, $\mathbf{H}$ satisfies the MEP for $(\mathbf{P},\omega)$-weight if and only if $\mathbf{H}$ satisfies the MEP for $\mathbf{P}$-support.
\end{remark}

\subsection{Some remarks for modules}
\setlength{\parindent}{2em}
In this subsection, we collect some definitions and notations for modules, most of which are known and can be found in \cite{2,49}.

First of all, recall that the ring $S$ is said to be \textit{Artinian simple} if $S$ is both left Artinian and simple. For $e,k\in\mathbb{Z}^{+}$, we let $Mat_{e,k}(S)$ denote the set of all the matrices over $S$ with $e$ rows and $k$ columns, and write $Mat_{e}(S)\triangleq Mat_{e,e}(S)$. By the celebrated Wedderburn-Artin Theorem, $S$ is Artinian simple if and only if $S$ is isomorphic to $Mat_{e}(\mathbb{D})$ for some division ring $\mathbb{D}$ and $e\in\mathbb{Z}^{+}$ (see, e.g, [2, Theorems 13.6 and 13.7]).

Next, consider two left $S$-modules $X$ and $Y$. We write $X\cong Y$ if $X$ and $Y$ are isomorphic as left $S$-modules. $Y$ is said to be \textit{$X$-injective} if for any $S$-submodule $A\subseteq X$ and $f\in\Hom_{S}(A,Y)$, there exists $g\in\Hom_{S}(X,Y)$ with $g\mid_{A}=f$. We will say that $Y$ is \textit{strong pseudo-injective} if for any $S$-submodule $B\subseteq Y$ and any injective $h\in\Hom_{S}(B,Y)$, there exists $\tau\in\Aut_{S}(Y)$ with $\tau\mid_{B}=h$. We also let $\soc_{S}(Y)$ denote the socle of $Y$, i.e., the largest semi-simple $S$-submodule of $Y$.

\setlength{\parindent}{2em}
Now we fix a poset $\mathbf{P}=(\Omega,\preccurlyeq_{\mathbf{P}})$ and $\omega:\Omega\longrightarrow\mathbb{R}^{+}$. For convenience, we collect five conditions which will appear frequently in our discussion.

\setlength{\parindent}{0em}
\begin{definition}
{{\bf{(1)}}\,\,We say that $\mathbf{H}$ satisfies Condition (A) if $H_{i}$ is strong pseudo-injective for all $i\in\Omega$.

{\bf{(2)}}\,\,We say that $(\mathbf{H},\mathbf{P})$ satisfies Condition (B) if for any $k,l\in\Omega$ such that $k\preccurlyeq_{\mathbf{P}}l$, $k\neq l$, it holds true that $H_{k}$ is $H_{l}$-injective.

{\bf{(3)}}\,\,We say that $\mathbf{H}$ satisfies Condition (C) if there exists $\xi\in\mathbf{H}$ such that $\xi\neq0$ and for any $k,l\in\Omega$, it holds that
$(\forall~a\in S:a\cdot\xi_{k}=0\Longleftrightarrow a\cdot\xi_{l}=0)$.

{\bf{(4)}}\,\,We say that $(\mathbf{H},(\mathbf{P},\omega))$ satisfies Condition (D) if $(\mathbf{P},\omega)$ satisfies the UDP, and for any $u,v\in\Omega$ such that $\len_{\mathbf{P}}(u)=\len_{\mathbf{P}}(v)$, $\omega(u)=\omega(v)$, it holds that $H_{u}\cong H_{v}$.

{\bf{(5)}}\,\,We say that $(\mathbf{H},\mathbf{P})$ satisfies Condition (E) if $\mathbf{P}$ is hierarchical, and for any $u,v\in\Omega$ such that $\len_{\mathbf{P}}(u)=\len_{\mathbf{P}}(v)$, it holds that $H_{u}\cong H_{v}$.
}
\end{definition}

\begin{remark}
One can check that $\mathbf{H}$ satisfies Condition (C) if and only if there exists a left $S$-module $B$ such that $B\neq\{0\}$ and for any $i\in\Omega$, $B$ is isomorphic to some $S$-submodule of $H_{i}$. Condition (C) seems to be relatively mild. In particular, if $S$ is an Artinian simple ring and $H_{i}\neq\{0\}$ for all $i\in\Omega$, then $\mathbf{H}$ satisfies Condition (C). We will show in Section 6 that if $\mathbf{H}$ satisfies Condition (C), then Conditions (A), (B), (D) (or (E)) are all necessary conditions for the MEP.
\end{remark}

\setlength{\parindent}{2em}
We end this subsection by noting that Conditions D and E are closely related. The following lemma is a consequence of [34, Theorem 3], and will be used frequently in our discussion.

\setlength{\parindent}{0em}
\begin{lemma}
{Suppose that $\omega$ is identically $1$. Then, $(\mathbf{P},\omega)$ satisfies the UDP if and only if $\mathbf{P}$ is hierarchical. Consequently, $(\mathbf{H},(\mathbf{P},\omega))$ satisfies Condition (D) if and only if $(\mathbf{H},\mathbf{P})$ satisfies Condition (E).
}
\end{lemma}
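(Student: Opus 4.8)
The plan is to reduce everything to the first assertion — that when $\omega\equiv1$ the pair $(\mathbf{P},\omega)$ satisfies the UDP if and only if $\mathbf{P}$ is hierarchical — after which the ``consequently'' clause is a mechanical comparison of Conditions~(D) and~(E). Observe first that with $\omega\equiv1$ we have $\sum_{i\in I}\omega(i)=|I|$ for every $I\in\mathcal{I}(\mathbf{P})$, and the requirement ``$\omega(\lambda(i))=\omega(i)$'' in the definition of the UDP is automatic; so the UDP for $(\mathbf{P},\omega)$ says exactly that any two ideals of $\mathbf{P}$ of the same cardinality lie in a common $\Aut(\mathbf{P})$-orbit on $\mathcal{I}(\mathbf{P})$. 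This equivalence is in substance [34, Theorem~3]; I outline a direct argument for both implications.

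For ``hierarchical $\Rightarrow$ UDP'', I would first recall the layered structure of a hierarchical poset. Put $\Gamma_\ell=\{x\in\Omega\mid\len_{\mathbf{P}}(x)=\ell\}$. Since $x\preccurlyeq_{\mathbf{P}}y$ with $x\neq y$ forces $\len_{\mathbf{P}}(x)<\len_{\mathbf{P}}(y)$, each $\Gamma_\ell$ is an anti-chain; and when $\mathbf{P}$ is hierarchical the converse holds, so $\len_{\mathbf{P}}(x)<\len_{\mathbf{P}}(y)$ is equivalent to $x\preccurlyeq_{\mathbf{P}}y$ for $x\neq y$. It follows that every $I\in\mathcal{I}(\mathbf{P})$ has the canonical form $I=\Gamma_1\cup\cdots\cup\Gamma_m\cup T$ with $m=\max\{\ell\mid\Gamma_\ell\subseteq I\}$ and $T$ a proper subset of $\Gamma_{m+1}$ (the case $I=\Omega$ being degenerate), so $|I|$ determines $m$ and $|T|$; and any family of permutations of the individual $\Gamma_\ell$ assembles to an element of $\Aut(\mathbf{P})$, since the only comparabilities of $\mathbf{P}$ are ``all of $\Gamma_i$ below all of $\Gamma_j$'' for $i<j$. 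Hence if $|I|=|J|$, comparing canonical forms forces equal $m$ and equal $|T|$, and a permutation of $\Gamma_{m+1}$ sending the $T$-part of $I$ onto that of $J$, extended by the identity on the other levels, is the required $\lambda\in\Aut(\mathbf{P})$.

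For the contrapositive of ``UDP $\Rightarrow$ hierarchical'', suppose $\mathbf{P}$ is not hierarchical and fix $u,v\in\Omega$ with $\len_{\mathbf{P}}(u)<\len_{\mathbf{P}}(v)$ but $u\not\preccurlyeq_{\mathbf{P}}v$. Set $b=\len_{\mathbf{P}}(v)$, $L=\{x\in\Omega\mid\len_{\mathbf{P}}(x)<b\}$ (which is an ideal of $\mathbf{P}$, as $x\preccurlyeq_{\mathbf{P}}y$ and $\len_{\mathbf{P}}(y)<b$ give $\len_{\mathbf{P}}(x)<b$) and $J=\langle\{v\}\rangle_{\mathbf{P}}$. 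Since $x\preccurlyeq_{\mathbf{P}}v$ with $x\neq v$ implies $\len_{\mathbf{P}}(x)<b$, we have $J\subseteq L\cup\{v\}$, strictly so because $u\in L\setminus J$; hence $|J|\leqslant|L|$. Every finite poset has an ideal of each cardinality from $0$ to its size (obtained by repeatedly adjoining to the current ideal $I'$ a minimal element of $L\setminus I'$), so there is an ideal $I$ of $\mathbf{P}$ with $I\subseteq L$ and $|I|=|J|$. Now every $\lambda\in\Aut(\mathbf{P})$ preserves $\len_{\mathbf{P}}$ and hence fixes $\Gamma_b=\{x\mid\len_{\mathbf{P}}(x)=b\}$ setwise; but $I\subseteq L$ gives $I\cap\Gamma_b=\emptyset$ while $v\in J\cap\Gamma_b$, so $\lambda[I]\cap\Gamma_b=\emptyset\neq J\cap\Gamma_b$ and $\lambda[I]\neq J$. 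Thus $I$ and $J$ are ideals of the same cardinality lying in distinct $\Aut(\mathbf{P})$-orbits, and the UDP fails.

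Finally, for the ``consequently'' clause: with $\omega\equiv1$ the hypothesis ``$\len_{\mathbf{P}}(u)=\len_{\mathbf{P}}(v)$ and $\omega(u)=\omega(v)$'' occurring in Condition~(D) collapses to ``$\len_{\mathbf{P}}(u)=\len_{\mathbf{P}}(v)$'', which is the hypothesis in Condition~(E), while the conclusion $H_u\cong H_v$ is the same for both; combining this with the first assertion (``$(\mathbf{P},\omega)$ satisfies the UDP'' $\Longleftrightarrow$ ``$\mathbf{P}$ is hierarchical''), Conditions~(D) and~(E) become the same pair of requirements, and the proof is complete. The only step that is not pure definition-chasing is the structural/orbit analysis of hierarchical posets in the second paragraph — equivalently, the appeal to [34, Theorem~3] — which is where I expect the real work to lie.
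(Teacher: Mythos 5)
Your proof is correct. There is one noteworthy difference from the paper: the paper does not prove the lemma at all, but derives it as "a consequence of [34, Theorem~3]," whereas you replace that citation with a direct, self-contained argument. Your forward direction is the standard layer-decomposition of hierarchical posets (every ideal is the union of lower layers $\Gamma_1\cup\cdots\cup\Gamma_m$ together with a subset $T$ of $\Gamma_{m+1}$, and level-preserving permutations assemble to order automorphisms); your contrapositive cleverly produces two ideals of equal cardinality that no automorphism can match by intersecting with the level set $\Gamma_b$, using the fact that order automorphisms preserve $\len_{\mathbf{P}}$. Both directions check out (including the edge cases $I=\emptyset$ and $I=\Omega$, and the observation that an ideal of a sub-poset which is itself an ideal of $\mathbf{P}$ is again an ideal of $\mathbf{P}$). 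The "consequently" clause is, as you say, mechanical once the first equivalence is in hand, since $\omega\equiv1$ makes the weight condition in~(D) vacuous. What your route buys is that a reader of this paper need not chase down [34]; what the paper's route buys is brevity and the reuse of a published characterization of hierarchical posets that also covers the other properties appearing later in Theorem~6.5 and Corollary~6.2.
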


\section{Group of isometries for $(\mathbf{P},\omega)$-weight}

\setlength{\parindent}{2em}
Throughout this section, we let $K\triangleq\{i\in\Omega\mid H_{i}\neq\{0\}\}$, and fix a poset $\mathbf{P}=(\Omega,\preccurlyeq_{\mathbf{P}})$.

We first consider a slightly more general case. More precisely, we fix $(Y,\curlyeqprec)$ such that $\curlyeqprec$ is an anti-symmetric relation on $Y$, and fix $\varpi:2^{\Omega}\longrightarrow Y$ satisfying the following three conditions:
\begin{equation}\forall~B\subseteq\Omega:\varpi(B)=\varpi(\langle B\rangle_{\mathbf{P}}).\end{equation}
\begin{equation}\forall~I,J\in\mathcal{I}(\mathbf{P}):I\subseteq J\Longrightarrow\varpi(I)\curlyeqprec\varpi(J).\end{equation}
\begin{equation}\forall~I\in\mathcal{I}(\mathbf{P}),\forall~u\in I:\varpi(I)=\varpi(\{u\})\Longrightarrow I=\langle \{u\}\rangle_{\mathbf{P}}.\end{equation}
Now we define $T\leqslant\Aut(\mathbf{P})$ and $G\leqslant\Aut_{S}(\mathbf{H})$ as follows:
$$T=\{\mu\in\Aut(\mathbf{P})\mid\mu\mid_{\Omega-K}=\id_{\Omega-K};\forall~I\subseteq K,\varpi(\mu[I])=\varpi(I); \forall i\in\Omega,H_{i}\cong H_{\mu(i)}\},$$
$$G=\{\varphi\in\Aut_{S}(\mathbf{H})\mid\forall~\alpha\in\mathbf{H},\varpi(\supp(\varphi(\alpha)))=\varpi(\supp(\alpha))\}.$$
Our goal is to characterize $G$. We begin with two lemmas, where the proof of the first lemma is straightforward and hence omitted.

\setlength{\parindent}{0em}
\begin{lemma}
{Let $\varphi\in\End_{S}(\mathbf{H})$ and $\lambda\in\Aut(\mathbf{P})$. Then, the following three statements are equivalent to each other:

{\bf{(1)}}\,\,$\varphi\in\Aut_{S}(\mathbf{H})$, and $\langle\supp(\varphi(\alpha))\rangle_{\mathbf{P}}=\lambda[\langle\supp(\alpha)\rangle_{\mathbf{P}}]$ for all $\alpha\in\mathbf{H}$;

{\bf{(2)}}\,\,$\varphi\in\Aut_{S}(\mathbf{H})$. Moreover, for any $i\in K$ and $a\in H_{i}-\{0\}$, it holds that
$\langle\supp(\varphi(\eta_{i}(a)))\rangle_{\mathbf{P}}=\langle\{\lambda(i)\}\rangle_{\mathbf{P}}$;

{\bf{(3)}}\,\,$\pi_{j}\circ\varphi\circ\eta_{i}=0$ for all $i,j\in\Omega$ with $j\not\preccurlyeq_{\mathbf{P}}\lambda(i)$, and $\pi_{\lambda(i)}\circ\varphi\circ\eta_{i}\in\Aut_{S}(H_{i},H_{\lambda(i)})$ for all $i\in\Omega$.

}
\end{lemma}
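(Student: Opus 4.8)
The plan is to prove the cyclic chain of implications (1) $\Rightarrow$ (2) $\Rightarrow$ (3) $\Rightarrow$ (1), using throughout the identification of $\End_{S}(\mathbf{H})$ with $\prod_{(i,j)}\Hom_{S}(H_{i},H_{j})$ via $\varphi\mapsto(\pi_{j}\circ\varphi\circ\eta_{i})$, together with the elementary facts that $\alpha=\sum_{i\in\Omega}\eta_{i}(\pi_{i}(\alpha))$ and $\supp(\eta_{i}(a))\subseteq\{i\}$, and that for an ideal $I\in\mathcal{I}(\mathbf{P})$ one has $\langle J\rangle_{\mathbf{P}}\subseteq I$ for every $J\subseteq I$, while $\langle\{i\}\rangle_{\mathbf{P}}\subseteq\langle\{\lambda(i)\}\rangle_{\mathbf{P}}$ forces $i\preccurlyeq_{\mathbf{P}}\lambda(i)$, hence $i=\lambda(i)$ whenever $\lambda$ is an order automorphism (so in particular $\len_{\mathbf{P}}$ is preserved and $\lambda[\langle B\rangle_{\mathbf{P}}]=\langle\lambda[B]\rangle_{\mathbf{P}}$ for all $B\subseteq\Omega$).

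\textbf{(1) $\Rightarrow$ (2).} This is immediate: apply the hypothesis in (1) to $\alpha=\eta_{i}(a)$ for $i\in K$ and $a\in H_{i}-\{0\}$, noting that $\langle\supp(\eta_{i}(a))\rangle_{\mathbf{P}}=\langle\{i\}\rangle_{\mathbf{P}}$, and use $\lambda[\langle\{i\}\rangle_{\mathbf{P}}]=\langle\{\lambda(i)\}\rangle_{\mathbf{P}}$.

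\textbf{(2) $\Rightarrow$ (3).} Fix $i\in K$ and $j\in\Omega$ with $j\not\preccurlyeq_{\mathbf{P}}\lambda(i)$. For every $a\in H_{i}$, either $a=0$, in which case $(\pi_{j}\circ\varphi\circ\eta_{i})(a)=0$ trivially, or $a\neq0$, in which case (2) gives $\supp(\varphi(\eta_{i}(a)))\subseteq\langle\supp(\varphi(\eta_{i}(a)))\rangle_{\mathbf{P}}=\langle\{\lambda(i)\}\rangle_{\mathbf{P}}$, so $j\notin\supp(\varphi(\eta_{i}(a)))$ and again $(\pi_{j}\circ\varphi\circ\eta_{i})(a)=\pi_{j}(\varphi(\eta_{i}(a)))=0$; hence $\pi_{j}\circ\varphi\circ\eta_{i}=0$. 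For $i\in\Omega-K$ we have $H_{i}=\{0\}$, so $\pi_{j}\circ\varphi\circ\eta_{i}=0$ holds vacuously for all $j$; this covers all pairs with $j\not\preccurlyeq_{\mathbf{P}}\lambda(i)$. It remains to show $\pi_{\lambda(i)}\circ\varphi\circ\eta_{i}\in\Aut_{S}(H_{i},H_{\lambda(i)})$ for every $i\in\Omega$; for $i\notin K$ this is the zero map between zero modules. For $i\in K$: if $a\in H_{i}$ with $(\pi_{\lambda(i)}\circ\varphi\circ\eta_{i})(a)=0$, then combined with $\pi_{j}\circ\varphi\circ\eta_{i}=0$ for all $j\not\preccurlyeq_{\mathbf{P}}\lambda(i)$ and with the observation that $\supp(\varphi(\eta_{i}(a)))\subseteq\langle\{\lambda(i)\}\rangle_{\mathbf{P}}$ whenever $a\neq0$ — whose strictly-smaller elements $j\prec_{\mathbf{P}}\lambda(i)$ all satisfy $j\neq\lambda(i)$ but $j\preccurlyeq_{\mathbf{P}}\lambda(i)$ — we find that if $a\neq0$ the only coordinate of $\varphi(\eta_i(a))$ that need not vanish lies over $\langle\{\lambda(i)\}\rangle_{\mathbf{P}}$; the cleaner argument is to use injectivity of $\varphi$ and the structure of $\langle\{\lambda(i)\}\rangle_{\mathbf{P}}$, which I will spell out by downward induction on $\len_{\mathbf{P}}(\lambda(i))$ (for minimal $\lambda(i)$, $\langle\{\lambda(i)\}\rangle_{\mathbf{P}}=\{\lambda(i)\}$, so $\varphi(\eta_i(a))=\eta_{\lambda(i)}((\pi_{\lambda(i)}\circ\varphi\circ\eta_i)(a))=0$, forcing $\eta_i(a)=0$, i.e. $a=0$; the inductive step handles the restriction of $\varphi$ to $\delta(\langle\{\lambda(i)\}\rangle_{\mathbf{P}}-\{\lambda(i)\})$, which is invariant and on which $\varphi$ is bijective, letting us cancel). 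Surjectivity of $\pi_{\lambda(i)}\circ\varphi\circ\eta_{i}$ follows dually from surjectivity of $\varphi$, again by induction on $\len_{\mathbf{P}}$.

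\textbf{(3) $\Rightarrow$ (1).} Assume (3). The matrix $(\pi_{j}\circ\varphi\circ\eta_{i})_{(i,j)}$ is block upper-triangular with respect to any linear extension of $\mathbf{P}$ composed with $\lambda$, with automorphisms on the diagonal, so $\varphi\in\Aut_{S}(\mathbf{H})$. For the support identity: given $\alpha\in\mathbf{H}$, write $\varphi(\alpha)=\sum_{i}\varphi(\eta_{i}(\pi_{i}(\alpha)))$; by (3), $\supp(\varphi(\eta_{i}(\pi_{i}(\alpha))))\subseteq\langle\{\lambda(i)\}\rangle_{\mathbf{P}}$ whenever $\pi_{i}(\alpha)\neq0$, so $\langle\supp(\varphi(\alpha))\rangle_{\mathbf{P}}\subseteq\bigcup_{i\in\supp(\alpha)}\langle\{\lambda(i)\}\rangle_{\mathbf{P}}=\langle\lambda[\supp(\alpha)]\rangle_{\mathbf{P}}=\lambda[\langle\supp(\alpha)\rangle_{\mathbf{P}}]$. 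The reverse inclusion: apply the inclusion just proved to $\varphi^{-1}$ in place of $\varphi$ and $\lambda^{-1}$ in place of $\lambda$ (checking first that $\varphi^{-1}$ satisfies (3) with $\lambda^{-1}$, which follows by inverting the block-triangular matrix), to get $\lambda^{-1}[\langle\supp(\varphi(\alpha))\rangle_{\mathbf{P}}]\supseteq\langle\supp(\alpha)\rangle_{\mathbf{P}}$ after substituting $\varphi(\alpha)$ for the variable, i.e. $\langle\supp(\varphi(\alpha))\rangle_{\mathbf{P}}\supseteq\lambda[\langle\supp(\alpha)\rangle_{\mathbf{P}}]$.

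\textbf{Main obstacle.} The only genuinely delicate point is the argument inside (2) $\Rightarrow$ (3) that $\pi_{\lambda(i)}\circ\varphi\circ\eta_{i}$ is an isomorphism — one cannot simply read injectivity/surjectivity of a diagonal block off injectivity/surjectivity of a block-triangular operator without exploiting the ideal structure of $\langle\{\lambda(i)\}\rangle_{\mathbf{P}}$ and the invariance of the subspaces $\delta(I)$ under $\varphi$. Organizing this as an induction on $\len_{\mathbf{P}}$ over the restrictions $\varphi|_{\delta(\langle\{\lambda(i)\}\rangle_{\mathbf{P}})}$ is the cleanest route, and everything else is bookkeeping with the coordinate decomposition.
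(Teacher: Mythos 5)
Your implications $(1)\Rightarrow(2)$ and $(3)\Rightarrow(1)$ are essentially sound (the block-triangular argument for $(3)\Rightarrow(1)$ is vague but salvageable). However, there is a genuine gap in the step $(2)\Rightarrow(3)$: the inductive argument you sketch for showing that $\pi_{\lambda(i)}\circ\varphi\circ\eta_{i}$ is an isomorphism rests on the claim that $\delta(\langle\{\lambda(i)\}\rangle_{\mathbf{P}}-\{\lambda(i)\})$ is $\varphi$-invariant, and this is false when $\lambda\neq\id$. Concretely, if $k\in\langle\{\lambda(i)\}\rangle_{\mathbf{P}}-\{\lambda(i)\}$ (i.e. $k\prec_{\mathbf{P}}\lambda(i)$) and $0\neq a\in H_k$, then by (2) we only know $\supp(\varphi(\eta_k(a)))\subseteq\langle\{\lambda(k)\}\rangle_{\mathbf{P}}$, and there is no reason that $\lambda(k)\preccurlyeq_{\mathbf{P}}\lambda(i)$: that would require $k\preccurlyeq_{\mathbf{P}}i$, which is a different condition. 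For an explicit counterexample take $\Omega=\{1,2,3,4\}$ with the only covers $1\prec_{\mathbf{P}}3$ and $2\prec_{\mathbf{P}}4$, and $\lambda$ the order automorphism swapping $1\leftrightarrow2$ and $3\leftrightarrow4$; with $i=3$ one has $\langle\{\lambda(i)\}\rangle_{\mathbf{P}}-\{\lambda(i)\}=\{2\}$, yet $(2)$ forces $\supp(\varphi(\eta_2(a)))=\{1\}$ for $a\neq0$, so $\delta(\{2\})$ is thrown out of itself. The surjectivity of the diagonal blocks, which you only gesture at, inherits the same problem.

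The good news is that the induction is unnecessary, and a direct argument closes the gap. For injectivity: given $i\in K$ and $a\in H_i-\{0\}$, condition (2) says $\langle\supp(\varphi(\eta_i(a)))\rangle_{\mathbf{P}}=\langle\{\lambda(i)\}\rangle_{\mathbf{P}}$. Since the ideal $\langle\supp(\varphi(\eta_i(a)))\rangle_{\mathbf{P}}$ contains $\lambda(i)$, there must exist $j\in\supp(\varphi(\eta_i(a)))$ with $\lambda(i)\preccurlyeq_{\mathbf{P}}j$; but $j\in\langle\{\lambda(i)\}\rangle_{\mathbf{P}}$ also gives $j\preccurlyeq_{\mathbf{P}}\lambda(i)$, so $j=\lambda(i)$ and hence $(\pi_{\lambda(i)}\circ\varphi\circ\eta_i)(a)\neq0$. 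For surjectivity: given $0\neq b\in H_{\lambda(i)}$, take $\alpha$ with $\varphi(\alpha)=\eta_{\lambda(i)}(b)$. Any $\preccurlyeq_{\mathbf{P}}$-maximal $k$ in $\supp(\alpha)$ satisfies $\pi_{\lambda(k)}(\varphi(\alpha))=(\pi_{\lambda(k)}\circ\varphi\circ\eta_k)(\alpha_k)\neq0$ (the contributions from other $k'\in\supp(\alpha)$ vanish because $\lambda(k)\in\langle\{\lambda(k')\}\rangle_{\mathbf{P}}$ would force $k\preccurlyeq_{\mathbf{P}}k'$, contradicting maximality unless $k'=k$), so $\lambda(k)\in\supp(\varphi(\alpha))=\{\lambda(i)\}$, giving $k=i$; thus $\supp(\alpha)\subseteq\langle\{i\}\rangle_{\mathbf{P}}$, and then $b=\pi_{\lambda(i)}(\varphi(\alpha))=\sum_{k\in\supp(\alpha)}(\pi_{\lambda(i)}\circ\varphi\circ\eta_k)(\alpha_k)$ reduces to the single term $k=i$, because nonvanishing requires both $i\preccurlyeq_{\mathbf{P}}k$ and $k\preccurlyeq_{\mathbf{P}}i$. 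This gives $b=(\pi_{\lambda(i)}\circ\varphi\circ\eta_i)(\alpha_i)$ without any appeal to invariant subspaces.
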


\setlength{\parindent}{0em}
\begin{lemma}
{{\bf{(1)}}\,\,For $B\subseteq\Omega$ and $A\subseteq\langle B\rangle_{\mathbf{P}}$, it holds that $\varpi(A)\curlyeqprec\varpi(B)$.

{\bf{(2)}}\,\,For $B\subseteq\Omega$ and $v\in\langle B\rangle_{\mathbf{P}}$ such that $\varpi(B)\curlyeqprec\varpi(\{v\})$, it holds that $\langle B\rangle_{\mathbf{P}}=\langle \{v\}\rangle_{\mathbf{P}}$.

{\bf{(3)}}\,\,Let $\theta\in\mathbf{H}$ such that there exists $u\in\supp(\theta)$ with $\langle \supp(\theta)\rangle_{\mathbf{P}}=\langle \{u\}\rangle_{\mathbf{P}}$. Then, for $\gamma\in\mathbf{H}$, $\langle \supp(\gamma)\rangle_{\mathbf{P}}\subseteq\langle \supp(\theta)\rangle_{\mathbf{P}}$ if and only if both $\varpi(\supp(\gamma))\curlyeqprec\varpi(\supp(\theta))$ and $\varpi(\supp(\gamma+\theta))\curlyeqprec\varpi(\supp(\theta))$ hold true.

{\bf{(4)}}\,\,Let $\varphi\in\End_{S}(\mathbf{H})$ such that $\varpi(\supp(\varphi(\alpha)))=\varpi(\supp(\alpha))$ for all $\alpha\in\mathbf{H}$, and fix $\theta\in\mathbf{H}$ such that there exists $v\in\Omega$ with $\langle \supp(\varphi(\theta))\rangle_{\mathbf{P}}=\langle \{v\}\rangle_{\mathbf{P}}$. Then, there exists $u\in\supp(\theta)$ such that $\langle \supp(\theta)\rangle_{\mathbf{P}}=\langle \{u\}\rangle_{\mathbf{P}}$. Moreover, for any $\gamma\in\mathbf{H}$, we have $\langle \supp(\gamma)\rangle_{\mathbf{P}}\subseteq\langle \supp(\theta)\rangle_{\mathbf{P}}\Longleftrightarrow\langle \supp(\varphi(\gamma))\rangle_{\mathbf{P}}\subseteq\langle \supp(\varphi(\theta))\rangle_{\mathbf{P}}$.
}
\end{lemma}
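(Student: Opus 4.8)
The plan is to prove the four parts in order, with parts (1)–(3) being relatively direct consequences of the three defining conditions (3.1)–(3.3) on $\varpi$, and part (4) being the real content, obtained by combining (1)–(3) with a symmetry argument that swaps the roles of $\theta$ and $\varphi(\theta)$.

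For part (1): if $A\subseteq\langle B\rangle_{\mathbf{P}}$, then $\langle A\rangle_{\mathbf{P}}\subseteq\langle B\rangle_{\mathbf{P}}$ since $\langle\cdot\rangle_{\mathbf{P}}$ is monotone and idempotent on subsets. Both $\langle A\rangle_{\mathbf{P}}$ and $\langle B\rangle_{\mathbf{P}}$ are ideals, so (3.2) gives $\varpi(\langle A\rangle_{\mathbf{P}})\curlyeqprec\varpi(\langle B\rangle_{\mathbf{P}})$, and (3.1) rewrites this as $\varpi(A)\curlyeqprec\varpi(B)$. For part (2): from part (1) applied with $A=\{v\}$ and $v\in\langle B\rangle_{\mathbf{P}}$ we get $\varpi(\{v\})\curlyeqprec\varpi(B)$; combined with the hypothesis $\varpi(B)\curlyeqprec\varpi(\{v\})$ and anti-symmetry of $\curlyeqprec$, we obtain $\varpi(B)=\varpi(\{v\})$, i.e. $\varpi(\langle B\rangle_{\mathbf{P}})=\varpi(\{v\})$. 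Since $v\in\langle B\rangle_{\mathbf{P}}$ and $\langle B\rangle_{\mathbf{P}}$ is an ideal, condition (3.3) forces $\langle B\rangle_{\mathbf{P}}=\langle\{v\}\rangle_{\mathbf{P}}$.

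For part (3): the forward direction is immediate from part (1) (applied to $\supp(\gamma)$, and to $\supp(\gamma+\theta)\subseteq\supp(\gamma)\cup\supp(\theta)=\langle\supp(\theta)\rangle_{\mathbf{P}}$ after noting $\supp(\gamma)\subseteq\langle\supp(\theta)\rangle_{\mathbf{P}}$). For the converse, suppose $\langle\supp(\gamma)\rangle_{\mathbf{P}}\not\subseteq\langle\supp(\theta)\rangle_{\mathbf{P}}=\langle\{u\}\rangle_{\mathbf{P}}$. Then there is a coordinate $j\in\supp(\gamma)$ with $j\notin\langle\{u\}\rangle_{\mathbf{P}}$, i.e. $j\not\preccurlyeq_{\mathbf{P}}u$. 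If $j\notin\supp(\theta)$ then $j\in\supp(\gamma)\setminus\langle\supp(\theta)\rangle_{\mathbf{P}}$, so $\langle\supp(\gamma)\rangle_{\mathbf{P}}\supsetneq\langle\supp(\theta)\rangle_{\mathbf{P}}$ is impossible while still $\langle\supp(\gamma)\rangle_{\mathbf{P}}\supseteq\langle\{u\}\rangle_{\mathbf{P}}$ fails too — more precisely, $\langle\supp(\gamma)\rangle_{\mathbf{P}}$ contains $j$, hence is not $\curlyeqprec\varpi(\supp(\theta))$: applying part (2) with $B=\supp(\gamma)$, $v=u$ would force $j\preccurlyeq_{\mathbf{P}}u$, a contradiction. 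If instead $j\in\supp(\theta)$, then $j\in\supp(\gamma+\theta)$ is not guaranteed, so we switch to $\gamma+\theta$: since $\supp(\theta)\subseteq\langle\{u\}\rangle_{\mathbf{P}}$ we have $j\in\supp(\theta)\subseteq\langle\{u\}\rangle_{\mathbf{P}}$, contradicting $j\not\preccurlyeq_{\mathbf{P}}u$; so this case does not occur. Thus every $j\in\supp(\gamma)$ satisfies $j\preccurlyeq_{\mathbf{P}}u$ or lies in $\supp(\theta)\subseteq\langle\{u\}\rangle_{\mathbf{P}}$; either way $\supp(\gamma)\subseteq\langle\{u\}\rangle_{\mathbf{P}}$, so $\langle\supp(\gamma)\rangle_{\mathbf{P}}\subseteq\langle\supp(\theta)\rangle_{\mathbf{P}}$, and the case analysis must be organized so that violation of the inclusion produces a violation of one of the two $\curlyeqprec$ inequalities via part (2). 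I expect the bookkeeping in this converse — making sure the coordinate witnessing non-inclusion survives into $\supp(\gamma)$ or $\supp(\gamma+\theta)$ — to be the main technical obstacle of the lemma; the subtlety is that cancellation can kill a coordinate of $\gamma$ in $\gamma+\theta$, but only if that coordinate also lies in $\supp(\theta)\subseteq\langle\{u\}\rangle_{\mathbf{P}}$, so it cannot be the offending one, and this is exactly why both inequalities are needed.

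For part (4): apply part (2) with $B=\supp(\theta)$ is not directly available since we only know something about $\varphi(\theta)$; instead, first produce $u$. Since $\langle\supp(\varphi(\theta))\rangle_{\mathbf{P}}=\langle\{v\}\rangle_{\mathbf{P}}$ and $\varpi(\supp(\theta))=\varpi(\supp(\varphi(\theta)))=\varpi(\{v\})$, pick any $u\in\supp(\theta)$; then $u\in\langle\supp(\theta)\rangle_{\mathbf{P}}$ and $\varpi(\{u\})\curlyeqprec\varpi(\supp(\theta))=\varpi(\{v\})$. We need the reverse inequality to invoke (3.3); to get it, observe $\langle\{u\}\rangle_{\mathbf{P}}\subseteq\langle\supp(\theta)\rangle_{\mathbf{P}}$ must actually be an equality for a suitably chosen $u$ — take $u$ to be a maximal element of $\supp(\theta)$ with respect to $\preccurlyeq_{\mathbf{P}}$; by maximality and the fact that $\langle\supp(\theta)\rangle_{\mathbf{P}}=\bigcup_{w\in\supp(\theta)}\langle\{w\}\rangle_{\mathbf{P}}$, this still may not single out one element, so instead argue: since $\varpi(\supp(\theta))=\varpi(\{v\})$ and $\langle\{v\}\rangle_{\mathbf{P}}$ is "irreducible" by (3.3), and $\varpi(\{w\})\curlyeqprec\varpi(\supp(\theta))$ for each $w\in\supp(\theta)$, the decomposition forces $\langle\supp(\theta)\rangle_{\mathbf{P}}$ to equal $\langle\{w_0\}\rangle_{\mathbf{P}}$ for some $w_0$ (otherwise $\varpi(\supp(\theta))$ would strictly dominate every $\varpi(\{w\})$, contradicting that it equals $\varpi(\{v\})$ with $v\in\langle\supp(\theta)\rangle_{\mathbf{P}}$ via part (2)); set $u=w_0$. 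Now for the "moreover" part, fix $\gamma\in\mathbf{H}$. By part (3) applied to $\theta$ (with its $u$), $\langle\supp(\gamma)\rangle_{\mathbf{P}}\subseteq\langle\supp(\theta)\rangle_{\mathbf{P}}$ iff $\varpi(\supp(\gamma))\curlyeqprec\varpi(\supp(\theta))$ and $\varpi(\supp(\gamma+\theta))\curlyeqprec\varpi(\supp(\theta))$. Since $\varphi$ is additive and $\varpi$-support-preserving, $\varpi(\supp(\varphi(\gamma)))=\varpi(\supp(\gamma))$, $\varpi(\supp(\varphi(\theta)))=\varpi(\supp(\theta))$, and $\varpi(\supp(\varphi(\gamma)+\varphi(\theta)))=\varpi(\supp(\varphi(\gamma+\theta)))=\varpi(\supp(\gamma+\theta))$; so the two $\curlyeqprec$ conditions are literally unchanged if we replace $\gamma,\theta$ by $\varphi(\gamma),\varphi(\theta)$. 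Applying part (3) in the other direction — to $\varphi(\theta)$, whose $\mathbf{P}$-support is $\langle\{v\}\rangle_{\mathbf{P}}$ so the hypothesis of part (3) holds with $v\in\supp(\varphi(\theta))$ (which follows since $v\in\langle\supp(\varphi(\theta))\rangle_{\mathbf{P}}$ and one checks $v\in\supp(\varphi(\theta))$ by the irreducibility argument again, or simply re-derives a suitable generator) — gives that these same two conditions are equivalent to $\langle\supp(\varphi(\gamma))\rangle_{\mathbf{P}}\subseteq\langle\supp(\varphi(\theta))\rangle_{\mathbf{P}}$. Chaining the two equivalences yields the claim.
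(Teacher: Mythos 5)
Your parts (1) and (2) are correct and match the paper's argument verbatim.

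Your part (3) has a genuine gap in the ``if'' direction. You try to apply part (2) with $B=\supp(\gamma)$ and $v=u$, but part (2) requires the hypothesis $v\in\langle B\rangle_{\mathbf{P}}$, i.e.\ $u\in\langle\supp(\gamma)\rangle_{\mathbf{P}}$, which you never establish; $u$ lies in $\supp(\theta)$, not automatically in $\langle\supp(\gamma)\rangle_{\mathbf{P}}$. The paper's fix is a different case split than yours: instead of splitting on whether the offending coordinate $j$ lies in $\supp(\theta)$, split on whether $u\in\supp(\gamma)$. If $u\in\supp(\gamma)$, then part (2) applies directly to $B=\supp(\gamma)$, giving $\langle\supp(\gamma)\rangle_{\mathbf{P}}=\langle\{u\}\rangle_{\mathbf{P}}$. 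If $u\notin\supp(\gamma)$, then $u\in\supp(\gamma+\theta)$ (since $u\in\supp(\theta)$ and there is no cancellation at $u$), so part (2) applies to $B=\supp(\gamma+\theta)$, and then $\supp(\gamma)\subseteq\supp(\theta)\cup\supp(\gamma+\theta)\subseteq\langle\supp(\theta)\rangle_{\mathbf{P}}$. The moral is exactly the one you glimpsed at the end of your discussion, but your case split does not produce the membership needed to invoke (2).

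Your part (4) has the more serious gap. The first claim --- that some $u\in\supp(\theta)$ satisfies $\langle\supp(\theta)\rangle_{\mathbf{P}}=\langle\{u\}\rangle_{\mathbf{P}}$ --- is not established by your ``irreducibility'' heuristic. Your argument asserts $v\in\langle\supp(\theta)\rangle_{\mathbf{P}}$ to invoke part (2), but that is a claim about $\theta$ and you only know $v\in\supp(\varphi(\theta))$, a claim about $\varphi(\theta)$; these cannot be conflated. Moreover the ``otherwise $\varpi(\supp(\theta))$ would strictly dominate every $\varpi(\{w\})$'' step presumes a totality or transitivity of $\curlyeqprec$ that the hypotheses do not grant --- $\curlyeqprec$ is only assumed anti-symmetric. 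The missing idea, which is the real content of this part, is the coordinate decomposition $\varphi(\theta)=\sum_{i\in\supp(\theta)}\varphi(\eta_{i}(\theta_{i}))$: from $v\in\supp(\varphi(\theta))$ one picks $u\in\supp(\theta)$ with $v\in\supp(\varphi(\eta_{u}(\theta_{u})))$, whence $\varpi(\{v\})\curlyeqprec\varpi(\supp(\varphi(\eta_{u}(\theta_{u}))))=\varpi(\{u\})$ by part (1) and $\varpi$-invariance; combined with $\varpi(\{v\})=\varpi(\supp(\theta))$ and the fact that $u\in\supp(\theta)\subseteq\langle\supp(\theta)\rangle_{\mathbf{P}}$, part (2) now applies to yield $\langle\supp(\theta)\rangle_{\mathbf{P}}=\langle\{u\}\rangle_{\mathbf{P}}$. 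Your ``moreover'' paragraph is essentially correct once this is in place (and the needed $v\in\supp(\varphi(\theta))$ does follow cleanly from $\langle\supp(\varphi(\theta))\rangle_{\mathbf{P}}=\langle\{v\}\rangle_{\mathbf{P}}$, since $v$ must equal the element of $\supp(\varphi(\theta))$ that dominates it).
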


\begin{proof}
{\bf{(1)}}\,\,By $\langle A\rangle_{\mathbf{P}}\subseteq\langle B\rangle_{\mathbf{P}}$ and (3.2), we have $\varpi(\langle A\rangle_{\mathbf{P}})\curlyeqprec\varpi(\langle B\rangle_{\mathbf{P}})$, which, along with (3.1), immediately implies that $\varpi(A)\curlyeqprec\varpi(B)$, as desired.

{\bf{(2)}}\,\,By (1), we have $\varpi(\{v\})\curlyeqprec\varpi(B)$. From $(Y,\curlyeqprec)$ is anti-symmetric and $\varpi(B)\curlyeqprec\varpi(\{v\})$, we infer that $\varpi(\{v\})=\varpi(B)$. By (3.1), we have $\varpi(\{v\})=\varpi(\langle B\rangle_{\mathbf{P}})$, which, along with (3.3), implies that $\langle B\rangle_{\mathbf{P}}=\langle \{v\}\rangle_{\mathbf{P}}$, as desired.

{\bf{(3)}}\,\,By (3.1), we have $\varpi(\supp(\theta))=\varpi(\{u\})$. Consider $\gamma\in\mathbf{H}$. The ``only if'' part can be readily derived by (1) and the fact that $\supp(\gamma+\theta)\subseteq\supp(\gamma)\cup\supp(\theta)$, and so we only prove the ``if'' part. If $u\in\supp(\gamma)$, then by $\varpi(\supp(\gamma))\curlyeqprec\varpi(\{u\})$ and (2), we have $\langle \supp(\gamma)\rangle_{\mathbf{P}}=\langle \{u\}\rangle_{\mathbf{P}}=\langle \supp(\theta)\rangle_{\mathbf{P}}$, as desired. Hence in the following, we assume that $u\not\in\supp(\gamma)$. Then, we have $u\in\supp(\gamma+\theta)$, which, along with $\varpi(\supp(\gamma+\theta))\curlyeqprec\varpi(\{u\})$ and (2), implies that $\langle \supp(\gamma+\theta)\rangle_{\mathbf{P}}=\langle \{u\}\rangle_{\mathbf{P}}=\langle \supp(\theta)\rangle_{\mathbf{P}}$. It follows that $\supp(\gamma)\subseteq\supp(\theta)\cup\supp(\gamma+\theta)\subseteq\langle\supp(\theta)\rangle_{\mathbf{P}}$, as desired.

{\bf{(4)}}\,\,We note that $\varphi(\theta)=\sum_{i\in\supp(\theta)}\varphi(\eta_{i}(\theta_{i}))$. Since $v\in\supp(\varphi(\theta))$, we can choose $u\in\supp(\theta)$ such that $v\in\supp(\varphi(\eta_{u}(\theta_{u})))$. It follows that $\varpi(\{v\})\curlyeqprec\varpi(\supp(\varphi(\eta_{u}(\theta_{u}))))=\varpi(\supp(\eta_{u}(\theta_{u})))=\varpi(\{u\})$. By (3.1), we have $\varpi(\{v\})=\varpi(\supp(\varphi(\theta)))=\varpi(\supp(\theta))$, which implies that $\varpi(\supp(\theta))\curlyeqprec\varpi(\{u\})$. Hence from (2), we have $\langle \supp(\theta)\rangle_{\mathbf{P}}=\langle \{u\}\rangle_{\mathbf{P}}$, as desired. Now for $\gamma\in\mathbf{H}$, since $\varpi(\supp(\gamma))=\varpi(\supp(\varphi(\gamma)))$, $\varpi(\supp(\theta))=\varpi(\supp(\varphi(\theta)))$, $\varpi(\supp(\gamma+\theta))=\varpi(\supp(\varphi(\gamma+\theta)))=\varpi(\supp(\varphi(\gamma)+\varphi(\theta)))$, the desired result follows from applying (3) to $(\gamma,\theta)$ and $(\varphi(\gamma),\varphi(\theta))$, respectively.
\end{proof}

\setlength{\parindent}{2em}
The following is the main result of this section.

\setlength{\parindent}{0em}
\begin{theorem}
{Assume that either $H_{i}\neq\{0\}$ for all $i\in\Omega$ or $\mathbf{P}$ is hierarchical. Then, we have:

{\bf{(1)}}\,\,For $\varphi\in G$, there uniquely exists $\lambda\in \Aut(\mathbf{P})$ such that $\lambda\mid_{\Omega-K}=\id_{\Omega-K}$ and $\langle\supp(\varphi(\alpha))\rangle_{\mathbf{P}}=\lambda[\langle\supp(\alpha)\rangle_{\mathbf{P}}]$ for all $\alpha\in\mathbf{H}$, and such a $\lambda$ is necessarily in $T$;

{\bf{(2)}}\,\,For $\psi\in\Aut_{S}(\mathbf{H})$ such that there exists $\mu\in T$ with $\langle\supp(\psi(\alpha))\rangle_{\mathbf{P}}=\mu[\langle\supp(\alpha)\rangle_{\mathbf{P}}]$ for all $\alpha\in\mathbf{H}$, we have $\psi\in G$;

{\bf{(3)}}\,\,There uniquely exists $\zeta:G\longrightarrow T$ such that for any $\varphi\in G$, it holds that $\langle\supp(\varphi(\alpha))\rangle_{\mathbf{P}}=\zeta_{(\varphi)}[\langle\supp(\alpha)\rangle_{\mathbf{P}}]$ for all $\alpha\in\mathbf{H}$. Moreover, we have $\zeta$ is a group homomorphism, $\ran(\zeta)=T$ and $\ker(\zeta)=\GL_{\mathbf{P}}(\mathbf{H})$.
}
\end{theorem}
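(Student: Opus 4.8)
The plan is to prove the three parts in order, since each builds on the previous. For part (1), fix $\varphi\in G$. The strategy is to use Lemma 3.2(4): for each $i\in K$, pick $a\in H_i-\{0\}$ and apply the lemma with $\theta=\eta_i(a)$; since $\langle\supp(\eta_i(a))\rangle_{\mathbf{P}}=\langle\{i\}\rangle_{\mathbf{P}}$ is generated by a single element, the lemma forces $\langle\supp(\varphi(\eta_i(a)))\rangle_{\mathbf{P}}=\langle\{v\}\rangle_{\mathbf{P}}$ for some $v\in\Omega$ — more precisely I first need to check that $\supp(\varphi(\eta_i(a)))$ really is of that form, which follows by running Lemma 3.2(4) in the reverse direction (with the roles of $\theta$ and $\varphi(\theta)$ swapped, legitimate since $\varphi^{-1}\in G$ too as $\varpi$-preservation is obviously closed under inverses). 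Define $\lambda(i)\triangleq v$. I must show $\lambda(i)$ does not depend on the choice of $a$: if $a,a'\in H_i-\{0\}$ gave different $v,v'$, then considering $\eta_i(a)$ and $\eta_i(a'-a)$ or using Lemma 3.2(4)'s containment characterization applied to $\gamma=\eta_i(a')$, $\theta=\eta_i(a)$ yields $\langle\{v'\}\rangle_{\mathbf{P}}\subseteq\langle\{v\}\rangle_{\mathbf{P}}$ and symmetrically the reverse, so $\langle\{v\}\rangle_{\mathbf{P}}=\langle\{v'\}\rangle_{\mathbf{P}}$; combined with $v,v'$ being the greatest elements of their respective ideals (which is what "$=\langle\{u\}\rangle$ with $u\in\supp$" plus condition (3.3) gives) this forces $v=v'$. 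For $i\in\Omega-K$ set $\lambda(i)=i$. Next I show $\lambda\in\Aut(\mathbf{P})$: injectivity and order-preservation follow from the containment equivalence in Lemma 3.2(4) applied to $\eta_i(a)$ and $\eta_j(b)$ for $i,j\in K$ (and the $\Omega-K$ coordinates are fixed pointwise, so they cause no trouble — here is where the hypothesis "$H_i\neq\{0\}$ for all $i$ or $\mathbf{P}$ hierarchical" enters: when $\mathbf{P}$ is hierarchical the coordinates in $\Omega-K$ sit at the bottom levels and order-automorphy is still forced); surjectivity follows by applying the same construction to $\varphi^{-1}$ and checking the two resulting maps are mutually inverse. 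Then $\langle\supp(\varphi(\alpha))\rangle_{\mathbf{P}}=\lambda[\langle\supp(\alpha)\rangle_{\mathbf{P}}]$ for general $\alpha$ follows from $\varphi(\alpha)=\sum_{i\in\supp(\alpha)}\varphi(\eta_i(\alpha_i))$, the identity $\langle\bigcup_i B_i\rangle_{\mathbf{P}}=\bigcup_i\langle B_i\rangle_{\mathbf{P}}$, and the fact that $\lambda$ commutes with $\langle\cdot\rangle_{\mathbf{P}}$. Uniqueness of $\lambda$ is immediate since $\lambda(i)$ is determined on $K$ by the displayed equation applied to $\eta_i(a)$ and on $\Omega-K$ by fiat. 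Finally $\lambda\in T$: the condition $\mu\mid_{\Omega-K}=\id$ holds by construction, $H_i\cong H_{\lambda(i)}$ holds because $\pi_{\lambda(i)}\circ\varphi\circ\eta_i$ is an $S$-isomorphism (via Lemma 3.1, (1)$\Rightarrow$(3)), and $\varpi(\lambda[I])=\varpi(I)$ for $I\subseteq K$ follows from $\varpi(\supp(\varphi(\alpha)))=\varpi(\supp(\alpha))$ together with $\lambda$ commuting with $\langle\cdot\rangle_{\mathbf{P}}$ and (3.1) — choose $\alpha$ with $\supp(\alpha)=I$.

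For part (2), suppose $\psi\in\Aut_S(\mathbf{H})$ with $\langle\supp(\psi(\alpha))\rangle_{\mathbf{P}}=\mu[\langle\supp(\alpha)\rangle_{\mathbf{P}}]$ for some $\mu\in T$. Then for every $\alpha$, using (3.1), $\varpi(\supp(\psi(\alpha)))=\varpi(\langle\supp(\psi(\alpha))\rangle_{\mathbf{P}})=\varpi(\mu[\langle\supp(\alpha)\rangle_{\mathbf{P}}])=\varpi(\langle\supp(\alpha)\rangle_{\mathbf{P}})=\varpi(\supp(\alpha))$, where the key third equality is exactly the defining property $\varpi(\mu[J])=\varpi(J)$ of $\mu\in T$ (applicable to $J=\langle\supp(\alpha)\rangle_{\mathbf{P}}$ since $\mu$ fixes $\Omega-K$ pointwise, so we may replace $J$ by $J\cap K$ without changing $\langle\cdot\rangle$ or $\varpi$, modulo a small bookkeeping step). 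Hence $\psi\in G$.

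For part (3), define $\zeta_{(\varphi)}$ to be the unique $\lambda$ produced by part (1); that it lands in $T$ and satisfies the displayed equation is exactly part (1), and uniqueness of the function $\zeta$ as a whole follows from the uniqueness-of-$\lambda$ clause. That $\zeta$ is a homomorphism: for $\varphi_1,\varphi_2\in G$, applying the defining equation twice gives $\langle\supp(\varphi_1\varphi_2(\alpha))\rangle_{\mathbf{P}}=\zeta_{(\varphi_1)}[\zeta_{(\varphi_2)}[\langle\supp(\alpha)\rangle_{\mathbf{P}}]]=(\zeta_{(\varphi_1)}\circ\zeta_{(\varphi_2)})[\langle\supp(\alpha)\rangle_{\mathbf{P}}]$, and by the uniqueness clause applied to $\varphi_1\varphi_2$ this forces $\zeta_{(\varphi_1\varphi_2)}=\zeta_{(\varphi_1)}\circ\zeta_{(\varphi_2)}$. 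Surjectivity $\ran(\zeta)=T$: given $\mu\in T$, I must build $\varphi\in G$ with $\zeta_{(\varphi)}=\mu$; take $\varphi$ acting as $\eta_i(a)\mapsto\eta_{\mu(i)}(\phi_i(a))$ where $\phi_i:H_i\to H_{\mu(i)}$ is any chosen $S$-isomorphism (exists since $H_i\cong H_{\mu(i)}$ by definition of $T$), extended $S$-linearly; then $\varphi\in\Aut_S(\mathbf{H})$ with inverse built from $\mu^{-1}$ and $\phi_i^{-1}$, part (2) gives $\varphi\in G$, and part (1)'s construction recovers $\mu$. Finally $\ker(\zeta)=\GL_{\mathbf{P}}(\mathbf{H})$: $\zeta_{(\varphi)}=\id$ iff $\langle\supp(\varphi(\alpha))\rangle_{\mathbf{P}}=\langle\supp(\alpha)\rangle_{\mathbf{P}}$ for all $\alpha$, which — noting that $\GL_{\mathbf{P}}(\mathbf{H})\subseteq G$ since $\varpi$ factors through $\langle\cdot\rangle_{\mathbf{P}}$ by (3.1) — is precisely the definition of $\varphi\in\GL_{\mathbf{P}}(\mathbf{H})$.

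The main obstacle I anticipate is part (1): getting the well-definedness and bijectivity of $\lambda$ cleanly, and in particular handling the two alternative hypotheses ($H_i\neq\{0\}$ everywhere, versus $\mathbf{P}$ hierarchical) uniformly. The hierarchical case is delicate because the zero-module coordinates $\Omega-K$ must be shown to be "order-theoretically invisible" — one needs that $\lambda$ restricted to $K$ extends by the identity on $\Omega-K$ to a genuine poset automorphism, which uses that in a hierarchical poset any subset of a given level can be permuted freely while the levels below and above are forced; formalizing this cleanly (probably via Lemma 3.1 and the structure of $\langle\cdot\rangle_{\mathbf{P}}$ on hierarchical $\mathbf{P}$) is where the real care is needed. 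Everything else is routine manipulation of supports, ideals, and the axioms (3.1)–(3.3).
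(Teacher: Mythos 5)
Your proposal follows the paper's argument step for step: apply Lemma 3.2(4) (with $\varphi^{-1}\in G$, and again for well-definedness and order-preservation) to construct $\sigma$ on $K$, extend by the identity on $\Omega-K$ to $\lambda$, pass from the rank-one case to general $\alpha$, verify $\lambda\in T$, and settle (2) and (3) by routine manipulation and by building $\psi$ from a choice of coordinate isomorphisms $\rho_i\in\Aut_{S}(H_i,H_{\mu(i)})$.

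One step in your sketch does not work as stated: you claim $\langle\supp(\varphi(\alpha))\rangle_{\mathbf{P}}=\lambda[\langle\supp(\alpha)\rangle_{\mathbf{P}}]$ follows from the decomposition $\varphi(\alpha)=\sum_{i\in\supp(\alpha)}\varphi(\eta_i(\alpha_i))$ and the union identity for $\langle\cdot\rangle_{\mathbf{P}}$. That only yields the inclusion $\langle\supp(\varphi(\alpha))\rangle_{\mathbf{P}}\subseteq\lambda[\langle\supp(\alpha)\rangle_{\mathbf{P}}]$, because $\supp$ of a sum can be strictly smaller than the union of supports due to cancellation. The paper closes this by invoking Lemma 3.1 (the implication $(2)\Rightarrow(1)$), whose proof handles maximal elements of $\supp(\alpha)$ carefully; alternatively you can recover the missing inclusion by running your inclusion argument on $\varphi^{-1}$ and checking the induced $\lambda$'s are mutually inverse. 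Since you already cite Lemma 3.1 a sentence later (for $H_i\cong H_{\lambda(i)}$), the cleanest fix is simply to invoke Lemma 3.1$(2)\Rightarrow(1)$ at this point, as the paper does. The other concern you flagged — that $\sigma$ extends to a poset automorphism $\lambda$ in the hierarchical case — is real but is also asserted without detail in the paper; it reduces to the observation that $\sigma$ preserves $\len_{\mathbf{P}}$ on $K$, since $\len_{\mathbf{P}}$ on $K$ is determined by $\len$ within the restricted poset $(K,\preccurlyeq_{\mathbf{P}}|_K)$ via the order type of the set of occupied levels.
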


\begin{proof}
{\bf{(1)}}\,\,First, consider an arbitrary $i\in K$. For any $a\in H_{i}-\{0\}$, applying (4) of Lemma 3.2 to $\varphi^{-1}\in G$ and $\varphi(\eta_{i}(a))\in\mathbf{H}$, we infer that there exists $u\in \supp(\varphi(\eta_{i}(a)))\subseteq K$ such that $\langle\supp(\varphi(\eta_{i}(a)))\rangle_{\mathbf{P}}=\langle\{u\}\rangle_{\mathbf{P}}$. Consider $b,c\in H_{i}-\{0\}$, and let $j,k\in K$ such that $\langle\supp(\varphi(\eta_{i}(b)))\rangle_{\mathbf{P}}=\langle\{j\}\rangle_{\mathbf{P}}$, $\langle\supp(\varphi(\eta_{i}(c)))\rangle_{\mathbf{P}}=\langle\{k\}\rangle_{\mathbf{P}}$. By (4) of Lemma 3.2, we have $k=j$. Hence we can fix $\sigma:K\longrightarrow K$ such that for any $i\in K$ and $a\in H_{i}-\{0\}$, it holds that
$\langle\supp(\varphi(\eta_{i}(a)))\rangle_{\mathbf{P}}=\langle\{\sigma(i)\}\rangle_{\mathbf{P}}$. We claim that $\sigma\in\Aut(K,\preccurlyeq_{\mathbf{P}})$. Indeed, let $i,t\in K$. Since $H_{i}\neq\{0\}$, $H_{t}\neq\{0\}$, we can choose $c\in H_{i}-\{0\}$, $d\in H_{t}-\{0\}$. Moreover, we have $\langle\supp(\varphi(\eta_{i}(c)))\rangle_{\mathbf{P}}=\langle\{\sigma(i)\}\rangle_{\mathbf{P}}$, $\langle\supp(\varphi(\eta_{t}(d)))\rangle_{\mathbf{P}}=\langle\{\sigma(t)\}\rangle_{\mathbf{P}}$. By (4) of Lemma 3.2, we have $\langle \{i\}\rangle_{\mathbf{P}}\subseteq\langle\{t\}\rangle_{\mathbf{P}}\Longleftrightarrow\langle \{\sigma(i)\}\rangle_{\mathbf{P}}\subseteq\langle\{\sigma(t)\}\rangle_{\mathbf{P}}$, which further implies that $i\preccurlyeq_{\mathbf{P}}t\Longleftrightarrow\sigma(i)\preccurlyeq_{\mathbf{P}}\sigma(t)$, as desired. Define $\lambda:\Omega\longrightarrow\Omega$ such that $\lambda\mid_{K}=\sigma$, $\lambda\mid_{\Omega-K}=\id\mid_{\Omega-K}$. Since either $K=\Omega$ or $\mathbf{P}$ is hierarchical holds true, we have $\lambda\in\Aut(\mathbf{P})$. Moreover, for any $i\in K$ and $a\in H_{i}-\{0\}$, it holds that
$\langle\supp(\varphi(\eta_{i}(a)))\rangle_{\mathbf{P}}=\langle\{\lambda(i)\}\rangle_{\mathbf{P}}$. By Lemma 3.1, we have $\langle\supp(\varphi(\alpha))\rangle_{\mathbf{P}}=\lambda[\langle\supp(\alpha)\rangle_{\mathbf{P}}]$ for all $\alpha\in\mathbf{H}$, and $H_{i}\cong H_{\lambda(i)}$ for all $i\in\Omega$. Now consider $I\subseteq K$. Then, we can choose $\alpha\in\mathbf{H}$ such that $\supp(\alpha)=I$. It follows that $\langle\supp(\varphi(\alpha))\rangle_{\mathbf{P}}=\langle\lambda[I]\rangle_{\mathbf{P}}$. By $\varphi\in G$ and (3.1), we have $\varpi(I)=\varpi(\supp(\alpha))=\varpi(\supp(\varphi(\alpha)))=\varpi(\lambda[I])$. The above discussion implies that $\lambda\in T$. Finally, we show the uniqueness of $\lambda$. Let $\mu\in\Aut(\mathbf{P})$ such that $\mu\mid_{\Omega-K}=\id_{\Omega-K}$ and $\langle\supp(\varphi(\alpha))\rangle_{\mathbf{P}}=\mu[\langle\supp(\alpha)\rangle_{\mathbf{P}}]$ for all $\alpha\in\mathbf{H}$. Let $t\in K$. Since $H_{t}\neq\{0\}$, we can choose $d\in H_{t}-\{0\}$. Note that $\langle\supp(\varphi(\eta_{t}(d)))\rangle_{\mathbf{P}}=\langle\{\lambda(t)\}\rangle_{\mathbf{P}}=\langle\{\mu(t)\}\rangle_{\mathbf{P}}$, we have $\lambda(t)=\mu(t)$. It immediately follows that $\lambda=\mu$, as desired.

{\bf{(2)}}\,\,This can be readily verified and hence we omit the details.

{\bf{(3)}}\,\,By (1), $\zeta$ is well defined and unique. A routine verification yields that $\zeta$ is a group homomorphism with $\ker(\zeta)=\GL_{\mathbf{P}}(\mathbf{H})$. Now we show that $\ran(\zeta)=T$. Consider $\mu\in T$. Then, we can choose $(\rho_{i}\mid i\in\Omega)\in\prod_{i\in\Omega}\Aut_{S}(H_{i},H_{\mu(i)})$. Define $\psi:\mathbf{H}\longrightarrow\mathbf{H}$ such that for any $\alpha\in \mathbf{H}$, $\psi(\alpha)_{\mu(i)}=\rho_{i}(\alpha_{i})$ for all $i\in\Omega$. It is straightforward to verify that $\psi\in\Aut_{S}(\mathbf{H})$ and $\langle\supp(\psi(\alpha))\rangle_{\mathbf{P}}=\mu[\langle\supp(\alpha)\rangle_{\mathbf{P}}]$ for all $\alpha\in\mathbf{H}$. From (2), we infer that $\psi\in G$, which further implies that $\mu=\zeta_{(\psi)}$, as desired.
\end{proof}

\setlength{\parindent}{2em}
Theorem 3.1 can be readily applied to weighted poset metric.

\setlength{\parindent}{0em}
\begin{corollary}
{Assume that either $H_{i}\neq\{0\}$ for all $i\in\Omega$ or $\mathbf{P}$ is hierarchical. Fix $\omega:\Omega\longrightarrow\mathbb{R}^{+}$, and define $Q\leqslant\Aut(\mathbf{P})$ as
$$\mbox{$Q=\{\mu\in\Aut(\mathbf{P})\mid\mu\mid_{\Omega-K}=\id_{\Omega-K},\text{$\omega(i)=\omega(\mu(i))$, $H_{i}\cong H_{\mu(i)}$ for all $i\in\Omega$}\}$}.$$
Then, for any $\varphi\in\GL_{(\mathbf{P},\omega)}(\mathbf{H})$, there uniquely exists $\lambda\in \Aut(\mathbf{P})$ such that $\lambda\mid_{\Omega-K}=\id_{\Omega-K}$ and $\langle\supp(\varphi(\alpha))\rangle_{\mathbf{P}}=\lambda[\langle\supp(\alpha)\rangle_{\mathbf{P}}]$ for all $\alpha\in\mathbf{H}$, and such a $\lambda$ is necessarily in $Q$. Conversely, for any $\psi\in\Aut_{S}(\mathbf{H})$ such that there exists $\mu\in Q$ with $\langle\supp(\psi(\alpha))\rangle_{\mathbf{P}}=\mu[\langle\supp(\alpha)\rangle_{\mathbf{P}}]$ for all $\alpha\in\mathbf{H}$, we have $\psi\in\GL_{(\mathbf{P},\omega)}(\mathbf{H})$. Moreover, there uniquely exists $\zeta:\GL_{(\mathbf{P},\omega)}(\mathbf{H})\longrightarrow Q$ such that for any $\varphi\in\GL_{(\mathbf{P},\omega)}(\mathbf{H})$, it holds that $\langle\supp(\varphi(\alpha))\rangle_{\mathbf{P}}=\zeta_{(\varphi)}[\langle\supp(\alpha)\rangle_{\mathbf{P}}]$ for all $\alpha\in\mathbf{H}$. In addition, we have $\zeta$ is a group homomorphism, $\ran(\zeta)=Q$ and $\ker(\zeta)=\GL_{\mathbf{P}}(\mathbf{H})$.
}
\end{corollary}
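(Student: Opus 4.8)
The plan is to derive Corollary 3.1 directly from Theorem 3.1 by exhibiting a suitable triple $(Y,\curlyeqprec,\varpi)$ that encodes $(\mathbf{P},\omega)$-weight, so that the sets $G$ and $T$ of the theorem specialize to $\GL_{(\mathbf{P},\omega)}(\mathbf{H})$ and $Q$. First I would follow the device already used in Remark 2.1: fix a bijection $\sigma:\Omega\longrightarrow[0,|\Omega|-1]$ and set $\varpi:2^{\Omega}\longrightarrow\mathbb{R}$ by $\varpi(B)=\sum_{i\in\langle B\rangle_{\mathbf{P}}}2^{\sigma(i)}$, with $\curlyeqprec$ being the usual order $\leqslant$ on $\mathbb{R}$. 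One checks the three hypotheses (3.1)--(3.3) of Theorem 3.1: (3.1) is immediate from $\langle\langle B\rangle_{\mathbf{P}}\rangle_{\mathbf{P}}=\langle B\rangle_{\mathbf{P}}$; (3.2) holds because $I\subseteq J$ for ideals forces $\varpi(I)\leqslant\varpi(J)$ termwise; and (3.3) holds because the binary-expansion trick makes $\varpi$ injective on $\mathcal{I}(\mathbf{P})$, so $\varpi(I)=\varpi(\{u\})=\varpi(\langle\{u\}\rangle_{\mathbf{P}})$ forces $I=\langle\{u\}\rangle_{\mathbf{P}}$. Crucially, with this $\varpi$ one has for all $B\subseteq\Omega$ the equivalence $\varpi(B)=\varpi(B')\Longleftrightarrow\langle B\rangle_{\mathbf{P}}=\langle B'\rangle_{\mathbf{P}}$, and hence $\varpi(\supp(\varphi(\alpha)))=\varpi(\supp(\alpha))$ for all $\alpha$ exactly when $\varphi$ preserves $\mathbf{P}$-support.

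Next I would reconcile this with the weight function $\omega$ actually appearing in the corollary. The point is that $\GL_{(\mathbf{P},\omega)}(\mathbf{H})=\GL_{\mathbf{P}}(\mathbf{H})$ for \emph{every} $\omega$: a $(\mathbf{P},\omega)$-weight isometry preserves $\mathbf{P}$-support because of the following standard argument --- for $\varphi\in\Aut_S(\mathbf{H})$ preserving $(\mathbf{P},\omega)$-weight and any $i\in K$, $a\in H_i-\{0\}$, one shows $\langle\supp(\varphi(\eta_i(a)))\rangle_{\mathbf{P}}$ is a principal ideal $\langle\{j\}\rangle_{\mathbf{P}}$ with $\sum_{k\preccurlyeq_{\mathbf P} j}\omega(k)=\sum_{k\preccurlyeq_{\mathbf P} i}\omega(k)$, and the usual order-preservation argument (exactly as in the proof of Theorem 3.1(1), replacing the binary $\varpi$ with $\wt_{(\mathbf{P},\omega)}$) shows the induced map on $K$ is an order automorphism, forcing $\langle\supp(\varphi(\alpha))\rangle_{\mathbf P}=\lambda[\langle\supp(\alpha)\rangle_{\mathbf P}]$ for a $\lambda\in\Aut(\mathbf{P})$; conversely $\mathbf{P}$-support-preservation trivially implies $(\mathbf{P},\omega)$-weight-preservation. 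Actually, the cleanest route is: apply Theorem 3.1 with the binary $\varpi$ to get $G=\GL_{\mathbf P}(\mathbf H)$'s ``enveloping'' group together with the homomorphism $\zeta:G\to T$; then observe $G=\GL_{(\mathbf P,\omega)}(\mathbf H)$ because $\varpi$-support-preservation is equivalent to $\mathbf P$-support-preservation, which implies $(\mathbf P,\omega)$-weight-preservation, and conversely a $(\mathbf P,\omega)$-weight isometry preserves $\mathbf P$-support by the same principal-ideal/order argument. Hmm --- this last converse is exactly the content one must not skip; I would cite the proof-technique of Theorem 3.1(1) verbatim, since it only used that $\varpi$ satisfied (3.1)--(3.3), and $\wt_{(\mathbf P,\omega)}$ does satisfy them.

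Then I would identify the subgroup $T$ from Theorem 3.1 with the group $Q$ in the statement. Recall $T=\{\mu\in\Aut(\mathbf P)\mid\mu\mid_{\Omega-K}=\id,\ \forall I\subseteq K:\varpi(\mu[I])=\varpi(I),\ \forall i:H_i\cong H_{\mu(i)}\}$. For $I\subseteq K$, $\varpi(\mu[I])=\varpi(I)$ with the binary $\varpi$ is equivalent to $\langle\mu[I]\rangle_{\mathbf P}=\langle I\rangle_{\mathbf P}$; since $\mu$ is an order automorphism this holds for all $I\subseteq K$ iff $\mu$ fixes each principal ideal $\langle\{i\}\rangle_{\mathbf P}\cap K$ setwise appropriately --- but what we really want is the $\omega$-condition $\omega(\mu(i))=\omega(i)$. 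So here I would instead run Theorem 3.1 with $\varpi=\wt_{(\mathbf P,\omega)}$ directly; then $T=\{\mu\in\Aut(\mathbf P)\mid\mu\mid_{\Omega-K}=\id,\ \forall I\subseteq K:\wt_{(\mathbf P,\omega)}$ of $I$ equals that of $\mu[I],\ H_i\cong H_{\mu(i)}\}$, and applying the defining condition to singletons $I=\{i\}$, $i\in K$, gives $\sum_{k\preccurlyeq_{\mathbf P}\mu(i)}\omega(k)=\sum_{k\preccurlyeq_{\mathbf P}i}\omega(k)$; an induction on $\len_{\mathbf P}(i)$, using that $\mu$ restricts to an order isomorphism of the down-sets, then yields $\omega(\mu(i))=\omega(i)$ for all $i\in K$, hence for all $i\in\Omega$ (as $\mu$ fixes $\Omega-K$ pointwise), so $T\subseteq Q$; the reverse inclusion $Q\subseteq T$ is immediate since $\omega(\mu(i))=\omega(i)$ for all $i$ forces $\wt_{(\mathbf P,\omega)}$-preservation of every subset. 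This downward induction (peel off a maximal element of the ideal) is the one genuinely non-formal step, and is where I expect the only real work to lie; everything else is a transcription of Theorem 3.1 with $T,G$ relabelled as $Q,\GL_{(\mathbf P,\omega)}(\mathbf H)$, together with the already-noted facts $\ker(\zeta)=\GL_{\mathbf P}(\mathbf H)$, $\ran(\zeta)=Q$, and $\zeta$ a group homomorphism.
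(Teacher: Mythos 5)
Your final approach is exactly the paper's: apply Theorem~3.1 with $\varpi_1(B)=\sum_{i\in\langle B\rangle_{\mathbf P}}\omega(i)$, so that $\varpi_1(\supp(\alpha))=\wt_{(\mathbf P,\omega)}(\alpha)$, hence $G=\GL_{(\mathbf P,\omega)}(\mathbf H)$, and then identify $T$ with $Q$. Your verification of hypotheses (3.1)--(3.3) for this $\varpi_1$ and the induction on $\len_{\mathbf P}(i)$ for the inclusion $T\subseteq Q$ are both correct; the latter spells out a step the paper's one-line proof leaves implicit.

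However, the middle of your argument asserts something that is simply false and in fact contradicts the very statement you are proving: ``$\GL_{(\mathbf{P},\omega)}(\mathbf{H})=\GL_{\mathbf{P}}(\mathbf{H})$ for every $\omega$.'' If this held, then since $\ker(\zeta)=\GL_{\mathbf P}(\mathbf H)$ the group $Q=\ran(\zeta)$ would always be trivial, which is not the case (take $\mathbf P$ an anti-chain on two coordinates with $H_1\cong H_2\neq\{0\}$ and $\omega$ constant; then $Q$ contains the transposition, and the corresponding coordinate swap is a $(\mathbf P,\omega)$-weight isometry that does not preserve $\mathbf P$-support). The source of the error is a mis-reading of what the proof of Theorem~3.1(1) delivers: a $(\mathbf P,\omega)$-weight isometry $\varphi$ preserves $\mathbf P$-support only \emph{up to} some $\lambda\in\Aut(\mathbf P)$, i.e.\ $\langle\supp(\varphi(\alpha))\rangle_{\mathbf P}=\lambda[\langle\supp(\alpha)\rangle_{\mathbf P}]$, and this $\lambda$ need not be the identity, so $\varphi$ need not lie in $\GL_{\mathbf P}(\mathbf H)$. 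The slip is harmless in the end because your closing paragraph abandons the ``binary $\varpi$'' route and runs Theorem~3.1 with $\varpi_1$ directly; but the erroneous interlude should be excised, since there is no way to reach all of $\GL_{(\mathbf P,\omega)}(\mathbf H)$ by first identifying $G$ with $\GL_{\mathbf P}(\mathbf H)$.
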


\begin{proof}
The result follows from applying Theorem 3.1 to $\varpi_1:2^{\Omega}\longrightarrow\mathbb{R}$ defined as $\varpi_1(B)=\sum_{i\in \langle B\rangle_{\mathbf{P}}}\omega(i)$.
\end{proof}

\begin{remark}
{In [35, Section II.A], Machado and Firer have studied a much more general function $\varpi:2^{\Omega}\longrightarrow\mathbb{R}$ that does not require (3.3), and their result [35, Theorem 1] applies to a wider range of metrics including combinatorial metric (see \cite{42}). On the other hand, their approach requires $\mathbf{H}$ to be a vector space over a non-binary field. Hence our Theorem 3.1 applies to more general module alphabets. In addition, if $\mathbf{H}$ is a vector space over a field, then Corollary 3.1 recovers [35, Theorem 5].
}
\end{remark}

\section{The MEP for $\mathbf{P}$-support}
\setlength{\parindent}{2em}
Throughout this section, we fix a poset $\mathbf{P}=(\Omega,\preccurlyeq_{\mathbf{P}})$. In addition, we let $m$ be the largest cardinality of a chain in $\mathbf{P}$, and for any $r\in[1,m]$, we define $W_{r}\triangleq\{u\in\Omega\mid \len_{\mathbf{P}}(u)=r\}$.

The following lemma gives some necessary conditions for the MEP.

\setlength{\parindent}{0em}
\begin{lemma}
{If for any linear code $C\subseteq\mathbf{H}$ and $\chi\in\Hom(C,\mathbf{H})$ such that $\chi$ preserves $\mathbf{P}$-support, there exists $\varphi\in\End_{S}(\mathbf{H})$ with $\varphi\mid_{C}=\chi$, then $(\mathbf{H},\mathbf{P})$ satisfies Condition (B). Furthermore, if $\mathbf{H}$ satisfies the MEP for $\mathbf{P}$-support, then $\mathbf{H}$ satisfies Condition (A).
}
\end{lemma}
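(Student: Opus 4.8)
The plan is to prove the two implications separately, in each case by constructing, from a hypothetical failure of the module-theoretic condition, a specific small linear code together with a $\mathbf{P}$-support preserving homomorphism that cannot be extended. For Condition (B), suppose $k \preccurlyeq_{\mathbf{P}} l$ with $k \neq l$ but $H_k$ is not $H_l$-injective; then there is an $S$-submodule $A \subseteq H_l$ and $f \in \Hom_S(A, H_k)$ that does not extend to $\Hom_S(H_l, H_k)$. The idea is to build a code inside $\delta(\{k,l\})$: take $C = \{\eta_l(a) + \eta_k(f(a)) \mid a \in A\}$, i.e. the ``graph'' of $f$ placed so that the $H_l$-component is the domain variable and the $H_k$-component is $f$ of it, and define $\chi \in \Hom(C,\mathbf{H})$ by $\chi(\eta_l(a)+\eta_k(f(a))) = \eta_l(a)$ (project away the $H_k$-component). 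For any nonzero $a \in A$, since $k \in \langle\{l\}\rangle_{\mathbf{P}}$ we have $\langle\supp(\eta_l(a)+\eta_k(f(a)))\rangle_{\mathbf{P}} = \langle\{l\}\rangle_{\mathbf{P}} = \langle\supp(\eta_l(a))\rangle_{\mathbf{P}}$, so $\chi$ preserves $\mathbf{P}$-support. If $\chi$ extended to some $\varphi \in \End_S(\mathbf{H})$, then $g := \pi_k \circ \varphi \circ \eta_l \in \Hom_S(H_l,H_k)$ would satisfy, for $a \in A$, $g(a) = \pi_k(\varphi(\eta_l(a)))$; since $\varphi(\eta_l(a)+\eta_k(f(a))) = \eta_l(a)$ has zero $H_k$-component, we get $\pi_k\varphi(\eta_l(a)) = -\pi_k\varphi(\eta_k(f(a)))$, and one arranges the bookkeeping so that $g\mid_A$ agrees with $f$ up to a correction that is itself defined on all of $H_l$ — contradicting non-extendability of $f$. (It may be cleanest to first reduce to the case $A = H_l \cap (\text{submodule})$ or to absorb the diagonal term $\pi_k\varphi\eta_k$ using that $\pi_k\varphi\eta_k$ is forced to be close to an automorphism; I would check the precise combination in the writeup.)

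For Condition (A), assume $\mathbf{H}$ satisfies the MEP for $\mathbf{P}$-support but some $H_i$ is not strong pseudo-injective: there is an $S$-submodule $B \subseteq H_i$ and an injective $h \in \Hom_S(B, H_i)$ with no extension to $\Aut_S(H_i)$. Now take $C = \eta_i(B) \subseteq \mathbf{H}$ and $\chi = \eta_i \circ h \circ \pi_i\mid_C$, i.e. $\chi(\eta_i(b)) = \eta_i(h(b))$. Since $h$ is injective, for nonzero $b$ both $\eta_i(b)$ and $\eta_i(h(b))$ have support exactly $\{i\}$, hence equal $\mathbf{P}$-support $\langle\{i\}\rangle_{\mathbf{P}}$; so $\chi$ preserves $\mathbf{P}$-support and is injective. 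By the MEP there is $\varphi \in \GL_{\mathbf{P}}(\mathbf{H})$ with $\varphi\mid_C = \chi$. The key point is then that $\varphi$, being a $\mathbf{P}$-support isometry, must respect the position $i$ in a controlled way: since elements of $\eta_i(H_i)$ have $\mathbf{P}$-support contained in $\langle\{i\}\rangle_{\mathbf{P}}$ and $\varphi$ preserves this, $\pi_i \circ \varphi \circ \eta_i$ should be an $S$-automorphism of $H_i$ extending $h$. To nail this down I would invoke Lemma 3.1 (or re-derive its relevant direction): a $\mathbf{P}$-support preserving automorphism $\varphi$ has $\pi_{\lambda(i)}\circ\varphi\circ\eta_i \in \Aut_S(H_i, H_{\lambda(i)})$ for the associated order automorphism $\lambda$; one must additionally argue $\lambda(i) = i$, which follows because $h$ is injective and $B \neq \{0\}$ forces $\varphi$ to send a nonzero vector of support $\{i\}$ to one of support $\{i\}$, i.e. $\langle\{i\}\rangle_\mathbf{P}$ is preserved under $\lambda$, and then $\len_\mathbf{P}$ considerations (or minimality of $i$ within its support-ideal) pin down $\lambda(i)=i$ — actually one only needs that $\pi_i\varphi\eta_i$ restricted to $B$ equals $h$ and is an isomorphism onto $H_i$, which already contradicts non-extendability.

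The main obstacle, in both parts, is the interaction between the component of $\varphi$ at the ``target'' position $i$ (resp. $k$) and the off-diagonal components $\pi_k\varphi\eta_j$ for $j \neq k$ with $k \preccurlyeq_\mathbf{P} j$: a priori $\varphi$ need not be ``triangular'' in a way that lets one read off $\pi_k\varphi\eta_l$ cleanly, so the naive identification $g = f$ can be spoiled by contributions routed through intermediate coordinates. I expect the resolution to be exactly the structure encoded in Lemma 3.1(3) — that a support-preserving (endo)morphism is supported on pairs $(i,j)$ with $j \preccurlyeq_\mathbf{P} \lambda(i)$ and has automorphisms on the diagonal — so that in the Condition (B) argument the relevant composite genuinely lands in $\Hom_S(H_l, H_k)$ and witnesses the desired extension. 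I would therefore organize the proof so that Lemma 3.1 is applied first to control $\varphi$, and only then perform the small explicit computation that produces the contradiction; the code constructions above (graph of $f$ inside $\delta(\{k,l\})$; image of $h$ inside $\eta_i(H_i)$) are deliberately chosen to be as rigid as possible so that $\varphi$ has essentially no freedom beyond its $i$-th (resp. $k$-th) diagonal block.
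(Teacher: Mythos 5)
Your treatment of Condition (A) is essentially the paper's: take $C=\eta_i[B]$, map $\eta_i(b)\mapsto\eta_i(h(b))$, note it preserves $\mathbf{P}$-support since $h$ is injective, extend to $\varphi\in\GL_{\mathbf{P}}(\mathbf{H})$, and read off $\pi_i\circ\varphi\circ\eta_i\in\Aut_S(H_i)$ via Lemma 3.1 with $(\pi_i\circ\varphi\circ\eta_i)\mid_B=h$. You worry more than necessary about $\lambda(i)=i$: elements of $\GL_{\mathbf{P}}(\mathbf{H})$ preserve $\mathbf{P}$-support \emph{on the nose}, so Lemma 3.1 applies with $\lambda=\id$ directly, and no $\len_{\mathbf{P}}$ gymnastics are needed. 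That part is fine.

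The Condition (B) part has a genuine gap, and it comes from your choice of direction for the map. You put the graph of $f$ as the domain, $C=\{\eta_l(a)+\eta_k(f(a))\mid a\in A\}$, and project down via $\chi$. After extending $\chi$ to $\varphi\in\End_S(\mathbf{H})$ you obtain, correctly, the identity $\pi_k\varphi\eta_l\mid_A=-(\pi_k\varphi\eta_k)\circ f$. But the hypothesis of this part of Lemma 4.1 is deliberately weaker than the MEP: $\varphi$ is only an endomorphism of $\mathbf{H}$, \emph{not} a $\mathbf{P}$-support isometry, so Lemma 3.1 does not apply and there is no a priori control on the diagonal block $\pi_k\varphi\eta_k\in\End_S(H_k)$ --- it can fail to be invertible, and indeed can be zero, since $C\cap\eta_k[H_k]=\{0\}$ leaves $\varphi$ entirely free on $\eta_k[H_k]$. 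The ``correction'' you hope to absorb therefore cannot be absorbed, and the identity above tells you nothing about extending $f$. The fix is simply to reverse the map, which is what the paper does: take $C=\eta_l[B]$ and define $g(\eta_l(b))=\eta_l(b)+\eta_k(f(b))$. Since $k\preccurlyeq_{\mathbf{P}}l$, $g$ preserves $\mathbf{P}$-support; extend $g$ to $\varphi\in\End_S(\mathbf{H})$, and then $\bigl(\pi_k\circ\varphi\circ\eta_l\bigr)(b)=\pi_k\bigl(g(\eta_l(b))\bigr)=f(b)$ for all $b\in B$, so $\pi_k\circ\varphi\circ\eta_l\in\Hom_S(H_l,H_k)$ extends $f$ outright with no contamination from $\pi_k\varphi\eta_k$. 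With the domain of the graph map placed at $\eta_l[B]$ rather than at the graph itself, the argument closes cleanly and under the stated (weaker-than-MEP) hypothesis.
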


\begin{proof}
Let $k,l\in\Omega$ such that $k\preccurlyeq_{\mathbf{P}}l$, $k\neq l$. Consider an $S$-submodule $B\subseteq H_{l}$ and $f\in\Hom_{S}(B,H_{k})$. Define $g\in\Hom_{S}(\eta_{l}[B],\mathbf{H})$ as $g(\eta_{l}(b))=\eta_{l}(b)+\eta_{k}(f(b))$ for all $b\in B$. We infer that $g$ preserves $\mathbf{P}$-support. Hence we can choose $\varphi\in\End_{S}(\mathbf{M})$ with $\varphi\mid_{\eta_{l}[B]}=g$. It follows that $\pi_{k}\circ\varphi\circ\eta_{l}$ is an element of $\Hom_{S}(H_{l},H_{k})$ which extends $f$, as desired. Now suppose that $\mathbf{H}$ satisfies the MEP for $\mathbf{P}$-support. Consider $i\in\Omega$. For an $S$-submodule $B\subseteq H_{i}$ and an injective $\xi\in\Hom_{S}(B,H_{i})$, we define $\chi\in\Hom(\eta_{i}[B],\mathbf{H})$ as $\chi(\eta_{i}(b))=\eta_{i}(\xi(b))$ for all $b\in B$. Since $\xi$ is injective, $\chi$ preserves $\mathbf{P}$-support. Hence we can choose $\varphi\in\GL_{\mathbf{P}}(\mathbf{H})$ with $\varphi\mid_{\eta_{i}[B]}=\chi$. From Lemma 3.1 and a routine verification, we deduce that $\pi_{i}\circ\varphi\circ\eta_{i}\in\Aut_{S}(H_{i})$ and $(\pi_{i}\circ\varphi\circ\eta_{i})\mid_{B}=\xi$, as desired.
\end{proof}

\setlength{\parindent}{2em}
Now we show that if $\mathbf{P}$ is hierarchical, then the converse of the second part of Lemma 4.1 holds true as well, which also leads to a canonical decomposition for semi-simple linear codes. The following theorem is the first main result of this section.

\setlength{\parindent}{0em}
\begin{theorem}
{Assume that $\mathbf{P}$ is hierarchical, $\mathbf{H}$ satisfies Condition (A) and $(\mathbf{H},\mathbf{P})$ satisfies Condition (B). Then, we have:

{\bf{(1)}}\,\,Consider $r\in[1,m]$. Let $C\subseteq\delta(\bigcup_{j=1}^{r}W_{j})$ be a linear code, and let $f\in\Hom_{S}(C,\mathbf{H})$ such that $f$ preserves $\mathbf{P}$-support. Then, there exists $\varphi\in\GL_{\mathbf{P}}(\mathbf{H})$ such that $\varphi\mid_{C}=f$ and $\varphi(\alpha)=\alpha$ for all $\alpha\in \delta(\bigcup_{j=r+1}^{m}W_{j})$;

{\bf{(2)}}\,\,$\mathbf{H}$ satisfies the MEP for $\mathbf{P}$-support;

{\bf{(3)}}\,\,Consider $r\in[1,m]$. Let $C\subseteq\delta(\bigcup_{j=1}^{r}W_{j})$ be a semi-simple linear code. Then, we have $\varphi[C]=B_{1}+\cdots+B_{r}$ for some $\varphi\in\GL_{\mathbf{P}}(\mathbf{H})$ such that $\varphi(\alpha)=\alpha$ for all $\alpha\in \delta(\bigcup_{j=r+1}^{m}W_{j})$, and $B_{j}\subseteq\delta(W_{j})$ for all $j\in[1,r]$.
}
\end{theorem}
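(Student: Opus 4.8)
The plan is to prove the three parts together by induction on $r$, since part (1) with $r = m$ immediately yields part (2), and part (3) will follow from the construction in part (1) once we track how a semi-simple code decomposes.

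\textbf{Setup and the inductive scheme.} First I would fix notation: write $U_r = \bigcup_{j=1}^{r} W_j$ for the "bottom $r$ levels" and $V_r = \bigcup_{j=r+1}^{m} W_j$ for the complementary top levels, so $\Omega = U_r \sqcup V_r$ and $\mathbf{H} = \delta(U_r) \oplus \delta(V_r)$. Because $\mathbf{P}$ is hierarchical, for $u \in W_r$ and $v \in V_r$ we have $u \preccurlyeq_{\mathbf{P}} v$, and more importantly $\langle\supp(\beta)\rangle_{\mathbf{P}} = U_{r-1} \cup (\supp(\beta) \cap W_r)$ for any $0 \neq \beta \in \delta(U_r)$ whose support meets $W_r$ — this is the structural fact that makes hierarchical posets tractable. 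The induction is on $r$; the base case $r = 1$ is where $\mathbf{P}$ restricted to $W_1$ is an antichain and everything reduces to a Hamming-type extension on the first level, which should follow from Condition (A) (strong pseudo-injectivity of each $H_i$) combined with a choice of isometry of $\supp$-structure via the group description in Theorem 3.1 / Lemma 3.1.

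\textbf{The inductive step.} Given $C \subseteq \delta(U_r)$ and a $\mathbf{P}$-support preserving $f : C \to \mathbf{H}$, I would first argue that $f$ actually maps into $\delta(U_r)$: for $\alpha \in C$, $\langle\supp(f(\alpha))\rangle_{\mathbf{P}} = \langle\supp(\alpha)\rangle_{\mathbf{P}} \subseteq U_r$, so $\supp(f(\alpha)) \subseteq U_r$. Next, consider the "top part" of level $r$: let $C' = C \cap \delta(U_{r-1})$ be the subcode of codewords supported in the bottom $r-1$ levels, equivalently those $\alpha \in C$ with $\wt_{\mathbf{P}}(\alpha) \leq |U_{r-1}|$; since $f$ preserves $\mathbf{P}$-support, $f[C'] \subseteq \delta(U_{r-1})$ and $f|_{C'}$ preserves $\mathbf{P}$-support, so by the inductive hypothesis applied with $r-1$ there is $\varphi_0 \in \GL_{\mathbf{P}}(\mathbf{H})$ fixing $\delta(V_{r-1}) \supseteq \delta(V_r) \cup \delta(W_r)$ and agreeing with $f$ on $C'$. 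Replacing $f$ by $\varphi_0^{-1} \circ f$, I may assume $f|_{C'} = \id$. Now I must extend this to all of $C$ while only modifying coordinates in $W_r$ (together with their "downward shadow" in $U_{r-1}$, which is forced). The key point is to handle the projection to $W_r$: using the $S$-module structure, pick an $S$-submodule complement so that $C = C' \oplus D$ for some $D$ (this needs care — $D$ may not exist as a complement in general module-theoretic settings, so instead I would work with $C / C'$ and the induced injection $\bar{f}$ on it, using that $\pi_{W_r}$ separates $C'$ from the rest). On the $W_r$-level the relation is essentially Hamming: $f$ restricted to level-$r$ data is an injective $S$-homomorphism between submodules of $\bigoplus_{u \in W_r} H_u$ that preserves which coordinates are nonzero, and Condition (A) plus Condition (B) (giving $H_k$ is $H_l$-injective for $k \prec l$, used to "fill in" the lower levels compatibly) should let me extend it to an automorphism of $\delta(W_r)$ twisted by a level-$r$ support permutation $\sigma \in \Aut(W_r, \preccurlyeq_{\mathbf{P}})$; extend $\sigma$ by the identity on $U_{r-1} \cup V_r$ to get $\lambda \in \Aut(\mathbf{P})$ (hierarchical!), and use Lemma 3.1 to assemble the pieces into $\varphi \in \GL_{\mathbf{P}}(\mathbf{H})$.

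\textbf{The main obstacle and part (3).} The hard part will be the module-theoretic extension on level $r$ itself: I need to take an injective $S$-homomorphism from a submodule of $\delta(U_r)$ that is already the identity on the sub-submodule in $\delta(U_{r-1})$, whose $W_r$-projections record a support-preserving injection, and promote it to a global isometry — this is exactly where strong pseudo-injectivity of the individual $H_u$ ($u \in W_r$) and the injectivity of the lower modules over the $W_r$-modules (Condition (B)) must be combined, and where one cannot simply take direct-sum complements. I would handle this by first extending the "$W_r$-to-$W_r$" part using Condition (A) coordinate-by-coordinate after choosing a support-bijection $\sigma$, then using Condition (B) to adjust the $U_{r-1}$-components so the extension stays an automorphism and restricts correctly — the compatibility bookkeeping is the delicate bit. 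Finally, for part (3): run the construction of part (1); when $C$ is semi-simple, $\varphi[C]$ is semi-simple and contained in $\delta(U_r)$; decompose $\varphi[C]$ via its projections $\pi_{W_j}$ for $j \in [1,r]$, and use semi-simplicity to split $\varphi[C] = B_1 \oplus \cdots \oplus B_r$ with $B_j$ the part "living on level $j$", i.e. $B_j \subseteq \delta(W_j)$; here semi-simplicity is precisely what guarantees the internal direct-sum decomposition along the chain of supports exists, which a general module need not admit.
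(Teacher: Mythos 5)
Your inductive scheme for part~(1) matches the paper's in outline, but two issues need addressing: a structural overcomplication in part~(1) and a genuine error in part~(3).

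For part~(1), you introduce a ``level-$r$ support permutation $\sigma \in \Aut(W_r, \preccurlyeq_\mathbf{P})$'' to be glued into a $\lambda \in \Aut(\mathbf{P})$. For $\mathbf{P}$-\emph{support} preservation (as opposed to the $(\mathbf{P},\omega)$-weight case handled later in Theorem~6.2), no such permutation can occur: if $C \subseteq \delta(\bigcup_{j\le r}W_j)$ and $f$ preserves $\mathbf{P}$-support, then for $i \in W_r$ we have $i \in \langle\supp(\alpha)\rangle_\mathbf{P} \iff i \in \supp(\alpha)$ (since $W_r$ is an antichain and $i$ is maximal among indices that can appear), hence $\supp(f(\alpha)) \cap W_r = \supp(\alpha) \cap W_r$ exactly. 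So $\sigma = \id$ and nothing needs to be twisted. More importantly, the ``compatibility bookkeeping'' you flag as delicate is the actual substance of the proof and you have not supplied it. The paper's mechanism: after using Condition~(A) to normalize so that $g_1|_D = \id_D$ and $g_1(\alpha)_i = \alpha_i$ for $i \in W_r$ (where $D = C \cap \delta(\bigcup_{j\le r-1}W_j)$), the defect $h(\alpha) = g_1(\alpha) - \alpha$ lands in $\delta(\bigcup_{j\le r-1}W_j)$ and kills $D$, so it factors through $C/D$, which embeds into $\delta(W_r)$ via $\varepsilon(\alpha + D) = \sum_{i\in W_r}\eta_i(\alpha_i)$. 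Condition~(B) plus hierarchicality gives $\delta(\bigcup_{j\le r-1}W_j)$ is $\delta(W_r)$-injective, so the defect factors as $\lambda \circ \varepsilon$, and one builds the unipotent shear $\sigma(\gamma) = \gamma + \lambda(\sum_{i\in W_r}\eta_i(\gamma_i)) \in \GL_\mathbf{P}(\mathbf{H})$, which is the extension. Your sketch gestures at the quotient $C/C'$ but never arrives at this shear construction, which is the step that cannot be bypassed.

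Part~(3) is where the proposal actually breaks. You claim that once $\varphi[C]$ is semi-simple and contained in $\delta(\bigcup_{j\le r}W_j)$, semi-simplicity ``is precisely what guarantees'' the splitting $\varphi[C] = B_1 \oplus \cdots \oplus B_r$ with $B_j \subseteq \delta(W_j)$. This is false for a generic $\varphi$. Take $\Omega = \{1,2\}$ with $1 \prec_\mathbf{P} 2$, $H_1 = H_2 = \mathbb{F}_2$, $C = \mathbb{F}_2\cdot(1,1)$: $C$ is simple, hence semi-simple, yet $C \cap \delta(W_1) = C \cap \delta(W_2) = \{0\}$, so no such splitting of $C$ itself exists. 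What semi-simplicity \emph{does} supply is a complement $L$ to $D = C \cap \delta(\bigcup_{j\le r-1}W_j)$ inside $C$ with $\delta(\bigcup_{j\le r-1}W_j) \cap L = \{0\}$; one then \emph{chooses} $f \in \Hom_S(L, \delta(W_r))$ to be the truncation $f(\beta) = \sum_{i\in W_r}\eta_i(\beta_i)$, verifies (using hierarchicality) that this $f$ preserves $\mathbf{P}$-support, and applies part~(1) to extend $f$ to $\sigma \in \GL_\mathbf{P}(\mathbf{H})$. The splitting $\sigma[C] = \sigma[D] + f[L]$ with $f[L] \subseteq \delta(W_r)$ then follows from this \emph{specific choice} of $f$, and $\sigma[D] \subseteq \delta(\bigcup_{j\le r-1}W_j)$ is handled by induction. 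The decomposition is the product of carefully selecting which $f$ to feed into~(1); it does not fall out of $\varphi[C]$ being semi-simple after applying an unspecified $\varphi$.
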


\begin{proof}
{\bf{(1)}}\,\,Throughout the proof, for any $h_1,h_2\in\Hom_{S}(C,\mathbf{H})$, we write $h_1\equiv h_2$ if there exists $\sigma\in\GL_{\mathbf{P}}(\mathbf{H})$ such that $\sigma(\alpha)=\alpha$ for all $\alpha\in \delta(\bigcup_{j=r+1}^{m}W_{j})$ and $h_2=\sigma\circ h_1$. We also write $D\triangleq C\cap\delta(\bigcup_{j=1}^{r-1}W_{j})$.

\hspace*{4mm}\,\,First, we show that there exists $g_1\in\Hom_{S}(C,\mathbf{H})$ such that $f\equiv g_1$, ${g_1}\mid_{D}=\id_{D}$ and for any $\alpha\in C$, $i\in W_{r}$, it holds that $g_1(\alpha)_{i}=\alpha_{i}$. Applying an induction argument to $r-1$, $D$ and $f\mid_{D}$, we can choose $\tau\in\GL_{\mathbf{P}}(\mathbf{H})$ such that $\tau\mid_{D}=f\mid_{D}$ and $\tau(\alpha)=\alpha$ for all $\alpha\in \delta(\bigcup_{j=r}^{m}W_{j})$. Define $g\triangleq\tau^{-1}\circ f$. Consider an arbitrary $i\in W_{r}$. We claim that there exists $\varsigma\in\Aut_{S}(H_{i})$ such that $\varsigma(g(\alpha)_{i})=\alpha_{i}$ for all $\alpha\in C$. Indeed, consider $\alpha\in C$. Since $\langle\supp(g(\alpha))\rangle_{\mathbf{P}}=\langle\supp(\alpha)\rangle_{\mathbf{P}}\subseteq\bigcup_{j=1}^{r}W_{j}$, we have $i\in\supp(\alpha)\Longleftrightarrow i\in\supp(g(\alpha))$, which implies that $\alpha\in\ker(\pi_{i}\mid_{C})\Longleftrightarrow\alpha\in\ker(\pi_{i}\circ g)$. It follows that $\ker(\pi_{i}\mid_{C})=\ker(\pi_{i}\circ g)$. Hence there exists an injective map $\varsigma_1\in\Hom_{S}((\pi_{i}\circ g)[C],H_{i})$ with $\varsigma_1\circ\pi_{i}\circ g=\pi_{i}\mid_{C}$. Since $H_{i}$ is strong pseudo-injective, we can choose $\varsigma\in\Aut_{S}(H_{i})$ with $\varsigma\mid_{(\pi_{i}\circ g)[C]}=\varsigma_1$. Apparently, we have $\varsigma\circ\pi_{i}\circ g=\pi_{i}\mid_{C}$, and hence $\varsigma(g(\alpha)_{i})=\alpha_{i}$ for all $\alpha\in C$, as desired. Therefore we can choose $(\mu_{i}\mid i\in W_{r})$ such that for any $i\in W_{r}$, it holds that $\mu_{i}\in\Aut_{S}(H_{i})$ and $\mu_{i}(g(\alpha)_{i})=\alpha_{i}$ for all $\alpha\in C$. Now there uniquely exists $\psi\in\End_{S}(\mathbf{H})$ such that for any $\alpha\in\mathbf{H}$, $\psi(\alpha)\in\mathbf{H}$ is defined as $(\forall~i\in W_{r}:\psi(\alpha)_{i}=\mu_{i}(\alpha_{i}))$ and $(\forall~t\in \Omega-W_{r}:\psi(\alpha)_{t}=\alpha_{t})$. Apparently, we have $\psi\in\GL_{\mathbf{P}}(\mathbf{H})$ and $\psi(\alpha)=\alpha$ for all $\alpha\in \delta(\Omega-W_{r})$. Define $g_1\triangleq\psi\circ g$. Then, one can check that $f\equiv g_1$, $g_1\mid_{D}=\id_{D}$. Moreover, for $i\in W_{r}$ and $\alpha\in C$, we have $g_1(\alpha)_{i}=\psi(g(\alpha))_{i}=\mu_{i}(g(\alpha)_{i})=\alpha_{i}$, as desired.

\hspace*{4mm}\,\,Next, define $h:C\longrightarrow\mathbf{H}$ as $h(\alpha)=g_1(\alpha)-\alpha$. It can be readily verified that $h\in\Hom_{S}(C,\delta(\bigcup_{j=1}^{r-1}W_{j}))$ and $D\subseteq\ker(h)$. Hence there uniquely exists $\rho\in\Hom_{S}(C/D,\delta(\bigcup_{j=1}^{r-1}W_{j}))$ such that $\rho(\alpha+D)=h(\alpha)$ for all $\alpha\in C$. Since $C\subseteq\delta(\bigcup_{j=1}^{r}W_{j})$ and $D=C\cap\delta(\bigcup_{j=1}^{r-1}W_{j})$, we can define an injective map $\varepsilon\in\Hom_{S}(C/D,\delta(W_{r}))$ such that $\varepsilon(\alpha+D)=\sum_{i\in W_{r}}\eta_{i}(\alpha_{i})$ for all $\alpha\in C$. From $\mathbf{P}$ is hierarchical, we infer that $H_{k}$ is $H_{l}$-injective for all $k,l\in\Omega$ with $\len_{\mathbf{P}}(k)+1\leqslant\len_{\mathbf{P}}(l)$, which further implies that $\delta(\bigcup_{j=1}^{r-1}W_{j})$ is $\delta(W_{r})$-injective. Since $\varepsilon$ is injective, we can choose $\lambda\in\Hom_{S}(\delta(W_{r}),\delta(\bigcup_{j=1}^{r-1}W_{j}))$ with $\rho=\lambda\circ\varepsilon$. From the definition of $\rho$ and $\varepsilon$, we deduce that $g_1(\alpha)=\alpha+\lambda(\sum_{i\in W_{r}}\eta_{i}(\alpha_{i}))$ for all $\alpha\in C$.
Now define $\sigma\in\End_{S}(\mathbf{H})$ as $\sigma(\gamma)=\gamma+\lambda(\sum_{i\in W_{r}}\eta_{i}(\gamma_{i}))$. It is straightforward to verify that for any $l\in\Omega$, $\pi_{l}\circ\sigma\circ\eta_{l}=\id_{H_{l}}$. Moreover, for any $k,l\in\Omega$ such that $k\neq l$ and $\pi_{k}\circ\sigma\circ\eta_{l}\neq0$, we have $k\in\bigcup_{j=1}^{r-1}W_{j}$ and $l\in W_{r}$, which, along with $\mathbf{P}$ is hierarchical, implies that $k\preccurlyeq_{\mathbf{P}}l$. Now Lemma 3.1 implies that $\sigma\in\GL_{\mathbf{P}}(\mathbf{H})$. In addition, by the definition of $\sigma$, we have $\sigma\mid_{C}=g_1$ and $\sigma(\gamma)=\gamma$ for all $\gamma\in \delta(\Omega-W_{r})$. Finally, by $f\equiv g_1$, we conclude that (1) holds true, as desired.

{\bf{(2)}}\,\,This immediately follows from (1).

{\bf{(3)}}\,\,Let $D\triangleq C\cap\delta(\bigcup_{j=1}^{r-1}W_{j})$. Since $C$ is semi-simple, we can choose an $S$-submodule $L$ of $C$ such that $C=D+L$, $D\cap L=\delta(\bigcup_{j=1}^{r-1}W_{j})\cap L=\{0\}$.
Define $f\in\Hom_{S}(L,\delta(W_{r}))$ as $f(\beta)=\sum_{i\in W_{r}}\eta_{i}(\beta_{i})$. Since $L\subseteq\delta(\bigcup_{j=1}^{r}W_{j})$, $\delta(\bigcup_{j=1}^{r-1}W_{j})\cap L=\{0\}$ and $\mathbf{P}$ is hierarchical, $f$ preserves $\mathbf{P}$-support. By (1), we can choose $\sigma\in\GL_{\mathbf{P}}(\mathbf{H})$ such that $\sigma\mid_{L}=f$ and $\sigma(\alpha)=\alpha$ for all $\alpha\in \delta(\bigcup_{j=r+1}^{m}W_{j})$. We infer that $\sigma[C]=\sigma[D]+f[L]$. Noticing that $\sigma[D]\subseteq\delta(\bigcup_{j=1}^{r-1}W_{j})$ and $\sigma[D]$ is semi-simple, applying an induction argument to $r-1$, we have $\psi[\sigma[D]]=E_{1}+\cdots+E_{r-1}$ for some $\psi\in\GL_{\mathbf{P}}(\mathbf{H})$ such that $\psi(\alpha)=\alpha$ for all $\alpha\in \delta(\bigcup_{j=r}^{m}W_{j})$, and $E_{j}\subseteq\delta(W_{j})$ for all $j\in[1,r-1]$. Since $f[L]\subseteq\delta(W_{r})$, we have $\psi[f[L]]=f[L]$, which further implies that $(\psi\circ\sigma)[C]=E_{1}+\cdots+E_{r-1}+f[L]$, as desired.
\end{proof}

\setlength{\parindent}{0em}
\begin{remark}
{Part (3) of Theorem 4.1 can be regarded as a canonical decomposition for semi-simple codes. It generalizes [19, Corollary 1] and the ``only if'' parts of [18, Theorem 9], [33, Theorem 1], [34, Theorem 2] and [35, Theorem 6] to codes over modules.
}
\end{remark}

\setlength{\parindent}{2em}
Now we consider some other sufficient conditions for $\mathbf{H}$ to satisfy the MEP for $\mathbf{P}$-support. From now on, we let $R$ denote the following set
$$\{\varphi\in\End_{S}(\mathbf{H})\mid\text{$\supp(\varphi(\alpha))\subseteq\langle\supp(\alpha)\rangle_{\mathbf{P}}$ for all $\alpha\in\mathbf{H}$}\}.$$
We note that $R$ is a subring of $\End_{S}(\mathbf{H})$ with $\id_{\mathbf{H}}\in R$, and $\GL_{\mathbf{P}}(\mathbf{H})$ is exactly the set of all the multiplicative invertible elements of $R$.

\setlength{\parindent}{0em}
\begin{lemma}
{The following three statements are equivalent to each other:

{\bf{(1)}}\,\,For any linear code $C\subseteq\mathbf{H}$ and $\chi\in\Hom_{S}(C,\mathbf{H})$ such that $\supp(\chi(\alpha))\subseteq\langle\supp(\alpha)\rangle_{\mathbf{P}}$ for all $\alpha\in C$, there exists $\varphi\in R$ with $\varphi\mid_{C}=\chi$;

{\bf{(2)}}\,\,For any linear code $C\subseteq\mathbf{H}$ and $\chi\in\Hom_{S}(C,\mathbf{H})$ such that $\supp(\chi(\alpha))\subseteq\langle\supp(\alpha)\rangle_{\mathbf{P}}$ for all $\alpha\in C$, there exists $\varphi\in\End_{S}(\mathbf{H})$ with $\varphi\mid_{C}=\chi$;

{\bf{(3)}}\,\,For any $k,l\in\Omega$ with $k\preccurlyeq_{\mathbf{P}}l$, $H_{k}$ is $H_{l}$-injective.
}
\end{lemma}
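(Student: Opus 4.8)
The plan is to prove the cycle $(1)\Rightarrow(2)\Rightarrow(3)\Rightarrow(1)$. The implication $(1)\Rightarrow(2)$ is immediate since $R\subseteq\End_{S}(\mathbf{H})$: a witness $\varphi\in R$ is in particular a witness in $\End_{S}(\mathbf{H})$, so nothing is to be done there beyond this one-line observation.

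For $(2)\Rightarrow(3)$, I would argue exactly as in the first part of the proof of Lemma 4.1. Fix $k,l\in\Omega$ with $k\preccurlyeq_{\mathbf{P}}l$; note that the case $k=l$ gives $H_l$-injectivity of $H_l$ for free (the identity on $H_l$ works, or one invokes (2) directly), so assume $k\neq l$. Given an $S$-submodule $B\subseteq H_l$ and $f\in\Hom_{S}(B,H_k)$, set $C=\eta_l[B]$ and define $\chi\in\Hom_{S}(C,\mathbf{H})$ by $\chi(\eta_l(b))=\eta_k(f(b))$ for all $b\in B$. Then $\supp(\chi(\eta_l(b)))\subseteq\{k\}\subseteq\langle\{l\}\rangle_{\mathbf{P}}=\langle\supp(\eta_l(b))\rangle_{\mathbf{P}}$ whenever $b\neq0$ (and the containment is trivial when $b=0$), so the hypothesis of (2) applies and yields $\varphi\in\End_{S}(\mathbf{H})$ with $\varphi\mid_{C}=\chi$. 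Then $\pi_k\circ\varphi\circ\eta_l\in\Hom_{S}(H_l,H_k)$ extends $f$, which is precisely $H_l$-injectivity of $H_k$.

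For $(3)\Rightarrow(1)$, which I expect to be the main obstacle, the idea is to build the extension one coordinate (equivalently, one "length layer") at a time, using injectivity to push the required values through. Given $C$ and $\chi$ with $\supp(\chi(\alpha))\subseteq\langle\supp(\alpha)\rangle_{\mathbf{P}}$ for all $\alpha\in C$, I want $\varphi\in R$, i.e. $\varphi\in\End_{S}(\mathbf{H})$ with $\supp(\varphi(\alpha))\subseteq\langle\supp(\alpha)\rangle_{\mathbf{P}}$ for all $\alpha\in\mathbf{H}$, extending $\chi$. Identifying $\varphi$ with the matrix $(\varphi_{j,i})=(\pi_j\circ\varphi\circ\eta_i)\in\prod_{(i,j)}\Hom_{S}(H_i,H_j)$ as in Section 2, the constraint $\varphi\in R$ forces $\varphi_{j,i}=0$ unless $j\preccurlyeq_{\mathbf{P}}i$; so the task is to choose, for each $i\in\Omega$ and each $j\preccurlyeq_{\mathbf{P}}i$, a homomorphism $\varphi_{j,i}\in\Hom_{S}(H_i,H_j)$ so that the resulting map agrees with $\chi$ on $C$. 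For each fixed $i\in\Omega$, look at the projection $\pi_i\mid_{C}:C\to H_i$; its image $A_i:=\pi_i[C]$ is an $S$-submodule of $H_i$, and I claim the assignment $\pi_i(\gamma)\mapsto \bigl(\pi_j(\chi(\gamma))\bigr)_{j\preccurlyeq_{\mathbf{P}}i}$ is well defined, the point being that if $\pi_i(\gamma)=0$ then $i\notin\supp(\gamma)$, hence — using that $\langle\supp(\gamma)\rangle_{\mathbf{P}}$ is an ideal and $j\preccurlyeq_{\mathbf{P}}i$ — one needs $\pi_j(\chi(\gamma))$ to vanish; this requires a small induction on $\len_{\mathbf{P}}(i)$ together with the support condition on $\chi$ (subtract off the already-constructed lower-layer contributions). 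Granting well-definedness, this gives an $S$-homomorphism from $A_i$ into $\prod_{j\preccurlyeq_{\mathbf{P}}i}H_j$; by (3) each component $H_j$ with $j\preccurlyeq_{\mathbf{P}}i$ is $H_i$-injective, so each component map $A_i\to H_j$ extends to $H_i\to H_j$, and I set $\varphi_{j,i}$ to be such an extension (and $\varphi_{j,i}=0$ when $j\not\preccurlyeq_{\mathbf{P}}i$). Carrying this out layer by layer in increasing $\len_{\mathbf{P}}$ — so that when coordinate $i$ is processed the contributions of all $j$ with $\len_{\mathbf{P}}(j)<\len_{\mathbf{P}}(i)$ are already fixed and can be subtracted — produces $\varphi\in R$ with $\varphi\mid_{C}=\chi$. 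The delicate point, and the one needing care in the write-up, is precisely the compatibility/well-definedness at each layer: one must verify that after subtracting the lower-layer part, the residual map factors through $\pi_i\mid_{C}$, which is where the hypothesis $\supp(\chi(\alpha))\subseteq\langle\supp(\alpha)\rangle_{\mathbf{P}}$ is used in full strength.
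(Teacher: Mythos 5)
Your $(1)\Rightarrow(2)$ and $(2)\Rightarrow(3)$ match the paper's (the latter reuses the Lemma~4.1 construction, as you say), but your $(3)\Rightarrow(1)$ has a genuine gap. You propose to fix the \emph{source} coordinate $i$ and build a well-defined map out of $A_i=\pi_i[C]$ by sending $\pi_i(\gamma)$ to $\bigl(\pi_j(\chi(\gamma))\bigr)_{j\preccurlyeq_{\mathbf{P}}i}$. This is not well-defined: if $\pi_i(\gamma)=0$, so $i\notin\supp(\gamma)$, it does \emph{not} follow that $\pi_j(\chi(\gamma))=0$ for $j\preccurlyeq_{\mathbf{P}} i$, because $j$ can perfectly well lie below some \emph{other} element of $\supp(\gamma)$ and hence lie in $\langle\supp(\gamma)\rangle_{\mathbf{P}}$. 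You half-acknowledge this and propose to repair it by an induction on $\len_{\mathbf{P}}(i)$ in which one ``subtracts off the already-constructed lower-layer contributions.'' That repair does not close the gap either: for a fixed target $j\preccurlyeq_{\mathbf{P}} i$, the equation one must satisfy is $\chi(\gamma)_j=\sum_{i'\,:\,j\preccurlyeq_{\mathbf{P}} i'}\varphi_{j,i'}(\gamma_{i'})$, and this sum ranges over \emph{all} sources $i'$ above $j$, including ones of $\len_{\mathbf{P}}$ equal to or larger than that of $i$ whose columns have not yet been constructed. Processing sources in increasing $\len_{\mathbf{P}}$ leaves those future contributions unsubtracted; processing in decreasing $\len_{\mathbf{P}}$ has the symmetric problem for sources of smaller or equal length. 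There is simply no canonical way to split $\chi(\gamma)_j$ coordinate-by-coordinate in the source index.

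The paper's proof instead fixes the \emph{target} coordinate $k$ and sets $E=\langle\{k\}\rangle_{\overline{\mathbf{P}}}=\{l:k\preccurlyeq_{\mathbf{P}} l\}$. The key observation, which is a genuine factorization and needs no induction, is that $\pi_k\circ\chi$ factors through the projection $\zeta:C\to\delta(E)$: if $\zeta(\alpha)=0$ then $\supp(\alpha)\cap E=\emptyset$, hence $k\notin\langle\supp(\alpha)\rangle_{\mathbf{P}}$ (any witness for $k$ in that ideal would lie in $E$), so $\chi(\alpha)_k=0$. This gives a well-defined $\lambda\in\Hom_S(\zeta[C],H_k)$. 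Since $H_k$ is $H_l$-injective for every $l\in E$ by (3), $H_k$ is $\delta(E)$-injective, so $\lambda$ extends to $\mu\in\Hom_S(\delta(E),H_k)$, and one sets $\rho_{(i,k)}=\mu\circ\eta_i$ for $i\in E$ and $\rho_{(i,k)}=0$ otherwise. Assembling the $\rho_{(i,k)}$ yields $\varphi\in R$ with $\varphi|_C=\chi$. If you want to salvage your write-up, switch from source-indexed columns to target-indexed rows and use exactly this factorization; the ``delicate well-definedness'' point you flagged then becomes the short, correct kernel argument above.
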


\begin{proof}
We note that $(1)\Longrightarrow(2)$ is trivial and $(2)\Longrightarrow(3)$ can be proven similarly as in the proof of Lemma 4.1, and so we only prove $(3)\Longrightarrow(1)$. Let $C\subseteq\mathbf{H}$ be a linear code and let $\chi\in\Hom_{S}(C,\mathbf{H})$ such that $\supp(\chi(\alpha))\subseteq\langle\supp(\alpha)\rangle_{\mathbf{P}}$ for all $\alpha\in C$. We define a tuple $(\rho_{(i,k)}\mid (i,k)\in\Omega\times\Omega)\in\prod_{(i,k)\in\Omega\times\Omega}\Hom_{S}(H_{i},H_{k})$ as follows. Consider a fixed $k\in\Omega$, and let $E=\langle\{k\}\rangle_{\mathbf{\overline{P}}}$. Then, for $\pi_{k}\circ\chi\in\Hom(C,H_{k})$ and $\zeta\in\Hom(C,\delta(E))$ defined as $\zeta(\alpha)=\sum_{i\in E}\eta_{i}(\alpha_{i})$ for all $\alpha\in C$, we have $\ker(\zeta)\subseteq\ker(\pi_{k}\circ\chi)$. Hence there uniquely exists $\lambda\in\Hom_{S}(\zeta[C],H_{k})$ with $\lambda\circ\zeta=\pi_{k}\circ\chi$. Now for any $l\in E$, by $k\preccurlyeq_{\mathbf{P}}l$, we have $H_{k}$ is $H_{l}$-injective. It follows that $H_{k}$ is $\delta(E)$-injective. Hence we can choose $\mu\in\Hom(\delta(E),H_{k})$ with $\mu\mid_{\zeta[C]}=\lambda$. Moreover, we set $(\rho_{(i,k)}\mid i\in\Omega)\in\prod_{i\in\Omega}\Hom_{S}(H_{i},H_{k})$ as $\rho_{(i,k)}=\mu\circ\eta_{i}$ for all $i\in\langle\{k\}\rangle_{\mathbf{\overline{P}}}$, and $\rho_{(i,k)}=0$ for all $i\in\Omega-\langle\{k\}\rangle_{\mathbf{\overline{P}}}$. Consider $\varphi\in\End_{S}(\mathbf{H})$ defined as $\pi_{k}\circ\varphi\circ\eta_{i}=\rho_{(i,k)}$ for all $(i,k)\in\Omega\times\Omega$. It is straightforward to verify that $\varphi\in R$ and $\varphi\mid_{C}=\chi$, as desired.
\end{proof}

\setlength{\parindent}{2em}
The following is the second main result of this section.

\setlength{\parindent}{0em}
\begin{theorem}
{Suppose that $R$ is semi-local, i.e., $R/\Jac(R)$ is a semi-simple ring, where $\Jac(R)$ is the Jacobson radical of $R$. Further assume that $H_{k}$ is $H_{l}$-injective for all $k,l\in\Omega$ with $k\preccurlyeq_{\mathbf{P}}l$. Then, $\mathbf{H}$ satisfies the MEP for $\mathbf{P}$-support.
}
\end{theorem}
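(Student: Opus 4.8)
The plan is to combine Lemma 4.2 with a local-global argument for the semi-local ring $R$. By Lemma 4.2, the hypothesis that $H_{k}$ is $H_{l}$-injective for all $k\preccurlyeq_{\mathbf{P}}l$ guarantees that every partial homomorphism $\chi$ on a linear code $C$ with $\supp(\chi(\alpha))\subseteq\langle\supp(\alpha)\rangle_{\mathbf{P}}$ for all $\alpha\in C$ extends to some $\varphi\in R$. So given a $\mathbf{P}$-support preserving $f\in\Hom_{S}(C,\mathbf{H})$, I would first produce $\varphi\in R$ with $\varphi\mid_{C}=f$; applying the same to $f^{-1}:f[C]\longrightarrow C$ (which also preserves $\mathbf{P}$-support, since $\langle\supp(f(\alpha))\rangle_{\mathbf{P}}=\langle\supp(\alpha)\rangle_{\mathbf{P}}$) yields $\psi\in R$ with $\psi\mid_{f[C]}=f^{-1}$. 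Then $\psi\circ\varphi\in R$ acts as the identity on $C$. The task reduces to: upgrade this near-identity element of $R$ to an honest unit of $R$ (equivalently, an element of $\GL_{\mathbf{P}}(\mathbf{H})$) that still restricts to $f$ on $C$.

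The key step is to exploit that $R$ is semi-local. The standard mechanism (as in the local-global results cited in the introduction, e.g.\ \cite{6,48,49}) is that over a semi-local ring, an endomorphism $\theta$ of a module that is injective on a submodule $C$ and whose image, together with its behaviour modulo $\Jac(R)$, is suitably ``full'' can be corrected to an automorphism. Concretely, I would consider the element $\varphi\in R$ extending $f$ and argue that $\varphi$ becomes invertible modulo $\Jac(R)$: since $R/\Jac(R)$ is semi-simple, the left ideal $R\varphi + \Jac(R)$ is a direct summand, so there is an idempotent $e$ with $R\varphi+\Jac(R)=Re+\Jac(R)$; the composite $\psi\circ\varphi$ being $\id$ on $C$ forces, after the usual bookkeeping with $\supp$ and $\langle\cdot\rangle_{\mathbf{P}}$, that $\overline{\varphi}$ is in fact a unit in $R/\Jac(R)$. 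Then lifting idempotents/units modulo the Jacobson radical of $R$ gives $u\in R$ a unit (so $u\in\GL_{\mathbf{P}}(\mathbf{H})$) with $u\equiv\varphi\pmod{\Jac(R)}$, and a final adjustment of $u$ by an element of $\Jac(R)\cap(\text{annihilator-type ideal vanishing on }C)$ produces the desired extension of $f$.

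The main obstacle I anticipate is precisely the claim that $\overline{\varphi}$ is a unit in $R/\Jac(R)$: one must rule out that $f$, although an isomorphism onto $f[C]$, sits inside $R$ in a ``degenerate'' way that only becomes an automorphism after passing to a larger module, and this is where the hypotheses really get used. The injectivity assumptions (Lemma 4.2) give extensions on \emph{both} sides, $\varphi$ for $f$ and $\psi$ for $f^{-1}$, so $\psi\varphi - \id$ annihilates $C$; the semi-locality of $R$ then has to be leveraged to conclude that $\psi\varphi$ — hence $\varphi$ — is a unit modulo $\Jac(R)$, using that any element of $R$ acting as the identity on $C$ and lying in a proper left ideal modulo $\Jac(R)$ would contradict semi-simplicity. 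I would carry this out by decomposing $R/\Jac(R)\cong\prod_t \Mat_{n_t}(\mathbb{D}_t)$ and checking componentwise, reducing to the classical matrix case where a one-sided inverse modulo a semi-simple quotient is a two-sided inverse. Once $\overline{\varphi}$ is a unit, the lifting step and the final correction are routine and I would only sketch them.
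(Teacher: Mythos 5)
Your setup is exactly right and matches the paper: use Lemma 4.2 to produce $\sigma\in R$ with $\sigma\mid_C=f$ and $\tau\in R$ with $\tau\mid_{f[C]}=f^{-1}$, so that $\tau\circ\sigma$ restricts to $\id_C$. But the way you try to finish has a genuine gap. You claim that the relation $\psi\circ\varphi\mid_C=\id_C$ ``forces $\overline{\varphi}$ to be a unit in $R/\Jac(R)$,'' justified by the assertion that any element of $R$ acting as the identity on $C$ and lying in a proper left ideal modulo $\Jac(R)$ would contradict semi-simplicity. That assertion is false: acting as the identity on a small submodule $C$ imposes essentially no constraint on the image of $\varphi$ modulo $\Jac(R)$. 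For instance, if $C=\{0\}$ every $\varphi\in R$ (including $0$) satisfies the condition; more generally, for $\mathbf{P}$ an antichain on one point with $H_1=\mathbb{F}^2$ and $C=\mathbb{F}\cdot e_1$, the idempotent projection onto the first coordinate restricts to $\id_C$ but is not a unit even modulo $\Jac(R)=0$. So you cannot hope to show that the \emph{chosen} extension $\varphi$ is (or lifts to) a unit; there need not be any relation between $\varphi$ and the eventual automorphism beyond having the same restriction to $C$.

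The correct mechanism is a module-theoretic lemma of Bass for semi-local rings, which is what the paper invokes ([4, Lemma 6.4], also [47, Proposition 5.1]): if $R$ is semi-local and $m,m'$ are elements of a left $R$-module $M$ with $Rm=Rm'$, then $m'=um$ for some unit $u\in R^{\times}$. Here one takes $M=\Hom_{S}(C,\mathbf{H})$ as a left $R$-module via composition, $m=\id_C$, $m'=f$. Your relations $\sigma\cdot\id_C=f$ and $\tau\cdot f=\id_C$ give $R\cdot\id_C=R\cdot f$, and the lemma produces a unit $u\in R^{\times}=\GL_{\mathbf{P}}(\mathbf{H})$ with $u\mid_C=f$. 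Note this $u$ is in general unrelated to $\sigma$; the lemma constructs it abstractly rather than by correcting $\sigma$, which is precisely what bypasses the obstacle you ran into. You should replace your ``show $\overline{\varphi}$ is a unit and lift'' step with this Bass-type associate lemma.
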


\begin{proof}
Let $C\subseteq\mathbf{H}$ be a linear code. Then, $\Hom_{S}(C,\mathbf{H})$ is a left $R$-module via composition of maps. Let $f\in\Hom_{S}(C,\mathbf{H})$ such that $f$ preserves $\mathbf{P}$-support. By Lemma 4.2, we can choose $\sigma\in R$ such that $f=\sigma\mid_{C}=\sigma\circ\id_{C}$. Noticing that $f$ is injective and $f^{-1}\in\Hom_{S}(f[C],\mathbf{H})$ also preserves $\mathbf{P}$-support, by Lemma 4.2, we can choose $\tau\in R$ such that $\tau\mid_{f[C]}=f^{-1}$. It follows that $\tau\circ f=\id_{C}$. Since $R$ is semi-local and $\GL_{\mathbf{P}}(\mathbf{H})$ is exactly the set of all the multiplicative invertible elements of $R$, from [4, Lemma 6.4] (also see [47, Proposition 5.1]), we conclude that there exists $\varphi\in\GL_{\mathbf{P}}(\mathbf{H})$ such that $f=\varphi\circ\id_{C}=\varphi\mid_{C}$, as desired.
\end{proof}

\setlength{\parindent}{2em}
We end this section by giving some consequences of Theorems 4.1 and 4.2. In the following corollary, we apply Theorem 4.1 to Rosenbloom-Tsfasman weight (see \cite{3,21,43}), i.e., poset weight induced by a chain, and apply Theorem 4.2 to some specific alphabets.

\setlength{\parindent}{0em}
\begin{corollary}
{{\bf{(1)}}\,\,Suppose that $\mathbf{P}$ is a chain, and fix $\omega:\Omega\longrightarrow\mathbb{R}^{+}$. Then, $\mathbf{H}$ satisfies the MEP for $(\mathbf{P},\omega)$-weight if and only if $\mathbf{H}$ satisfies Condition (A) and $(\mathbf{H},\mathbf{P})$ satisfies Condition (B).

{\bf{(2)}}\,\,Assume that either $\mathbf{H}$ is finite and $H_{k}$ is $H_{l}$-injective for all $k,l\in\Omega$ with $k\preccurlyeq_{\mathbf{P}}l$, or $S$ is a division ring and $\mathbf{H}$ is a finite dimensional left $S$-module. Then, $\mathbf{H}$ satisfies the MEP for $\mathbf{P}$-support.
}
\end{corollary}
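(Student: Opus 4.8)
The plan is to obtain both parts from the results already proved in this section, with only one small piece of ring theory added. For part (1) the decisive observation is that when $\mathbf{P}$ is a chain the $(\mathbf{P},\omega)$-weight and the $\mathbf{P}$-support determine each other. Indeed, the ideals of a chain are linearly ordered by inclusion, $\emptyset=I_{0}\subsetneq I_{1}\subsetneq\cdots\subsetneq I_{|\Omega|}=\Omega$, and since $\omega$ takes values in $\mathbb{R}^{+}$ the quantities $\sum_{i\in I_{r}}\omega(i)$ strictly increase with $r$; hence $\wt_{(\mathbf{P},\omega)}(\alpha)=\wt_{(\mathbf{P},\omega)}(\beta)$ forces $\langle\supp(\alpha)\rangle_{\mathbf{P}}=\langle\supp(\beta)\rangle_{\mathbf{P}}$, and the converse is clear. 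Therefore an $S$-homomorphism preserves $(\mathbf{P},\omega)$-weight exactly when it preserves $\mathbf{P}$-support, $\GL_{(\mathbf{P},\omega)}(\mathbf{H})=\GL_{\mathbf{P}}(\mathbf{H})$, and the MEP for $(\mathbf{P},\omega)$-weight is, verbatim, the MEP for $\mathbf{P}$-support. The ``only if'' direction of (1) then follows from Lemma 4.1, which says that the MEP for $\mathbf{P}$-support forces Conditions (A) and (B). For the ``if'' direction I would first note that a chain is hierarchical --- $\len_{\mathbf{P}}$ is injective on a chain, so $\len_{\mathbf{P}}(u)+1\leqslant\len_{\mathbf{P}}(v)$ already gives $u\preccurlyeq_{\mathbf{P}}v$ --- and then invoke Theorem 4.1(2) with Conditions (A) and (B) in force.

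For part (2) the plan is to check, in each of the two cases, the two hypotheses of Theorem 4.2: that $R$ is semi-local and that $H_{k}$ is $H_{l}$-injective whenever $k\preccurlyeq_{\mathbf{P}}l$. If $\mathbf{H}$ is finite, the injectivity is assumed outright, and $R$, being a subring of the finite ring $\End_{S}(\mathbf{H})$, is finite, hence left Artinian, hence $R/\Jac(R)$ is semi-simple. If instead $S$ is a division ring and $\mathbf{H}$ is finite dimensional, then every left $S$-module is semi-simple, so the injectivity hypothesis holds automatically; for semi-locality I would use the identification $\End_{S}(\mathbf{H})\cong\prod_{(i,j)\in\Omega\times\Omega}\Hom_{S}(H_{i},H_{j})$, under which $R$ consists of the tuples supported on the pairs $(i,j)$ with $j\preccurlyeq_{\mathbf{P}}i$. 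The map $R\longrightarrow\prod_{i\in\Omega}\End_{S}(H_{i})$, $\varphi\mapsto(\pi_{i}\circ\varphi\circ\eta_{i}\mid i\in\Omega)$, is a surjective ring homomorphism (the cross terms in a product vanish because $i\preccurlyeq_{\mathbf{P}}k\preccurlyeq_{\mathbf{P}}i$ forces $k=i$), and its kernel $N$ is nilpotent: a product of $m$ elements of $N$, projected from $\eta_{i}(H_{i})$ to the $j$-th coordinate, is nonzero only if $\mathbf{P}$ contains a strict chain from $j$ up to $i$ with $m+1$ elements, so $N^{m_{0}}=0$ where $m_{0}$ is the largest size of a chain in $\mathbf{P}$. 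Hence $N\subseteq\Jac(R)$, while $R/N\cong\prod_{i\in\Omega}\End_{S}(H_{i})$ is a finite product of matrix rings over division rings and so is semi-simple; thus $\Jac(R)=N$ and $R/\Jac(R)$ is semi-simple. In both cases Theorem 4.2 then delivers the MEP for $\mathbf{P}$-support.

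The only step that takes any real care is the nilpotence of $N$ in the division-ring case, which just requires tracking the composition order against the poset condition defining $R$; otherwise the corollary is a straight application of Lemma 4.1, Theorem 4.1(2) and Theorem 4.2.
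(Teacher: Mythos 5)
Your proof is correct and follows essentially the same path as the paper: for (1) you observe that a chain forces $(\mathbf{P},\omega)$-weight equality to coincide with $\mathbf{P}$-support equality, then apply Lemma~4.1 and Theorem~4.1 with a chain being hierarchical; for (2) you verify the two hypotheses of Theorem~4.2. The only place you do more work than strictly necessary is the semi-locality of $R$ in the division-ring case: your nilpotent-kernel argument (identifying $R$ with the tuples supported on pairs $(i,j)$ with $j\preccurlyeq_{\mathbf{P}}i$, projecting onto the diagonal, and showing the kernel is nilpotent with semi-simple quotient) is correct and self-contained, but one can shortcut it by noting that $R$ is a subalgebra of $\End_{S}(\mathbf{H})$, which is finite dimensional over the center of the division ring $S$ when $\mathbf{H}$ is finite dimensional, so $R$ is Artinian and hence semi-local; the paper leaves this verification entirely to the reader.
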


\begin{proof}
{\bf{(1)}}\,\,Since $\mathbf{P}$ is a chain, we infer that for any $\alpha,\beta\in\mathbf{H}$, $\wt_{(\mathbf{P},\omega)}(\alpha)=\wt_{(\mathbf{P},\omega)}(\beta)\Longleftrightarrow\langle\supp(\alpha)\rangle_{\mathbf{P}}=\langle\supp(\beta)\rangle_{\mathbf{P}}$. It follows that $\mathbf{H}$ satisfies the MEP for $(\mathbf{P},\omega)$-weight if and only if $\mathbf{H}$ satisfies the MEP for $\mathbf{P}$-support. Noticing that $\mathbf{P}$ is hierarchical, the desired result immediately follows from Lemma 4.1 and Theorem 4.1.

{\bf{(2)}}\,\,The assumption ensures that $R$ is semi-local and $H_{k}$ is $H_{l}$-injective for all $k,l\in\Omega$ with $k\preccurlyeq_{\mathbf{P}}l$, and hence Theorem 4.2 concludes the proof.
\end{proof}

\setlength{\parindent}{0em}
\begin{remark}
{Part (1) of Corollary 4.1 generalizes [3, Theorem 6.1], [21, Theorem 4.13] and [21, Theorem 5.1], and if $\mathbf{P}$ is an anti-chain, then (2) of Corollary 4.1 recovers [3, Theorem 6.3] and [21, Remark 4.21 (a)]. All the aforementioned results have been established for codes over finite Frobenius rings and finite Frobenius bimodules.
}
\end{remark}

\section{The isometry equation}
\setlength{\parindent}{2em}
Isometry equation has been introduced by Dyshko to study the MEP for Hamming weight and symmetrized weight composition (see \cite{13}--\cite{17} and [29, Lemma 4.4]), and we first recall some basic facts. For any set $X$ and $Y\subseteq X$, the \textit{indicator function} $\mathbf{1}_{(X,Y)}:X\longrightarrow\{0,1\}$ is defined as
\begin{equation}\mbox{$\mathbf{1}_{(X,Y)}(x)=1$ if and only if $x\in Y$}.\end{equation}
Let $I$ and $J$ be finite sets, and let $(A_{i}\mid i\in I)$, $(B_{j}\mid j\in J)$ be two tuple of sets. We say that $\mathcal{U}\triangleq((A_{i}\mid i\in I),(B_{j}\mid j\in J))$ is \textit{a solution to the isometry equation}, or simply \textit{a solution}, if for some set $X$ such that $(\bigcup_{i\in I}A_{i})\cup(\bigcup_{j\in J}B_{i})\subseteq X$, the following \textit{isometry equation} holds:
\begin{equation}\sum_{i\in I}\mathbf{1}_{(X,A_{i})}=\sum_{j\in J}\mathbf{1}_{(X,B_{j})}.\end{equation}
Further assume that $\mathcal{U}$ is a solution. Then, one can check that for any set $X$ with $(\bigcup_{i\in I}A_{i})\cup(\bigcup_{j\in J}B_{i})\subseteq X$, the isometry equation (5.2) holds true, and for any set $C$, $((A_{i}\cap C\mid i\in I),(B_{j}\cap C\mid j\in J))$ is a solution. In addition, $\mathcal{U}$ is said to be \textit{trivial} if there exists a bijection $\sigma:I\longrightarrow J$ such that $A_{i}=B_{\sigma(i)}$ for all $i\in I$, and is said to be \textit{nontrivial} otherwise.

Dyshko has given the connection between the MEP for Hamming weight and the isometry equation. He has also determined the minimal length of nontrivial solutions with respect to the submodule lattice of a matrix module, and has established necessary and sufficient conditions for a matrix module alphabet to satisfy the MEP for Hamming weight. We first collect some of his results in \cite{14}, as detailed in the following three lemmas.

\setlength{\parindent}{0em}
\begin{lemma}([14, Proposition 1])
{Let $M$ be a  left $S$-module and let $n\in\mathbb{Z}^{+}$. For any $i\in [1,n]$, define $\varepsilon_{i}:M^{n}\longrightarrow M$ as $\varepsilon_{i}(\alpha)=\alpha_{i}$. Consider a linear code $C\subseteq M^{n}$ and $f\in\Hom_{S}(C,M^{n})$, and let $\mathcal{U}=((\ker(\varepsilon_{i})\cap C\mid i\in[1,n]),(\ker(\varepsilon_{i}\circ f)\mid i\in[1,n]))$. Then, $f$ preserves Hamming weight if and only if $\mathcal{U}$ is a solution. If $f$ extends to a Hamming weight isometry of $M^{n}$, then $\mathcal{U}$ is a trivial solution. Conversely, if $M$ is strong pseudo-injective and $\mathcal{U}$ is a trivial solution, then $f$ extends to a Hamming weight isometry of $M^{n}$.
}
\end{lemma}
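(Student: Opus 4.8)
The plan is to establish the three implications of Lemma 5.1 in turn, treating the equivalence first, then the ``easy'' direction of the extension statement, and finally the harder converse that invokes strong pseudo-injectivity.

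First I would set up the basic bookkeeping. For a codeword $\alpha\in C$, observe that $\supp(\alpha)=\{i\in[1,n]\mid \alpha\notin\ker(\varepsilon_i)\}$, so the Hamming weight $\wt(\alpha)$ equals the number of indices $i$ with $\alpha\notin\ker(\varepsilon_i\mid_C)$; likewise $\wt(f(\alpha))$ equals the number of $i$ with $\alpha\notin\ker(\varepsilon_i\circ f)$. Evaluating the would-be isometry equation $\sum_{i}\mathbf{1}_{(C,\ker(\varepsilon_i)\cap C)}=\sum_i\mathbf{1}_{(C,\ker(\varepsilon_i\circ f))}$ at a point $\alpha\in C$ gives exactly $(n-\wt(\alpha))=(n-\wt(f(\alpha)))$, i.e. $\wt(\alpha)=\wt(f(\alpha))$. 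So the equation holds at every $\alpha\in C$ if and only if $f$ preserves Hamming weight; this is the first claim. Here I would use the remark preceding the lemma that a solution may be tested over any ambient set containing all the $A_i$ and $B_j$, so taking $X=C$ (or $M^n$) is legitimate.

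For the second claim, suppose $f$ extends to a Hamming weight isometry $\varphi\in\GL_{\mathrm{Ham}}(M^n)$. By Theorem 3.1 applied with $\mathbf{P}$ an anti-chain (or directly by the standard description of Hamming isometries of $M^n$), $\varphi$ permutes the coordinates according to some $\lambda\in S_n$ and acts on the $\lambda(i)$-th block by an automorphism of $M$; in particular $\ker(\varepsilon_i\circ\varphi)=\ker(\varepsilon_{\lambda(i)})$ for all $i$. Restricting to $C$ yields $\ker(\varepsilon_i\circ f)=\ker(\varepsilon_i\circ\varphi)\cap C=\ker(\varepsilon_{\lambda(i)})\cap C$, so $\sigma=\lambda$ witnesses that $\mathcal{U}$ is trivial. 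The converse — if $M$ is strong pseudo-injective and $\mathcal{U}$ is trivial, then $f$ extends — is the substantive part. Given a bijection $\sigma$ with $\ker(\varepsilon_i)\cap C=\ker(\varepsilon_{\sigma(i)}\circ f)$, I would argue coordinatewise: for each $i$, the two submodules $\pi_i[C]=\varepsilon_i[C]\subseteq M$ and $(\varepsilon_{\sigma(i)}\circ f)[C]\subseteq M$ are related by a well-defined isomorphism, because the maps $\varepsilon_i\mid_C$ and $\varepsilon_{\sigma(i)}\circ f$ have the same kernel; this gives an injective $S$-homomorphism $h_i$ from a submodule of $M$ onto another submodule of $M$ with $h_i\circ\varepsilon_i\mid_C=\varepsilon_{\sigma(i)}\circ f$. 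Strong pseudo-injectivity of $M$ then extends each $h_i$ to some $\tau_i\in\Aut_S(M)$. Assembling the $\tau_i$ together with the coordinate permutation $\sigma^{-1}$ produces an automorphism $\varphi$ of $M^n$ that is a Hamming isometry and satisfies $\varphi\mid_C=f$, after checking the indices match up.

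I expect the main obstacle to be the index bookkeeping in the converse: one must be careful that ``trivial solution'' is phrased in terms of the kernels $\ker(\varepsilon_i)\cap C$ versus $\ker(\varepsilon_{j}\circ f)$, and translating this into a coherent family of block isomorphisms requires matching $\sigma$ (a bijection on the index set of the $f$-side) correctly against the coordinate permutation that $\varphi$ must induce — it is easy to end up with $\sigma$ where $\sigma^{-1}$ belongs. Everything else is routine once the definitions of $\varepsilon_i$, $\supp$, and strong pseudo-injectivity are unwound, and the remark about testing the isometry equation over an arbitrary ambient set handles the only mild subtlety in the first two claims. Since this lemma is quoted verbatim from \cite{14}, the expected write-up is a short citation-backed sketch rather than a from-scratch proof.
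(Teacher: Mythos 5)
The paper offers no proof of Lemma 5.1 --- it is stated as a direct citation to [14, Proposition 1], so there is no in-paper argument to compare against, and your note to that effect is accurate. Your reconstruction is nonetheless correct: taking $X=C$ (legitimate since every set in $\mathcal{U}$ lies in $C$), evaluating the indicator-function equation at each $\alpha\in C$ gives $n-\wt(\alpha)=n-\wt(f(\alpha))$, which is the first equivalence; the second claim follows because a Hamming isometry of $M^n$ is monomial, so $\ker(\varepsilon_i\circ\varphi)$ is some $\ker(\varepsilon_j)$ and restriction to $C$ supplies the required bijection; and for the converse, the equality $\ker(\varepsilon_i|_C)=\ker(\varepsilon_{\sigma(i)}\circ f)$ yields a well-defined injective $S$-homomorphism $h_i:\varepsilon_i[C]\longrightarrow M$ with $h_i\circ\varepsilon_i|_C=\varepsilon_{\sigma(i)}\circ f$, strong pseudo-injectivity extends $h_i$ to $\tau_i\in\Aut_S(M)$, and $\varphi(\alpha)_{\sigma(i)}=\tau_i(\alpha_i)$ is the desired Hamming isometry extending $f$. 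The $\sigma$-versus-$\sigma^{-1}$ worry you flag is harmless, since triviality of $\mathcal{U}$ requires only that \emph{some} bijection exist, and inverses of bijections are bijections.
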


\setlength{\parindent}{0em}
\begin{lemma}([14, Lemma 6])
{Let $\mathbb{F}$ be a finite field, $e\in\mathbb{Z}^{+}$, and suppose that $S=Mat_{e}(\mathbb{F})$. Moreover, let $X$ be a finite left $S$-module, $n\in\mathbb{Z}^{+}$, and let $(U_{1},\dots,U_{n})$, $(V_{1},\dots,V_{n})$ be two tuples of $S$-submodules of $X$ such that $((U_{1},\dots,U_{n}),(V_{1},\dots,V_{n}))$ is a nontrivial solution. Then, it holds true that $n\geqslant\prod_{i=1}^{e}(|\mathbb{F}|^{i}+1)$.
}
\end{lemma}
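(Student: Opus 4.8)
The statement concerns nontrivial solutions to the isometry equation with respect to the submodule lattice of a finite module $X$ over $S = \Mat_e(\mathbb{F})$. My plan is to reduce the problem, via Morita equivalence, to the simplest possible module and then to count. First I would invoke the structure of modules over $\Mat_e(\mathbb{F})$: every finite left $S$-module is isomorphic to a direct sum of copies of the unique simple module $\mathbb{F}^e$ (column vectors), so $X \cong (\mathbb{F}^e)^{\oplus d}$ for some $d$, and its submodule lattice is isomorphic to the lattice of $\mathbb{F}$-subspaces of $\mathbb{F}^d$ (again by Morita equivalence / the functor $\Hom_S(\mathbb{F}^e,-)$). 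Thus WLOG we may assume $S = \mathbb{F}$ and $X = \mathbb{F}^d$, and the $U_i, V_j$ are subspaces; the quantity $\prod_{i=1}^e(|\mathbb{F}|^i+1)$ that appears in the bound for general $e$ should, after this reduction, be traced to a statement purely about subspaces of $\mathbb{F}^d$, and I expect the clean formulation to be: any nontrivial solution built from subspaces of a finite-dimensional $\mathbb{F}$-space has length $n \geqslant \prod_{i=1}^{e}(|\mathbb{F}|^i+1)$, where the role of $e$ is played by whatever invariant governs the minimal nontrivial solution. (Here I would double-check the exact bookkeeping: in Dyshko's setup the $e$ in the bound comes from $S = \Mat_e(\mathbb{F})$ and the reduction replaces submodules of $X$ by subspaces, so the minimal length is the minimal length of a nontrivial subspace solution, which is an intrinsic number.)

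The combinatorial core is then the following: if $\sum_{i=1}^n \mathbf{1}_{(X,U_i)} = \sum_{j=1}^n \mathbf{1}_{(X,V_j)}$ as functions on $X$, with all $U_i, V_j$ subspaces, and the two multisets $\{U_1,\dots,U_n\}$, $\{V_1,\dots,V_n\}$ differ, then $n$ is large. The standard way to extract a lower bound is a counting/inclusion–exclusion argument organized by dimension: evaluate both sides summed over $x$ ranging through subsets of $X$ of a fixed form, or equivalently apply Möbius inversion on the subspace lattice. A cleaner route is to cancel common subspaces from the two multisets (a solution minus a trivial solution is a solution), reducing to the case where no subspace appears on both sides; then pick a subspace $W$ of maximal dimension, say dimension $r$, occurring (WLOG on the left). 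Comparing the two sides on the coset structure of $W$ — i.e. evaluating the identity at a generic point of $W$ versus points just outside — forces several further subspaces to appear, and iterating this down through dimensions $r, r-1, \dots, 1, 0$ yields a tree-like branching whose total leaf count is exactly $\prod_{i=1}^{r}(|\mathbb{F}|^i + 1)$; maximality considerations then pin $r$ to $e$ (this is where the specific ring $\Mat_e(\mathbb{F})$, via its module length, enters).

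The main obstacle, and the place where the real work lives, is making that branching argument precise: showing that the presence of a maximal subspace $W$ on one side forces, on the other side, the presence of $|\mathbb{F}|^{\dim W}+1$ distinct subspaces each of which "covers" a corresponding piece of $W$'s complement, and that these pieces can themselves be analyzed recursively without overlap or double-counting. I would handle this by working with the function $F = \sum_i \mathbf{1}_{(X,U_i)} - \sum_j \mathbf{1}_{(X,V_j)} \equiv 0$ and restricting attention to hyperplanes through $W$ and their translates, using that $W$ is contained in exactly $\frac{|\mathbb{F}|^{c}-1}{|\mathbb{F}|-1}$ hyperplanes (where $c = d - \dim W$) and that summing $F$ over a hyperplane again yields a solution in one lower dimension. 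The induction must be set up on $\dim X$ (or on $\dim W$) with the inductive hypothesis being exactly the product bound, and the base case $\dim = 1$ checked by hand (two distinct lines give the minimal nontrivial solution of length $|\mathbb{F}|+1$). Once the recursion $N(r) \geqslant (|\mathbb{F}|^r+1)\,N(r-1)$ with $N(0) = 1$ is established, the claimed bound $n \geqslant N(e) = \prod_{i=1}^{e}(|\mathbb{F}|^i+1)$ follows immediately, and the general $S = \Mat_e(\mathbb{F})$ case follows from the Morita reduction of the first paragraph.
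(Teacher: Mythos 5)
The paper does not prove this lemma; it is quoted from Dyshko [14, Lemma~6], and the paper's own Proposition~5.1, Theorem~5.1 and Remark~5.1 provide an independent (and more general) route to it via the M\"obius function. Your high-level plan---Morita-reduce to subspaces of $\mathbb{F}^d$, locate the maximal subspace $W$ appearing in a minimal nontrivial solution, and derive a product bound by iterating through dimensions---is the right framework and is close in spirit to the paper's approach, but two interlocking off-by-one errors make the sketch unsound as written.

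First, the recursion $N(r)\geqslant(q^r+1)N(r-1)$, $N(0)=1$, with $r=\dim W$, overcounts by one factor: the minimal nontrivial solution whose maximal element has dimension $r$ has length $\prod_{i=1}^{r-1}(q^i+1)$, not $\prod_{i=1}^{r}(q^i+1)$. Check $r=2$: the minimal solution pits the $q+1$ lines of $W$ against $q$ copies of $\{0\}$ together with one copy of $W$, so each side has length $q+1=\prod_{i=1}^{1}(q^i+1)$, not $(q^2+1)(q+1)$. Relatedly, the claim that $W$ on one side forces ``$|\mathbb{F}|^{\dim W}+1$ distinct subspaces'' on the other does not match this minimal solution ($q+1$ lines, not $q^2+1$ subspaces), and the claimed recursive branching would fail if one tried to carry it out. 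Second, ``maximality considerations pin $r$ to $e$'' misplaces where $e$ enters. Nontriviality forces the maximal element $Y$ of the solution to be \emph{non-cyclic} as an $S$-module (this is condition~(1) of Proposition~5.1), which over $S=Mat_e(\mathbb{F})$ means composition length at least $e+1$; thus $r\geqslant e+1$, not $r=e$. With the correct length formula $\prod_{i=1}^{r-1}(q^i+1)$, which is increasing in $r$, the minimum over admissible $r$ is attained at $r=e+1$ and equals $\prod_{i=1}^{e}(q^i+1)$. Your two errors cancel to produce the right final number, but neither intermediate claim holds. The rigorous version of your plan is exactly Theorem~5.1 combined with the subspace-lattice M\"obius function $\mu(U,W)=(-1)^{m-k}q^{\binom{m-k}{2}}$ (for $m=\dim W$, $k=\dim U$) and the $q$-binomial identity $\sum_{j=0}^{m}q^{\binom{j}{2}}\binom{m}{j}_q=\prod_{i=0}^{m-1}(1+q^i)=2\prod_{i=1}^{m-1}(q^i+1)$.
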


\setlength{\parindent}{0em}
\begin{lemma}([14, Propositions 2, 3, along with a special case of Theorem 3])
{Let $\mathbb{F}$ be a finite field, $e\in\mathbb{Z}^{+}$, and suppose that $S=Mat_{e}(\mathbb{F})$. Then, for $k,n\in\mathbb{Z}^{+}$, the left $S$-module $Mat_{e,k}(\mathbb{F})^{n}$ satisfies the MEP for Hamming weight if and only if either $k\leqslant e$ or $n\leqslant(\prod_{i=1}^{e}(|\mathbb{F}|^{i}+1))-1$ holds true.
}
\end{lemma}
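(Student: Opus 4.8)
The plan is to reduce the assertion, via Lemma 5.1, to a statement about triviality of solutions of the isometry equation, and then to combine the extremal bound of Lemma 5.2 with the two deeper inputs of \cite{14}. Set $q=|\mathbb{F}|$, $S=Mat_{e}(\mathbb{F})$ and $X=Mat_{e,k}(\mathbb{F})$, regarded as a left $S$-module by left multiplication. Since $S$ is Artinian simple with unique simple left module $\mathbb{F}^{e}$ (column vectors), $X$ is a direct sum of $k$ copies of $\mathbb{F}^{e}$; in particular $X$ is a finite semi-simple $S$-module, hence strong pseudo-injective, and its $S$-submodules are exactly the sets $U_{W}=\{x\in Mat_{e,k}(\mathbb{F})\mid\text{the row space of }x\text{ lies in }W\}$ as $W$ runs over the subspaces of $\mathbb{F}^{k}$, with $|U_{W}|=q^{e\dim W}$. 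By Lemma 5.1 with $M=X$, the module $X^{n}$ satisfies the MEP for Hamming weight if and only if for every linear code $C\subseteq X^{n}$ and every Hamming weight preserving $f\in\Hom_{S}(C,X^{n})$ the associated tuple $\mathcal{U}$---the kernels of the $n$ coordinate projections on $C$, paired with those of the coordinate projections precomposed with $f$---is a trivial solution. Thus it remains to show that this triviality holds for all $(C,f)$ of length $n$ exactly when $k\leqslant e$ or $n\leqslant(\prod_{i=1}^{e}(q^{i}+1))-1$.

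For the ``if'' part I would distinguish two cases. If $n\leqslant(\prod_{i=1}^{e}(q^{i}+1))-1$, then $\mathcal{U}$ is a solution consisting of $S$-submodules of the finite $S$-module $C$ and has length $n<\prod_{i=1}^{e}(q^{i}+1)$, so Lemma 5.2, applied with its ambient module taken to be $C$, forbids $\mathcal{U}$ from being nontrivial; hence $\mathcal{U}$ is trivial and $f$ extends. If $k\leqslant e$, then the socle of the alphabet $X$ is $(\mathbb{F}^{e})^{k}$ whereas the regular module ${}_{S}S$ over the Frobenius ring $S$ has socle $(\mathbb{F}^{e})^{e}$; since $k\leqslant e$, the extension criterion for Hamming weight over finite module alphabets (a form of Wood's theorem; here [14, Proposition 2]) gives the MEP for every $n$. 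Conceptually: when $k\leqslant e$ every subspace of $\mathbb{F}^{k}$ occurs as the row space of a matrix in $Mat_{e,k}(\mathbb{F})$, so the weight data of a solution with submodules of $X$ determines the underlying multiset of submodules by M\"obius inversion over the subspace lattice, and, using $\widehat{S}\cong{}_{S}S$, one pushes the same argument through for an arbitrary code $C$ to force $\mathcal{U}$ trivial.

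For the ``only if'' part, assume $k\geqslant e+1$ and $n\geqslant\prod_{i=1}^{e}(q^{i}+1)$. By the extremal construction of [14, Theorem 3] (only a special case being needed) there is a \textit{nontrivial} solution $((U_{1},\dots,U_{m}),(V_{1},\dots,V_{m}))$ of length $m=\prod_{i=1}^{e}(q^{i}+1)$ whose members are $S$-submodules of $Mat_{e,e+1}(\mathbb{F})$; one may normalize it to have trivial common intersection (quotient out $\bigcap_{i}U_{i}=\bigcap_{j}V_{j}$, which changes neither the length nor the nontriviality) and pad it with $n-m$ trivial pairs to length $n$. Then one realizes this solution by a code $C\subseteq X^{n}$ and a Hamming weight preserving $f\colon C\to X^{n}$ as follows: choose $\phi_{i}\in\Hom_{S}(Mat_{e,e+1}(\mathbb{F}),X)$ with $\ker\phi_{i}=U_{i}$ (possible because each $Mat_{e,e+1}(\mathbb{F})/U_{i}$ is semi-simple of length $\leqslant e+1\leqslant k$, hence embeds into $X$), similarly $\psi_{j}$ with $\ker\psi_{j}=V_{j}$, let $C$ be the image of $z\mapsto(\phi_{1}(z),\dots,\phi_{n}(z))$, and let $f$ post-compose the inverse of this isomorphism with $z\mapsto(\psi_{1}(z),\dots,\psi_{n}(z))$; since $Mat_{e,e+1}(\mathbb{F})\hookrightarrow X$ is split, the $U_{i},V_{j}$ as well as the tuple $\mathcal{U}$ of Lemma 5.1 attached to $(C,f)$ are, up to the isomorphism onto $C$, exactly the chosen solution. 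By Lemma 5.1, $f$ preserves Hamming weight, and, the solution being nontrivial, the last sentence of Lemma 5.1 shows $f$ does not extend to a Hamming weight isometry of $X^{n}$; so the MEP fails.

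I expect the real obstacle to be the extremal construction behind [14, Theorem 3]: exhibiting an explicit family of $\prod_{i=1}^{e}(q^{i}+1)$ submodules of $Mat_{e,e+1}(\mathbb{F})$ and verifying the isometry equation for it pointwise, i.e.\ attaining the lower bound of Lemma 5.2---this is the genuine geometric content, realized through a configuration of maximal totally isotropic subspaces. The case $k\leqslant e$, that is [14, Proposition 2], is the other substantive ingredient; granting both, the remaining manipulations are the routine bookkeeping with Lemma 5.1 indicated above.
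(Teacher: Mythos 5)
This lemma is not proved in the paper: it is stated as a result quoted from Dyshko, attributed to [14, Propositions~2, 3, and a special case of Theorem~3], with the surrounding text saying the authors are merely ``collecting'' his results, so there is no in-paper argument against which to compare your attempt. Taken on its own, your reduction is correct. The passage through Lemma~5.1 is exact; the direction $n\leqslant\prod_{i=1}^{e}(|\mathbb{F}|^{i}+1)-1$ is a valid application of Lemma~5.2 with the ambient module taken to be the finite code $C$ itself; the direction $k\leqslant e$ is the cyclic-socle criterion (which even appears independently in this paper as Proposition~5.2(1), proved there via [13, Theorem~2.1.3]); and your ``only if'' construction, transporting a minimal nontrivial solution into a code via homomorphisms with prescribed kernels, is the standard device, with the normalizations you invoke---replacing the ambient by its quotient by $\bigcap_{i}U_{i}=\bigcap_{j}V_{j}$, and padding with full-module pairs---correctly preserving nontriviality of the solution and the semisimple-of-length-$\leqslant e+1$ property needed for the embeddings into $Mat_{e,k}(\mathbb{F})$. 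What you have not supplied is precisely what you flag at the end: the extremal configuration of $\prod_{i=1}^{e}(|\mathbb{F}|^{i}+1)$ submodules of $Mat_{e,e+1}(\mathbb{F})$ behind [14, Theorem~3], and the full argument behind [14, Proposition~2]. Your proposal is therefore a correct framework around those two citations rather than an independent proof, which is in effect also what the paper does by stating the lemma as a quoted reference.
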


\setlength{\parindent}{2em}
Our main goal of this section is to generalize Lemma 5.2 to a wider range of lattices. More precisely, we let $X$ be a set, and let $\Gamma$ be a finite subset of $2^{X}$ such that $X\in\Gamma$ and $\Gamma$ is closed under intersection. By [45, Proposition 3.3.1], $(\Gamma,\subseteq)$ is a lattice. For any $A\subseteq X$, we write $\langle A\rangle_{\Gamma}\triangleq\bigcap_{(D\in\Gamma,A\subseteq D)}D$. Let $\mu:\Gamma\times\Gamma\longrightarrow\mathbb{Z}$ be the M\"{o}bius function of $(\Gamma,\subseteq)$. Following [45, Section 3.7], $\mu$ can be characterized by the following three properties:

\setlength{\parindent}{0em}
{\bf{(a)}}\,\,$\mu(A,B)=0$ for all $A,B\in\Gamma$ with $A\nsubseteq B$;

{\bf{(b)}}\,\,$\mu(C,C)=1$ for all $C\in\Gamma$;

{\bf{(c)}}\,\,$\sum_{(U\in\Gamma,C\subseteq U\subseteq D)}\mu(U,D)=0$ for all $C,D\in\Gamma$ with $C\neq D$.\\

\setlength{\parindent}{2em}
With the help of Properties (a), (b), (c), the proof of the following lemma is straightforward and hence omitted.

\setlength{\parindent}{0em}
\begin{lemma}
{{\bf{(1)}}\,\,$\bigcap_{D\in\Gamma}D\in\Gamma$. Moreover, for $C\in\Gamma$ such that $C\neq \bigcap_{D\in\Gamma}D$, it holds that $\sum_{(U\in\Gamma,U\subseteq C)}\mu(U,C)=0$ and
\begin{eqnarray*}
\begin{split}
\sum_{(U\in\Gamma,\mu(U,C)\geqslant1)}\mu(U,C)=\sum_{(U\in\Gamma,\mu(U,C)\leqslant-1)}-\mu(U,C)=\frac{1}{2}\left(\sum_{(U\in\Gamma,U\subseteq C)}|\mu(U,C)|\right).
\end{split}
\end{eqnarray*}

{\bf{(2)}}\,\,Let $Y\in\Gamma$, and let $E=\{x\in X\mid\langle \{x\}\rangle_{\Gamma}=Y\}$. Then, we have
\begin{equation}\sum_{(U\in\Gamma,U\subseteq Y)}\mu(U,Y)\cdot\mathbf{1}_{(X,U)}=\mathbf{1}_{(X,E)}.\end{equation}
Moreover, $Y\neq\langle\{x\}\rangle_{\Gamma}$ for all $x\in X$ if and only if it holds that
\begin{equation}\sum_{(U\in\Gamma,\mu(U,Y)\leqslant-1)}-\mu(U,Y)\cdot\mathbf{1}_{(X,U)}=\sum_{(U\in\Gamma,\mu(U,Y)\geqslant1)}\mu(U,Y)\cdot\mathbf{1}_{(X,U)}.\end{equation}

{\bf{(3)}}\,\,Suppose that $\emptyset\not\in\Gamma$. Let $I$ and $J$ be finite sets, and let $((U_{i}\mid i\in I),(V_{j}\mid j\in J))\in\Gamma^{I}\times\Gamma^{J}$ be a solution. Then, it holds that $|I|=|J|$.
}
\end{lemma}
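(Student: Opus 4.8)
My plan is to treat the three parts in order, since each builds on the previous one. For part (1), I would first observe that $\Gamma$ closed under intersection and containing $X$ means $\bigcap_{D\in\Gamma}D$ is the minimum element $\hat{0}$ of the lattice $(\Gamma,\subseteq)$; it lies in $\Gamma$ because it is a (possibly empty, but $X\in\Gamma$ so it is nonempty-indexed) intersection of members of $\Gamma$. For $C\neq\hat{0}$, the identity $\sum_{U\subseteq C}\mu(U,C)=0$ is exactly Property (c) applied with the pair $(\hat{0},C)$ (noting that $U\subseteq C$ is equivalent to $\hat{0}\subseteq U\subseteq C$ since $\hat{0}$ is the bottom), recalling that by Property (a) only $U\subseteq C$ contribute. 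Splitting this sum into the positive part and the negative part gives $\sum_{\mu(U,C)\geqslant1}\mu(U,C)=\sum_{\mu(U,C)\leqslant-1}(-\mu(U,C))$, and each of these equals half of $\sum_{U\subseteq C}|\mu(U,C)|$ since the two halves partition the support of $|\mu(\cdot,C)|$.

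For part (2), the key is the defining property of the Möbius function read as a "dual" Möbius inversion. Fix $Y\in\Gamma$ and a point $x\in X$, and evaluate both sides of (5.3) at $x$. Write $Z=\langle\{x\}\rangle_{\Gamma}$, the smallest member of $\Gamma$ containing $x$; then for $U\in\Gamma$, $x\in U\iff Z\subseteq U$. So the left-hand side at $x$ equals $\sum_{(U\in\Gamma,\,Z\subseteq U\subseteq Y)}\mu(U,Y)$, which by Property (c) (with the pair $(Z,Y)$) is $0$ unless $Z=Y$, in which case it is $\mu(Y,Y)=1$ by Property (b). That is precisely $\mathbf{1}_{(X,E)}(x)$, proving (5.3). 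The equivalence with (5.4): moving the negative-coefficient terms in (5.3) to the other side turns (5.3) into $\sum_{\mu(U,Y)\geqslant1}\mu(U,Y)\mathbf{1}_{(X,U)} - \sum_{\mu(U,Y)\leqslant-1}(-\mu(U,Y))\mathbf{1}_{(X,U)} = \mathbf{1}_{(X,E)}$, so (5.4) holds if and only if $\mathbf{1}_{(X,E)}$ is the zero function, i.e. $E=\emptyset$, which says no $x\in X$ has $\langle\{x\}\rangle_{\Gamma}=Y$.

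For part (3), assume $\emptyset\notin\Gamma$, so every member of $\Gamma$ is nonempty, and in particular $\hat{0}=\bigcap_{D\in\Gamma}D$ is a nonempty minimum element; pick any $x_0\in\hat{0}$. The solution hypothesis gives $\sum_{i\in I}\mathbf{1}_{(X,U_i)}=\sum_{j\in J}\mathbf{1}_{(X,V_j)}$ as functions on some ambient $X$. Since $x_0\in\hat{0}\subseteq U$ for every $U\in\Gamma$, in particular $x_0\in U_i$ for all $i$ and $x_0\in V_j$ for all $j$; evaluating the isometry equation at $x_0$ gives $|I|=\sum_{i}\mathbf{1}_{(X,U_i)}(x_0)=\sum_j\mathbf{1}_{(X,V_j)}(x_0)=|J|$, as desired.

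I expect part (2), and specifically the clean identification of the left side of (5.3) evaluated at a point with a truncated Möbius sum, to be the only place requiring care: one must be precise that "$x\in U$" for $U\in\Gamma$ is governed by the closure element $\langle\{x\}\rangle_{\Gamma}$ and that this closure is itself in $\Gamma$ (it is, being an intersection of members of $\Gamma$, using $X\in\Gamma$ so the defining intersection is over a nonempty family). Parts (1) and (3) are then essentially bookkeeping with Properties (a)–(c) and a single evaluation at a distinguished point.
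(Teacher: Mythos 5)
Your proof is correct and uses exactly the machinery the paper points to (Properties (a)--(c) of the M\"{o}bius function, together with the observation that $\Gamma$ is a finite meet-closed family with top $X$, so it has a bottom $\hat{0}=\bigcap_{D\in\Gamma}D\in\Gamma$ and every $\langle\{x\}\rangle_{\Gamma}$ lies in $\Gamma$). The paper omits the proof as routine, and your argument is the standard one: part~(1) is Property~(c) applied to the pair $(\hat{0},C)$ plus the split of a zero sum into positive and negative halves; part~(2) is the pointwise evaluation reducing the left side of~(5.3) at $x$ to $\sum_{(U\in\Gamma,\,Z\subseteq U\subseteq Y)}\mu(U,Y)$ with $Z=\langle\{x\}\rangle_{\Gamma}$, handled by Properties (b) and (c) (including the empty-sum case $Z\nsubseteq Y$), and the equivalence with~(5.4) is the rearrangement making $\mathbf{1}_{(X,E)}=0$; part~(3) evaluates the isometry equation at a point of $\hat{0}$, which is nonempty since $\emptyset\notin\Gamma$ and $\hat{0}\in\Gamma$. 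No gaps.
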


\setlength{\parindent}{2em}
Now we characterize nontrivial solutions that satisfy certain minimal condition.

\setlength{\parindent}{0em}
\begin{proposition}
{Suppose that $\emptyset\not\in\Gamma$. Let $K$ and $L$ be finite sets, and fix a non-trivial solution $((U_{i}\mid i\in K),(V_{j}\mid j\in L))\in\Gamma^{K}\times\Gamma^{L}$. Assume that for any $I\subsetneqq K$, $J\subsetneqq L$, $((U_{i}\mid i\in I),(V_{j}\mid j\in J))$ is not a non-trivial solution, and for any $W\in\Gamma$ such that $(\bigcup_{i\in K}U_{i})\cup(\bigcup_{j\in L}V_{j})\nsubseteq W$, $((U_{i}\cap W\mid i\in K),(V_{j}\cap W\mid j\in L))$ is a trivial solution. Then, there uniquely exists $Y\in\{U_{i}\mid i\in K\}\cup\{V_{j}\mid j\in L\}$ such that $U_{i},V_{j}\subseteq Y$ for all $i\in K$, $j\in L$. Moreover, it holds that:

{\bf{(1)}}\,\,$Y\neq\langle\{x\}\rangle_{\Gamma}$ for all $x\in X$;

{\bf{(2)}}\,\,$|K|=|L|=\frac{1}{2}(\sum_{(W\in\Gamma,W\subseteq Y)}|\mu(W,Y)|)$;

{\bf{(3)}}\,\,Assume that $Y\in\{V_{j}\mid j\in L\}$. Then, it holds that $\{U_{i}\mid i\in K\}=\{C\in\Gamma\mid \mu(C,Y)\leqslant-1\}$, $\{V_{j}\mid j\in L\}=\{D\in\Gamma\mid \mu(D,Y)\geqslant1\}$, $|\{i\in K\mid U_{i}=C\}|=-\mu(C,Y)$ for all $C\in\Gamma$ with $\mu(C,Y)\leqslant-1$, and $|\{j\in L\mid V_{j}=D\}|=\mu(D,Y)$ for all $D\in\Gamma$ with $\mu(D,Y)\geqslant1$.
}
\end{proposition}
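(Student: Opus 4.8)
The plan is to encode the solution by the function $c\colon\Gamma\longrightarrow\mathbb{Z}$ given by $c(U)=|\{i\in K:U_i=U\}|-|\{j\in L:V_j=U\}|$, so that the isometry equation becomes $\sum_{U\in\Gamma}c(U)\mathbf{1}_{(X,U)}=0$, and to write $T\triangleq(\bigcup_{i\in K}U_i)\cup(\bigcup_{j\in L}V_j)$ for the support. I would first use the index-minimality hypothesis to see that no element of $\Gamma$ occurs both among the $U_i$ and among the $V_j$: deleting a common member from $K$ and from $L$ leaves a solution that is still non-trivial (removing a shared member cannot turn a distinct pair of multisets into an equal one), on strictly smaller index sets. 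Consequently the whole multiset datum is recovered from $c$: one has $|\{i:U_i=U\}|=\max(c(U),0)$, $|\{j:V_j=U\}|=\max(-c(U),0)$, $T=\bigcup\{U:c(U)\neq0\}$, and $|K|=\sum_U\max(c(U),0)$, $|L|=\sum_U\max(-c(U),0)$.

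Next I would run the relation $\sum_Uc(U)\mathbf{1}_{(X,U)}=0$ through the partition of $X$ by the sets $E_Z=\{x\in X:\langle\{x\}\rangle_{\Gamma}=Z\}$, $Z\in\Gamma$. Since $\mathbf{1}_{(X,U)}=\sum_{Z\in\Gamma,\,Z\subseteq U}\mathbf{1}_{(X,E_Z)}$ for $U\in\Gamma$ (immediate by grouping the points of $U$ according to $\langle\{x\}\rangle_{\Gamma}$; equivalently, the M\"obius inverse of $(5.3)$), the relation rewrites as $\sum_{Z\in\Gamma}\bigl(\sum_{U\supseteq Z}c(U)\bigr)\mathbf{1}_{(X,E_Z)}=0$, and disjointness of the $E_Z$ forces $g(Z)\triangleq\sum_{U\in\Gamma,\,U\supseteq Z}c(U)=0$ whenever $E_Z\neq\emptyset$. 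M\"obius inversion on $(\Gamma,\subseteq)$ then gives $c(U)=\sum_{Z\supseteq U}\mu(U,Z)g(Z)=\sum_{Z\supseteq U,\,E_Z=\emptyset}\mu(U,Z)g(Z)$. Put $\mathcal{Y}_0=\{Z\in\Gamma:E_Z=\emptyset,\ g(Z)\neq0\}$; then $\mathcal{Y}_0\neq\emptyset$, for otherwise $c\equiv0$ and the solution is trivial.

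The heart of the argument is to show $|\mathcal{Y}_0|=1$, and this is where the support-minimality hypothesis enters. For a maximal $Y\in\mathcal{Y}_0$ one has $c(Y)=g(Y)\neq0$ (the only surviving term in the inversion is $Z=Y$), so $Y\in\{U_i\}\cup\{V_j\}$ and hence $Y\subseteq T$. If $\mathcal{Y}_0$ had a further element $Z'$, a short case distinction — $Z'\nsubseteq Y$, in which case a maximal element of $\mathcal{Y}_0$ above $Z'$ forces $T\nsubseteq Y$ and one takes $W=Y$; or $Z'\subsetneq Y$, in which case $Y\subseteq T$ forces $T\nsubseteq Z'$ and one takes $W=Z'$ — produces $W\in\mathcal{Y}_0$ with $T\nsubseteq W$; then support-minimality makes $((U_i\cap W),(V_j\cap W))$ trivial, and comparing the numbers of copies of $W$ in the two multisets gives $\sum_{U\supseteq W}c(U)=g(W)=0$, contradicting $W\in\mathcal{Y}_0$. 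Hence $\mathcal{Y}_0=\{Y\}$, so $c(U)=g(Y)\mu(U,Y)$; since $\mu(Y,Y)=1$ this gives $T=\bigcup\{U:\mu(U,Y)\neq0\}=Y$, and $Y$ is the unique element of $\{U_i\}\cup\{V_j\}$ dominating all the $U_i,V_j$ (any such element must equal $\langle T\rangle_{\Gamma}=Y$). Moreover $E_Y=\emptyset$ is exactly assertion (1), and since $\emptyset\notin\Gamma$ this also shows $Y$ is not the bottom $\bigcap_{D\in\Gamma}D$ of $\Gamma$, so Lemma 5.4(1) applies at $Y$.

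It remains to pin down $|g(Y)|=1$ and read off (2),(3). If $|g(Y)|\geq2$, then $c/|g(Y)|$ is the coefficient vector of a non-trivial solution (it still satisfies the isometry equation, since $(5.3)$ with $E_Y=\emptyset$ gives $\sum_U\mu(U,Y)\mathbf{1}_{(X,U)}=0$, and it is nonzero because $\mu(Y,Y)=1$) that uses each generator $U$ at most $|\mu(U,Y)|$ times, hence sits on strictly smaller index sets — both of them non-empty, using that $\mu(\cdot,Y)$ takes both signs, which follows from $\sum_{U\subseteq Y}\mu(U,Y)=0$ and $\mu(Y,Y)=1$ in Lemma 5.4(1) — contradicting index-minimality. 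Thus $g(Y)=\pm1$: the case $g(Y)=1$ puts $Y$ among the $U_i$ and $g(Y)=-1$ puts $Y$ among the $V_j$, mutually exclusively by the no-overlap fact. Then $Y\in\{V_j\}$ is the case $g(Y)=-1$, where $|\{i:U_i=C\}|=\max(-c(C),0)=-\mu(C,Y)$ for $\mu(C,Y)\le-1$ (and $0$ otherwise) and $|\{j:V_j=D\}|=\max(c(D),0)=\mu(D,Y)$ for $\mu(D,Y)\ge1$ (and $0$ otherwise), which is (3); and in either case $|K|=\sum_U\max(c(U),0)$ and $|L|=\sum_U\max(-c(U),0)$ both equal $\sum_{\mu(U,Y)\ge1}\mu(U,Y)=\sum_{\mu(U,Y)\le-1}(-\mu(U,Y))=\tfrac12\sum_{W\subseteq Y}|\mu(W,Y)|$ by Lemma 5.4(1), giving (2). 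I expect the bookkeeping in the case analysis for $|\mathcal{Y}_0|=1$ to be the main obstacle; the rest is a direct dictionary between the multiset data and the function $c$.
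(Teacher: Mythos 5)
Your proof is correct, and it uses the same key ingredients as the paper's — M\"obius inversion on $(\Gamma,\subseteq)$, disjointness $\{U_i\}\cap\{V_j\}=\emptyset$ from index-minimality, the restricted-triviality hypothesis to force a unique top, and index-minimality again to divide out the top multiplicity — but it organizes them differently enough to be worth noting. The paper keeps the four counting functions $f,g,\varphi,\psi$ separate, first proves (via restricted triviality applied at a maximal element of $\Delta=\{U_i\}\cup\{V_j\}$) that this maximal element $Y$ is in fact the greatest element of $\Delta$, then inverts only on the interval $[D,Y]$, using triviality of the restrictions to $E\subsetneq Y$ to identify $g(E)=\psi(E)$. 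You instead collapse the data into the single signed multiplicity $c$, run the isometry equation through the partition $\{E_Z\mid Z\in\Gamma\}$ to extract $\sum_{U\supseteq Z}c(U)=0$ whenever $E_Z\neq\emptyset$ directly from the disjointness of the $E_Z$, invert on all of $\Gamma$, and isolate the ``sources'' $\mathcal{Y}_0$; your case analysis establishing $|\mathcal{Y}_0|=1$ is where support-minimality enters, playing the role of the paper's ``$Y$ is greatest'' step. Your route is a bit more top-down (you discover $Y$ rather than fixing it first) and arguably slicker bookkeeping; both arrive at the same identity $c(U)=g(Y)\mu(U,Y)$ and finish by scaling against index-minimality.

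One small slip in the final paragraph: with $g(Y)=-1$ you have $c(U)=-\mu(U,Y)$, so by your own dictionary $|\{i:U_i=C\}|=\max(c(C),0)$ (not $\max(-c(C),0)$) and $|\{j:V_j=D\}|=\max(-c(D),0)$ (not $\max(c(D),0)$); the right-hand sides $-\mu(C,Y)$ and $\mu(D,Y)$ you report are nevertheless correct, so this is only a typo in the intermediate expressions and does not affect the argument.
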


\begin{proof}
Throughout the proof, we let $\Delta=\{U_{i}\mid i\in K\}\cup\{V_{j}\mid j\in L\}$. By Lemma 5.4, we have $|K|=|L|\geqslant1$. Consider $r\in K$, $t\in L$. Suppose that $U_{r}=V_{t}$. Then, $((U_{i}\mid i\in K-\{r\}),(V_{j}\mid j\in L-\{t\}))$ is a solution, which is necessarily trivial since $K-\{r\}\subsetneqq K$, $L-\{t\}\subsetneqq L$. Along with $U_{r}=V_{t}$, we deduce that $((U_{i}\mid i\in K),(V_{j}\mid j\in L))$ is a trivial solution, a contradiction. It follows that $U_{r}\neq V_{t}$, which further implies that
\begin{equation}\{U_{i}\mid i\in K\}\cap\{V_{j}\mid j\in L\}=\emptyset.\end{equation}
Now let $Y$ be a maximal element of $\Delta$. Without loss of generality, we assume that $Y\in\{V_{j}\mid j\in L\}$. Suppose that $(\bigcup_{i\in K}U_{i})\cup(\bigcup_{j\in L}V_{j})\nsubseteq Y$. Then, $((U_{i}\cap Y\mid i\in K),(V_{j}\cap Y\mid j\in L))$ is a trivial solution. In particular, we can choose $r\in K$, $t\in L$ with $U_{r}\cap Y=V_{t}=Y$. It follows from the maximality of $Y$ that $U_{r}=Y=V_{t}$, a contradiction to (5.5). Hence $Y$ is the greatest element of $\Delta$. Now define $f,g,\varphi,\psi:\Gamma\longrightarrow\mathbb{N}$ as $f(C)=|\{i\in K\mid U_{i}=C\}|$, $g(C)=|\{i\in K\mid C\subseteq U_{i}\}|$, $\varphi(C)=|\{j\in L\mid V_{j}=C\}|$, $\psi(C)=|\{j\in L\mid C\subseteq V_{j}\}|$. We note that $f(Y)=g(Y)=0$, $\varphi(Y)=\psi(Y)\triangleq e\in\mathbb{Z}^{+}$, and $g(Q)=\psi(Q)=0$ for all $Q\in\Gamma$ with $Q\nsubseteq Y$. Consider an arbitrary $D\in\Gamma$ with $D\subseteq Y$. By the M\"{o}bius inversion formula (see [45, Proposition 3.7.2]), we have
$$f(D)=\sum_{(E\in\Gamma,D\subseteq E\subseteq Y)}\mu(D,E)g(E),~\varphi(D)=\sum_{(E\in\Gamma,D\subseteq E\subseteq Y)}\mu(D,E)\psi(E).$$
For $E\in\Gamma$ with $D\subseteq E\subsetneqq Y$, since $((U_{i}\cap E\mid i\in K),(V_{j}\cap E\mid j\in L))$ is a trivial solution, we have $|\{i\in K\mid U_{i}\cap E=E\}|=|\{j\in L\mid V_{j}\cap E=E\}|$, which further implies that $g(E)=\psi(E)$. The above discussion yields that $\varphi(D)-f(D)=\mu(D,Y)\cdot e$. In addition, from (5.5), we infer that $0\in\{f(D),\varphi(D)\}$. Hence via some straightforward verification, we deduce that $\{U_{i}\mid i\in K\}=\{C\in\Gamma\mid \mu(C,Y)\leqslant-1\}$, $f(C)=-\mu(C,Y)\cdot e$ for all $C\in\Gamma$ with $\mu(C,Y)\leqslant-1$; $\{V_{j}\mid j\in L\}=\{D\in\Gamma\mid \mu(D,Y)\geqslant1\}$, and $\varphi(D)=\mu(D,Y)\cdot e$ for all $D\in\Gamma$ with $\mu(D,Y)\geqslant1$. Noticing that $\sum_{i\in K}\mathbf{1}_{(X,U_{i})}=\sum_{j\in L}\mathbf{1}_{(X,V_{j})}$, we have
\begin{equation}\sum_{(C\in\Gamma,\mu(C,Y)\leqslant-1)}-\mu(C,Y)\cdot\mathbf{1}_{(X,C)}=\sum_{(D\in\Gamma,\mu(D,Y)\geqslant1)}\mu(D,Y)\cdot\mathbf{1}_{(X,D)}.\end{equation}
It then follows from Lemma 5.4 that (1) holds true. Now we choose $I\subseteq K$, $J\subseteq L$ such that $|\{i\in I\mid U_{i}=C\}|=-\mu(C,Y)$ for all $C\in\Gamma$ with $\mu(C,Y)\leqslant-1$, and $|\{j\in J\mid V_{j}=D\}|=\mu(D,Y)$ for all $D\in\Gamma$ with $\mu(D,Y)\geqslant1$. We note that $|K|=e\cdot|I|$, $|L|=e\cdot|J|$. By (5.5) and (5.6), $((U_{i}\mid i\in I),(V_{j}\mid j\in J))$ is a non-trivial solution. Hence either $I=K$ or $J=L$ holds true, which implies that $e=1$, and (3) immediately follows. Finally, (2) follows from (1), (3) and (1) of Lemma 5.4, as desired.
\end{proof}

\setlength{\parindent}{2em}
Combining Proposition 5.1 and Lemma 5.4, we immediately derive the minimal length of nontrivial solutions with respect to $\Gamma$. The following theorem is the main result of this section.

\setlength{\parindent}{0em}
\begin{theorem}
{Let $\Lambda=\{C\in\Gamma\mid\text{$C\neq\langle\{x\}\rangle_{\Gamma}$ for all $x\in X$}\}$. Suppose that $\emptyset\not\in\Gamma$, $\Lambda\neq\emptyset$, and let $n\triangleq\min\{\frac{1}{2}(\sum_{(U\in\Gamma,U\subseteq W)}|\mu(U,W)|)\mid W\in\Lambda\}$. Then, there exists a non-trivial solution $((U_1,\dots,U_{n}),(V_1,\dots,V_{n}))\in\Gamma^{n}\times\Gamma^{n}$. Moreover, let $p,q\in\mathbb{N}$ and let $((A_{1},\dots,A_{p}),(B_{1},\dots,B_{q}))\in\Gamma^{p}\times\Gamma^{q}$ be a non-trivial solution. Then, we have $p=q\geqslant n$.
}
\end{theorem}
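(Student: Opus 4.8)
The plan is to split the statement into an existence part and a lower-bound (minimality) part, and to obtain both by feeding carefully chosen data into Proposition 5.1 together with part (2) of Lemma 5.4. For the existence part, I would first pick $W \in \Lambda$ attaining the minimum $n = \frac{1}{2}\sum_{(U\in\Gamma,\,U\subseteq W)}|\mu(U,W)|$; since $W \in \Lambda$ means $W \neq \langle\{x\}\rangle_\Gamma$ for all $x \in X$, equation (5.4) of Lemma 5.4(2) holds with $Y = W$. That identity says precisely that
\[
\Big(\big(U \text{ with multiplicity } -\mu(U,W) : U\in\Gamma,\ \mu(U,W)\leqslant -1\big),\ \big(V \text{ with multiplicity } \mu(V,W) : V\in\Gamma,\ \mu(V,W)\geqslant 1\big)\Big)
\]
is a solution to the isometry equation, and by Lemma 5.4(1) both index sets have cardinality exactly $n$ (that is the content of the displayed identity in Lemma 5.4(1): each of the two one-sided sums equals $\frac{1}{2}\sum_{U\subseteq W}|\mu(U,W)|$). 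It is non-trivial because $W \notin \{U : \mu(U,W)\geqslant 1\} \cup \{U : \mu(U,W)\leqslant -1\}$ while $W$ is the greatest element appearing (every $U$ with $\mu(U,W)\neq 0$ satisfies $U \subseteq W$), so the ``top'' set $W$ is missing from both tuples — more simply, the two lists are disjoint as multisets by the sign condition, yet both are nonempty since $W \neq \bigcap_{D\in\Gamma}D$ forces $\sum_{U\subseteq W}|\mu(U,W)| \geqslant 2$ via Lemma 5.4(1). Enumerating these multisets as $(U_1,\dots,U_n)$ and $(V_1,\dots,V_n)$ gives the desired non-trivial solution in $\Gamma^n \times \Gamma^n$.

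For the lower bound, let $((A_1,\dots,A_p),(B_1,\dots,B_q)) \in \Gamma^p \times \Gamma^q$ be an arbitrary non-trivial solution. That $p = q$ is immediate from Lemma 5.4(3), since $\emptyset \notin \Gamma$. To show $p \geqslant n$, the idea is to pass to a minimal non-trivial ``sub-solution'' and apply Proposition 5.1. Concretely, among all pairs $(I,J)$ with $I \subseteq [1,p]$, $J \subseteq [1,q]$ such that $((A_i \mid i\in I),(B_j \mid j\in J))$ is a non-trivial solution, choose one minimizing $|I| + |J|$ (the full pair is a candidate, so the set is nonempty); then pass to this sub-solution and relabel, so we may assume that no proper sub-selection of indices yields a non-trivial solution. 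Next, to arrange the second hypothesis of Proposition 5.1, I would further restrict the ground set: among all $W \in \Gamma$ with $(\bigcup_i A_i) \cup (\bigcup_j B_j) \subseteq W$, intersecting everything with $W$ changes nothing, but if some $W \in \Gamma$ with $(\bigcup A_i)\cup(\bigcup B_j) \nsubseteq W$ produced a non-trivial solution $((A_i \cap W),(B_j \cap W))$, one could replace the original data by this one (all intersected sets lie in $\Gamma$ since $\Gamma$ is closed under intersection, and $\emptyset \notin \Gamma$ must be re-checked — here the minimality-of-index argument is reapplied). Iterating, and using that $\Gamma$ is finite so the process terminates, we arrive at data satisfying both hypotheses of Proposition 5.1, with index-set size no larger than the original $p = q$.

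Applying Proposition 5.1 to that reduced data yields a greatest element $Y \in \Gamma$ with $Y \neq \langle\{x\}\rangle_\Gamma$ for all $x$ — so $Y \in \Lambda$ — and, by part (2) of that proposition, the reduced index sets have size $\frac{1}{2}\sum_{(W\in\Gamma,\,W\subseteq Y)}|\mu(W,Y)| \geqslant n$ by the definition of $n$ as the minimum over $\Lambda$. Since the reduction only shrank the index sets, $p = q \geqslant n$, completing the proof. The main obstacle I anticipate is the bookkeeping in the reduction step: one must check that after intersecting with some $W$ and then pruning indices, the hypotheses of Proposition 5.1 (non-triviality, $\emptyset \notin \Gamma$ for the relevant ambient, minimality under proper sub-selection, and the ``trivial on strict restrictions'' condition) all hold simultaneously, which requires organizing the two independent minimizations (over index subsets and over restricting elements $W \in \Gamma$) so that neither undoes the other — most cleanly done by a single minimization of $|I|+|J|$ over all triples consisting of a restricting element $W$ and index subsets $I,J$ yielding a non-trivial solution inside $W$.
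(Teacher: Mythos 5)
Your approach is essentially the paper's, which obtains the theorem by combining Lemma 5.4 with Proposition 5.1: existence comes from (5.4) applied to a $W\in\Lambda$ realizing the minimum $n$, and the lower bound by reducing an arbitrary non-trivial solution to one meeting the hypotheses of Proposition 5.1 and then reading off parts (1) and (2). Two points deserve tightening. First, the parenthetical claim that $W\notin\{U:\mu(U,W)\geqslant 1\}\cup\{U:\mu(U,W)\leqslant -1\}$ is false, since $\mu(W,W)=1$ places $W$ in the first set; what actually carries the argument is your ``more simply'' remark: the two multisets are supported on disjoint subsets of $\Gamma$ by the sign of $\mu$, and both are nonempty by Lemma 5.4(1) because $\bigcap_{D\in\Gamma}D$ equals $\langle\{x\}\rangle_{\Gamma}$ for any $x$ in it and hence is never in $\Lambda$, so $W\neq\bigcap_{D\in\Gamma}D$. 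Second, your proposed ``single minimization of $|I|+|J|$ over triples $(W,I,J)$'' does not by itself give hypothesis (b) of Proposition 5.1: if intersecting further with some $W''$ produced a non-trivial solution, you would get a new triple with $W'=W\cap W''$ and the \emph{same} $|I|+|J|$, so minimality of $|I|+|J|$ alone yields no contradiction. You must break ties by also minimizing $W$ (e.g.\ by the number of members of $\Gamma$ contained in it), or run your iterative reduction against an explicit well-founded measure such as $|K|+|L|$ plus the total number of members of $\Gamma$ lying strictly below the $U_{i}$'s and $V_{j}$'s, which strictly decreases at every step; the remark ``$\Gamma$ is finite so the process terminates'' is not itself a termination proof, particularly as members of $\Gamma$ may be infinite sets. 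With either fix in place, Proposition 5.1 produces $Y\in\Lambda$ with $p\geqslant |K|=\frac{1}{2}\sum_{(U\in\Gamma,U\subseteq Y)}|\mu(U,Y)|\geqslant n$, and the proof closes as you describe.
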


\setlength{\parindent}{2em}
The assumptions that $\emptyset\not\in\Gamma$, $\Lambda\neq\emptyset$ in Theorem 5.1 are essential. Since if $\emptyset\in\Gamma$, then we have $\mathbf{1}_{(X,\emptyset)}=0$, which obviously induces a nontrivial solution; and if $\Lambda=\emptyset$, then all the solutions are necessarily trivial (\textit{c.f.}, [13, Lemma 2.4.1] and [14, Lemma 2]). Moreover, inspired by Theorem 5.1, we give the following definition.

\setlength{\parindent}{0em}
\begin{definition}
{Suppose that $X$ is a finite left $S$-module that has a non-cyclic $S$-submodule, and $(\Gamma,\subseteq)$ is the $S$-submodule lattice of $X$. Let $\Lambda$ be the set of all the non-cyclic $S$-submodules of $X$. We define $\zeta_{S}(X)\triangleq\min\{\frac{1}{2}(\sum_{(U\in\Gamma,U\subseteq W)}|\mu(U,W)|)\mid W\in\Lambda\}$.
}
\end{definition}

\setlength{\parindent}{2em}
\begin{remark}
{Consider a finite field $\mathbb{F}$ and $e\in\mathbb{Z}^{+}$. Suppose that $S=Mat_{e}(\mathbb{F})$, $X$ is a finite non-cyclic left $S$-module, and $(\Gamma,\subseteq)$ is the $S$-submodule lattice of $X$. Then, via some computation, we have $\zeta_{S}(X)=\prod_{i=1}^{e}(|\mathbb{F}|^{i}+1)$. Hence Theorem 5.1 generalizes Lemma 5.2.
}
\end{remark}

At the end of this section, we use Theorem 5.1 and some known results in \cite{13,29} to establish some sufficient conditions for Hamming weight preserving maps to be extendable, which will also be applied to weighted poset metric in the next section.

\setlength{\parindent}{0em}
\begin{lemma}
{Let $G$ be a group, and let $A,B,C,D$ be subgroups of $G$. Then, the following three statements are equivalent to each other:

{\bf{(1)}}\,\,Either $(A=C,B=D)$ or $(A=D,B=C)$ holds true;

{\bf{(2)}}\,\,$\mathbf{1}_{(G,A)}+\mathbf{1}_{(G,B)}=\mathbf{1}_{(G,C)}+\mathbf{1}_{(G,D)}$;

{\bf{(3)}}\,\,$A\cup B=C\cup D$, $A\cap B=C\cap D$.

}
\end{lemma}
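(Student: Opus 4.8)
The plan is to prove the cycle of implications $(1)\Longrightarrow(2)\Longrightarrow(3)\Longrightarrow(1)$, since $(1)\Longrightarrow(2)$ is immediate and the real content lies in deducing the group-theoretic conclusion $(1)$ from either of the weaker-looking hypotheses. For $(1)\Longrightarrow(2)$: if $A=C$ and $B=D$ then the two sides of the indicator identity are literally the same function, and symmetrically if $A=D$, $B=C$; nothing is needed here. For $(2)\Longrightarrow(3)$: evaluating the identity $\mathbf{1}_{(G,A)}+\mathbf{1}_{(G,B)}=\mathbf{1}_{(G,C)}+\mathbf{1}_{(G,D)}$ pointwise, at any $g\in G$ the left side takes a value in $\{0,1,2\}$ equal to the number of the two sets $A,B$ containing $g$, and similarly for the right. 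The value is $2$ at $g$ exactly when $g\in A\cap B$ (resp.\ $g\in C\cap D$), so $A\cap B=C\cap D$; the value is $\geqslant1$ exactly when $g\in A\cup B$ (resp.\ $g\in C\cup D$), so $A\cup B=C\cup D$. This step uses nothing about the group structure, only that indicator functions add the way counting does.

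The substantive step is $(3)\Longrightarrow(1)$, and here the subgroup hypothesis is essential (the analogous statement for arbitrary subsets is false). First I would record the elementary fact that if $H$ and $H'$ are subgroups of $G$ with $H\cup H'$ also a subgroup, then $H\subseteq H'$ or $H'\subseteq H$: otherwise pick $x\in H\setminus H'$ and $y\in H'\setminus H$; then $xy\in H\cup H'$ forces $xy\in H$ (giving $y\in H$) or $xy\in H'$ (giving $x\in H'$), a contradiction. Now, from $A\cap B=C\cap D$ and $A\cup B=C\cup D=:U$, note $U$ is a subgroup (it equals $A\cup B$, but also it is a union of two subgroups that happens to be closed — one can see this directly from $U=C\cup D$ being forced to be a subgroup in the relevant cases, or simply argue that the common value $A\cup B$ being equal to $C\cup D$ and both sides appearing symmetrically lets us reduce to the two-subgroup lemma applied to each pair). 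Applying the two-subgroup fact to $\{A,B\}$ we may assume WLOG $A\subseteq B$, so $U=B$ and $A\cap B=A$; hence $C\cap D=A$ and $C\cup D=B$. Applying it again to $\{C,D\}$, WLOG $C\subseteq D$, so $C\cap D=C$ and $C\cup D=D$, giving $C=A$ and $D=B$. Undoing the WLOG choices yields exactly the disjunction in $(1)$.

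The main obstacle — really the only place one must be careful — is making sure the union $A\cup B$ is genuinely a subgroup before invoking the two-subgroup lemma. This is where hypothesis $(3)$ is used in its full strength: $A\cup B$ need not be a subgroup in general, but the equality $A\cup B=C\cup D$ does not by itself force it either. The clean way around this is to not try to show $A\cup B$ is a subgroup abstractly, but instead to argue by cases on the containment structure within $\{A,B\}$ and within $\{C,D\}$ \emph{using} the two-subgroup lemma in the contrapositive form: if neither $A\subseteq B$ nor $B\subseteq A$, derive a contradiction with $A\cup B=C\cup D$ together with the comparability forced on $C,D$ (or directly: take $x\in A\setminus B$, $y\in B\setminus A$, locate $xy$ in $C\cup D$, and trace through, using $A\cap B=C\cap D$, to contradict one of the incomparabilities). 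In practice the slickest writeup is: show that among $A,B,C,D$ the hypothesis $(3)$ forces one of $A,B$ to contain the other and likewise one of $C,D$, then the four-way case check collapses immediately. I expect the whole argument to fit in well under a page, with the indicator-counting remarks dispatched in a sentence each.
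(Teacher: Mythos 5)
Your proof of $(1)\Rightarrow(2)\Rightarrow(3)$ is fine and matches what the paper treats as routine; the problem is in $(3)\Rightarrow(1)$. The paper invokes a \emph{three}-subgroup fact: for subgroups $U,V,W\leqslant G$, if $U\subseteq V\cup W$ then $U\subseteq V$ or $U\subseteq W$. You instead work with the weaker \emph{two}-subgroup fact (a union $H\cup H'$ of subgroups is a subgroup iff one contains the other), which requires knowing $A\cup B$ is a subgroup. You correctly notice that this is exactly what you cannot establish from $(3)$, but your proposed escape routes do not close the gap. The ``slickest writeup'' you describe --- show that $(3)$ forces one of $A,B$ to contain the other, and likewise for $C,D$ --- asserts something false: take $G=(\mathbb{Z}/2)^2$, $A=\langle(1,0)\rangle$, $B=\langle(0,1)\rangle$, $C=A$, $D=B$; then $(3)$ holds (trivially) yet $A$ and $B$ are incomparable. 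Your ``direct'' fallback also breaks down: you propose to ``locate $xy$ in $C\cup D$'' for $x\in A\setminus B$, $y\in B\setminus A$, but since $A\cup B$ need not be a subgroup, $xy$ need not lie in $A\cup B=C\cup D$ at all, so there is nothing to trace through.

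There are two standard repairs, both amounting to upgrading to the three-subgroup form. The most direct is to prove that form outright (if $U\not\subseteq V$ and $U\not\subseteq W$, pick $x\in U\setminus V\subseteq W$ and $y\in U\setminus W\subseteq V$, then $xy\in U\subseteq V\cup W$ gives a contradiction either way) and then apply it to each of $A,B,C,D$ against the opposite union, finishing with a short case check using $A\cap B=C\cap D$. Alternatively, your two-subgroup lemma does suffice if you first \emph{intersect with a fixed subgroup}: $A=A\cap(C\cup D)=(A\cap C)\cup(A\cap D)$ \emph{is} a subgroup, so the two-subgroup lemma inside $A$ gives $A\subseteq C$ or $A\subseteq D$, and symmetrically for $B,C,D$ --- this is really the three-subgroup lemma in disguise, and it is the intersection step your writeup is missing.
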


\begin{proof}
We infer that $(1)\Longrightarrow(2)$ and $(2)\Longrightarrow(3)$ are straightforward to verify, and $(3)\Longrightarrow(1)$ follows from the fact that for three subgroups $U,V,W\leqslant G$, if $U\subseteq V\cup W$, then either $U\subseteq V$ or $U\subseteq W$ holds true. We omit the details of the verification.
\end{proof}

\setlength{\parindent}{0em}
\begin{proposition}
{Let $M$ be a strong pseudo-injective left $S$-module, and fix $n\in\mathbb{Z}^{+}$. Let $C$ be an $S$-submodule of $M^{n}$, and assume that one of the following five conditions holds:

{\bf{(1)}}\,\,$C$ is finite and $\soc_{S}(M)$ is a cyclic left $S$-module;

{\bf{(2)}}\,\,All the $S$-submodules of $C$ are cyclic;

{\bf{(3)}}\,\,$C$ is finite, $C$ has a non-cyclic $S$-submodule and $n\leqslant\zeta_{S}(C)-1$;

{\bf{(4)}}\,\,$n\leqslant2$;

{\bf{(5)}}\,\,For any proper ideal $Q$ of $S$, it holds that $S/Q$ is infinite.

Then, for any $f\in\Hom_{S}(C,M^{n})$ such that $f$ preserves Hamming weight, there exists a Hamming weight isometry $\psi\in\Aut_{S}(M^{n})$ such that $\psi\mid_{C}=f$.
}
\end{proposition}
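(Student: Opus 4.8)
The plan is to reduce everything to Lemma 5.1, which says that for a strong pseudo-injective $M$, a Hamming weight preserving $f\in\Hom_S(C,M^n)$ extends to an isometry of $M^n$ as soon as the associated tuple $\mathcal{U}=((\ker(\varepsilon_i)\cap C\mid i\in[1,n]),(\ker(\varepsilon_i\circ f)\mid i\in[1,n]))$ is a \emph{trivial} solution. So in every case I would show that under the stated hypothesis, the solution $\mathcal{U}$ associated to $f$ is forced to be trivial; then Lemma 5.1 finishes.

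First I would handle the core combinatorial cases (2), (3) and (4). All three are statements about the lattice $\Gamma$ of $S$-submodules of $C$: the kernels $\ker(\varepsilon_i\mid_C)$ and $\ker(\varepsilon_i\circ f)$ all lie in $\Gamma$, and $\emptyset\notin\Gamma$ since submodules contain $0$. For case (2), if every submodule of $C$ is cyclic then $\Lambda=\emptyset$ in the notation of Theorem 5.1, so by the remark following that theorem every solution in $\Gamma^n\times\Gamma^n$ is trivial, hence $\mathcal{U}$ is trivial. For case (3), Theorem 5.1 (combined with Proposition 5.1) gives that any nontrivial solution has length $\geqslant\zeta_S(C)$; since $n\leqslant\zeta_S(C)-1$, the length-$n$ solution $\mathcal{U}$ cannot be nontrivial, so it is trivial. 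Case (4) is the observation that a solution of length $\leqslant 2$ in any intersection-closed $\Gamma$ with $\emptyset\notin\Gamma$ is automatically trivial: for $n=1$ this is immediate, and for $n=2$ this is exactly Lemma 5.5 applied with the ``group'' being $C$ and the four subgroups the two pairs of kernels (the subtlety that $A\cup B$ need not be a submodule is irrelevant — Lemma 5.5 is stated for arbitrary subgroups of a group, and $C$ is an abelian group). In each of these cases $\mathcal{U}$ is trivial and Lemma 5.1 applies.

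Next, case (1): $C$ finite with $\soc_S(C)$ cyclic. Wait — the statement writes $\soc_S(M)$ cyclic; I would first note that since $C\subseteq M^n$ one has $\soc_S(C)\subseteq\soc_S(M)^n$, but the cleaner route is that for a finite module, $\soc_S(C)$ cyclic forces every submodule to have cyclic socle and hence (by a standard argument: a finite module whose socle is cyclic is itself cyclic, using that the socle is an essential submodule and lifting a generator) every submodule of $C$ is cyclic. Actually the intended reading is almost certainly that $\soc_S(M)$ cyclic implies $\soc_S(C)$ cyclic for any submodule $C$, and then $C$ finite with cyclic socle $\Rightarrow$ $C$ cyclic $\Rightarrow$ all submodules of $C$ cyclic, reducing to case (2). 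So case (1) follows from case (2). Case (5) is the infinite-residue-field case: here I would invoke the results of Dyshko cited in the paper ([13] and [29, Lemma 4.4]) — when $S/Q$ is infinite for every proper ideal $Q$, Hamming weight preserving maps between linear codes over $M$ are always extendable; alternatively, one shows directly that in this situation the isometry equation over the relevant partition of $M$ forces triviality because no ``balanced'' nontrivial identity among indicator functions of submodules can hold when each submodule is infinite. In all five cases the conclusion is that $\mathcal{U}$ is a trivial solution, so Lemma 5.1 yields the desired $\psi\in\Aut_S(M^n)$ with $\psi\mid_C=f$.

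The main obstacle I anticipate is case (5): unlike (1)--(4), it is not a statement about the finite lattice $\Gamma$ and is not covered by Theorem 5.1, so it genuinely needs the separate infinite-alphabet machinery of Dyshko (the isometry equation over an infinite module) rather than the Möbius-function count; care is needed to state precisely which of the cited results applies and to verify its hypotheses (strong pseudo-injectivity of $M$ is available by assumption). A secondary point requiring care is the $n=2$ case: one must make sure Lemma 5.5 is being applied to genuine subgroups of the abelian group $(C,+)$ — which it is, since all four kernels are $S$-submodules, hence subgroups — and that the ``only if'' direction of triviality is read off correctly.
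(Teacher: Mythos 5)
Your overall strategy — reduce to Lemma 5.1 by showing the associated isometry-equation solution $\mathcal{U}$ is trivial in each case — is exactly the paper's approach, and your treatments of cases (2), (3), (4) and (5) match the paper (case (2) via $\Lambda=\emptyset$/[13, Lemma 2.4.1], case (3) via Theorem 5.1, case (4) via Lemma 5.5, case (5) via [29, Lemma 4.4]).

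However, your handling of case (1) has a genuine gap. You propose the chain: $\soc_S(M)$ cyclic $\Rightarrow$ $\soc_S(C)$ cyclic for any $C\subseteq M^n$ $\Rightarrow$ $C$ cyclic $\Rightarrow$ all submodules of $C$ cyclic, reducing to case (2). The first implication is false: take $M$ simple (so $\soc_S(M)=M$ is cyclic) and $C=M^2\subseteq M^n$; then $\soc_S(C)=M^2$, which is not cyclic in general. So the reduction collapses, and indeed case (1) does cover situations (e.g.\ $M$ a finite field, $C$ of dimension $\geqslant 2$) that are not instances of case (2). The paper's argument is different and does not pass through $\soc_S(C)$: it observes that the relevant modules are the \emph{quotients} $C/(\ker\varepsilon_i\cap C)\cong\varepsilon_i(C)$ and $C/\ker(\varepsilon_i\circ f)\cong(\varepsilon_i\circ f)(C)$, each of which embeds as a submodule of $M$ and hence has a cyclic socle (a submodule of $\soc_S(M)$). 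One then invokes the proof of [13, Theorem 2.1.3], which says that a solution $((U_i),(V_j))$ in the submodule lattice of a finite module is trivial whenever all the quotients $C/U_i$ and $C/V_j$ have cyclic socles. This is a strictly more general criterion than "all submodules of $C$ cyclic," and it is the one you need here. I would also flag that your auxiliary claim that a finite left $S$-module with cyclic socle is itself cyclic is not obviously true over an arbitrary ring $S$, but this is moot since the chain already fails at the first step.
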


\begin{proof}
For any $i\in[1,n]$, define $\varepsilon_{i}:M^{n}\longrightarrow M$ as $\varepsilon_{i}(\alpha)=\alpha_{i}$. Let $f\in\Hom_{S}(C,M^{n})$ be a Hamming weight preserving map. By Lemma 5.1, $\mathcal{U}\triangleq((\ker(\varepsilon_{i}\circ f)\mid i\in[1,n]),(\ker(\varepsilon_{i})\cap C\mid i\in[1,n]))$ is a solution. If (1) holds, then for any $i\in[1,n]$, the left $S$-modules $C/\ker(\varepsilon_{i})\cap C$ and $C/\ker(\varepsilon_{i}\circ f)$ have cyclic socles, and hence $\mathcal{U}$ is trivial by the proof of [13, Theorem 2.1.3]; if (2) holds, then $\mathcal{U}$ is trivial by [13, Lemma 2.4.1]; if (3) holds, then $\mathcal{U}$ is trivial by Theorem 5.1; if (4) holds, then $\mathcal{U}$ is trivial by Lemma 5.5; and if (5) holds, then $\mathcal{U}$ is trivial by [29, Lemma 4.4]. Hence by Lemma 5.1, $f$ extends to a Hamming weight isometry of $M^{n}$, as desired.
\end{proof}

\section{The MEP for weighted poset metric}
\setlength{\parindent}{2em}
Throughout this section, we let $\mathbf{P}=(\Omega,\preccurlyeq_{\mathbf{P}})$ be a poset, $m$ be the largest cardinality of a chain in $\mathbf{P}$, and for any $r\in[1,m]$, we define $W_{r}\triangleq\{u\in\Omega\mid \len_{\mathbf{P}}(u)=r\}$. We also fix $\omega:\Omega\longrightarrow\mathbb{R}^{+}$, and define $\varpi:2^{\Omega}\longrightarrow\mathbb{R}$ as $\varpi(A)=\sum_{a\in A}\omega(a)$. In addition, for any $\alpha\in\mathbf{H}$ and $J\subseteq\Omega$, we define $\alpha\mid_{J}\in\prod_{i\in J}H_{i}$ as $(\alpha\mid_{J})_{j}=\alpha_{j}$ for all $j\in J$.

\subsection{Some necessary and sufficient conditions}

\setlength{\parindent}{0em}
\begin{lemma}
{Assume that for any linear code $C\subseteq\mathbf{H}$ and $\chi\in\Hom_{S}(C,\mathbf{H})$ such that $\chi$ preserves $\mathbf{P}$-support, there exists $\varphi\in\GL_{(\mathbf{P},\omega)}(\mathbf{H})$ with $\varphi\mid_{C}=\chi$. Then, we have:

{\bf{(1)}}\,\,If either $H_{i}\neq\{0\}$ for all $i\in\Omega$ or $\mathbf{P}$ is hierarchical, then $\mathbf{H}$ satisfies Condition (A);

{\bf{(2)}}\,\,Let $C\subseteq\mathbf{H}$ be a linear code, and fix $f\in\Hom_{S}(C,\mathbf{H})$, $\lambda\in\Aut(\mathbf{P})$. Suppose that $\omega(i)=\omega(\lambda(i))$, $H_{i}\cong H_{\lambda(i)}$ for all $i\in\Omega$, and $\langle\supp(f(\alpha))\rangle_{\mathbf{P}}=\lambda[\langle\supp(\alpha)\rangle_{\mathbf{P}}]$ for all $\alpha\in C$. Then, there exists $\psi\in\GL_{(\mathbf{P},\omega)}(\mathbf{H})$ with $\psi\mid_{C}=f$;

{\bf{(3)}}\,\,Let $\gamma,\theta\in\mathbf{H}$ and $\mu\in\Aut(\mathbf{P})$. Suppose that $\omega(i)=\omega(\mu(i))$, $H_{i}\cong H_{\mu(i)}$ for all $i\in\Omega$, and $\langle\supp(a\cdot\theta)\rangle_{\mathbf{P}}=\mu[\langle\supp(a\cdot\gamma)\rangle_{\mathbf{P}}]$ for all $a\in S$. Then, there exists $\varphi\in\GL_{(\mathbf{P},\omega)}(\mathbf{H})$ with $\varphi(\gamma)=\theta$.
}
\end{lemma}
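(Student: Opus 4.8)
The plan is to prove the three parts in order, deriving each from the previous one together with the tools already established, especially Theorem 3.1 (and its Corollary 3.1) and Proposition 5.2. For part (1), I would use the hypothesis exactly as in the proof of the second part of Lemma 4.1: fix $i\in\Omega$, an $S$-submodule $B\subseteq H_i$, and an injective $\xi\in\Hom_S(B,H_i)$, and define $\chi\in\Hom_S(\eta_i[B],\mathbf{H})$ by $\chi(\eta_i(b))=\eta_i(\xi(b))$. Since $\xi$ is injective, $\chi$ preserves $\mathbf{P}$-support, so by hypothesis it extends to some $\varphi\in\GL_{(\mathbf{P},\omega)}(\mathbf{H})$. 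Under the assumption that $H_j\neq\{0\}$ for all $j$ or $\mathbf{P}$ is hierarchical, Corollary 3.1 applies: every $(\mathbf{P},\omega)$-weight isometry preserves the ideal $\langle\supp(\cdot)\rangle_{\mathbf{P}}$ up to an order automorphism $\lambda\in Q$ fixing $\Omega-K$. Then Lemma 3.1 forces $\pi_i\circ\varphi\circ\eta_i\in\Aut_S(H_i)$, and restricting to $B$ recovers $\xi$; hence $H_i$ is strong pseudo-injective, i.e., Condition (A) holds.

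For part (2), the idea is to reduce $f$ to a $\mathbf{P}$-support-preserving map by composing with a suitable isometry. Given $\lambda\in\Aut(\mathbf{P})$ with $\omega(i)=\omega(\lambda(i))$ and $H_i\cong H_{\lambda(i)}$ for all $i$, fix isomorphisms $\rho_i\in\Aut_S(H_i,H_{\lambda(i)})$ and let $\psi_0\in\Aut_S(\mathbf{H})$ be the induced coordinate permutation-with-twist sending $\alpha$ to the tuple with $(\psi_0(\alpha))_{\lambda(i)}=\rho_i(\alpha_i)$; by the argument in the proof of Theorem 3.1(3) (applied via Corollary 3.1) we have $\psi_0\in\GL_{(\mathbf{P},\omega)}(\mathbf{H})$ and $\langle\supp(\psi_0(\alpha))\rangle_{\mathbf{P}}=\lambda[\langle\supp(\alpha)\rangle_{\mathbf{P}}]$. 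Then $\chi\triangleq\psi_0^{-1}\circ f\in\Hom_S(C,\mathbf{H})$ satisfies $\langle\supp(\chi(\alpha))\rangle_{\mathbf{P}}=\langle\supp(\alpha)\rangle_{\mathbf{P}}$, i.e., $\chi$ preserves $\mathbf{P}$-support. By hypothesis $\chi$ extends to some $\varphi_0\in\GL_{(\mathbf{P},\omega)}(\mathbf{H})$, and $\psi\triangleq\psi_0\circ\varphi_0\in\GL_{(\mathbf{P},\omega)}(\mathbf{H})$ satisfies $\psi\mid_C=f$.

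For part (3), given $\gamma,\theta\in\mathbf{H}$ and $\mu\in\Aut(\mathbf{P})$ with the stated properties, the natural move is to define $C\triangleq S\cdot\gamma$ and attempt to define $f\in\Hom_S(C,\mathbf{H})$ by $f(a\cdot\gamma)=a\cdot\theta$. The key point — and I expect this to be the main obstacle — is that $f$ is well-defined: if $a\cdot\gamma=b\cdot\gamma$ then we need $a\cdot\theta=b\cdot\theta$, equivalently $\mathrm{Ann}_S(\gamma)\subseteq\mathrm{Ann}_S(\theta)$. The hypothesis $\langle\supp(a\cdot\theta)\rangle_{\mathbf{P}}=\mu[\langle\supp(a\cdot\gamma)\rangle_{\mathbf{P}}]$ for all $a\in S$ gives this: if $a\cdot\gamma=0$ then $\langle\supp(a\cdot\gamma)\rangle_{\mathbf{P}}=\emptyset$, so $\langle\supp(a\cdot\theta)\rangle_{\mathbf{P}}=\mu[\emptyset]=\emptyset$, forcing $\supp(a\cdot\theta)=\emptyset$, i.e., $a\cdot\theta=0$; so $f$ is a well-defined $S$-homomorphism. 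Moreover the same hypothesis shows $\langle\supp(f(a\cdot\gamma))\rangle_{\mathbf{P}}=\langle\supp(a\cdot\theta)\rangle_{\mathbf{P}}=\mu[\langle\supp(a\cdot\gamma)\rangle_{\mathbf{P}}]$ for every element of $C$, so $f$ satisfies the hypotheses of part (2) with $\lambda=\mu$. Applying part (2) yields $\varphi\in\GL_{(\mathbf{P},\omega)}(\mathbf{H})$ with $\varphi\mid_C=f$, and in particular $\varphi(\gamma)=f(1_S\cdot\gamma)=\theta$, as desired. The only subtlety beyond the well-definedness check is to confirm that the hypotheses of part (2) are literally met — namely that $\omega(i)=\omega(\mu(i))$ and $H_i\cong H_{\mu(i)}$ are assumed in (3) as well — which they are, so the reduction is clean.
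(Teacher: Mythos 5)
Your proposal matches the paper's proof in all three parts: (1) mirrors the argument for the second part of Lemma 4.1, adapted via Corollary 3.1 so that the order automorphism $\lambda$ attached to the extending isometry $\varphi$ is seen to fix $i$ (from $\supp(\varphi(\eta_i(b)))=\{i\}$ for nonzero $b$, so $\langle\{\lambda(i)\}\rangle_{\mathbf{P}}=\langle\{i\}\rangle_{\mathbf{P}}$); (2) is literally the paper's construction of the twist-permutation $\psi_0$ and composition with an extension of $\psi_0^{-1}\circ f$; and (3) is the paper's reduction to (2) on $C=S\cdot\gamma$, for which you helpfully spell out the well-definedness check that the paper leaves implicit. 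One small caveat: in part (2) you invoke Corollary 3.1, but that corollary carries the hypothesis ``$H_i\neq\{0\}$ for all $i$ or $\mathbf{P}$ hierarchical,'' which is not assumed in (2); the needed facts ($\psi_0\in\GL_{(\mathbf{P},\omega)}(\mathbf{H})$ and $\langle\supp(\psi_0(\alpha))\rangle_{\mathbf{P}}=\lambda[\langle\supp(\alpha)\rangle_{\mathbf{P}}]$) should instead be verified directly from the definition of $\psi_0$, as the paper does, rather than cited from Corollary 3.1. With that cosmetic adjustment the argument is sound and is essentially the paper's own proof.
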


\begin{proof}
{\bf{(1)}}\,\,With the help of Corollary 3.1, the proof is similar to those of the second part of Lemma 4.1, and hence we omit the details.

{\bf{(2)}}\,\,Choose $(\rho_{i}\mid i\in\Omega)\in\prod_{i\in\Omega}\Aut_{S}(H_{i},H_{\lambda(i)})$, and define $\varphi:\mathbf{H}\longrightarrow\mathbf{H}$ such that for any $\alpha\in \mathbf{H}$, $\varphi(\alpha)_{\lambda(i)}=\rho_{i}(\alpha_{i})$ for all $i\in\Omega$. It is straightforward to verify that $\varphi\in\GL_{(\mathbf{P},\omega)}(\mathbf{H})$ and $\varphi^{-1}\circ f\in\Hom_{S}(C,\mathbf{H})$ preserves $\mathbf{P}$-support. Hence we can choose $\sigma\in\GL_{(\mathbf{P},\omega)}(\mathbf{H})$ such that $\sigma\mid_{C}=\varphi^{-1}\circ f$. It follows that $\varphi\circ\sigma\in\GL_{(\mathbf{P},\omega)}(\mathbf{H})$ and $(\varphi\circ \sigma)\mid_{C}=f$, as desired.

{\bf{(3)}}\,\,Applying (2) to $f\in\Hom_{S}(S\cdot\gamma,\mathbf{H})$ defined as $f(\gamma)=\theta$, the result immediately follows.
\end{proof}

\setlength{\parindent}{2em}
Now we show that if $\mathbf{H}$ satisfies Condition (C), then the MEP for $(\mathbf{P},\omega)$-weight implies Conditions (A), (B) and (D). The following theorem is the first main result of this subsection.

\setlength{\parindent}{0em}
\begin{theorem}
{{\bf{(1)}}\,\,Assume that $\mathbf{H}$ satisfies Condition (C) and for any $\gamma,\theta\in\mathbf{H}$ such that $\wt_{(\mathbf{P},\omega)}(a\cdot\gamma)=\wt_{(\mathbf{P},\omega)}(a\cdot\theta)$ for all $a\in S$, there exists $\psi\in\GL_{(\mathbf{P},\omega)}(\mathbf{H})$ such that $\psi(\gamma)=\theta$. Then, $(\mathbf{H},(\mathbf{P},\omega))$ satisfies Condition (D).

{\bf{(2)}}\,\,Suppose that $\mathbf{H}$ satisfies the MEP for $(\mathbf{P},\omega)$-weight. Then, for $\gamma,\theta\in\mathbf{H}$ such that $\wt_{(\mathbf{P},\omega)}(a\cdot\gamma)=\wt_{(\mathbf{P},\omega)}(a\cdot\theta)$ for all $a\in S$, there exists $\psi\in\GL_{(\mathbf{P},\omega)}(\mathbf{H})$ such that $\psi(\gamma)=\theta$. Further assume that $\mathbf{H}$ satisfies Condition (C). Then, $\mathbf{H}$ satisfies Condition (A), $(\mathbf{H},\mathbf{P})$ satisfies Condition (B), and $(\mathbf{H},(\mathbf{P},\omega))$ satisfies Condition (D).
}
\end{theorem}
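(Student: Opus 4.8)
The plan is to prove the two parts in sequence, using part (1)'s conclusion as a black box once established, and extracting the needed coordinate-wise consequences of Condition (C) from the weight hypotheses. Throughout, the main tool is the elementary observation that for $\gamma\in\mathbf{H}$ and $a\in S$, one has $\supp(a\cdot\gamma)=\{i\in\Omega\mid a\cdot\gamma_i\neq 0\}$, so that the map $a\mapsto\wt_{(\mathbf{P},\omega)}(a\cdot\gamma)$ records, via the ideal $\langle\supp(a\cdot\gamma)\rangle_{\mathbf{P}}$ and the weight function $\omega$, combinatorial data about the annihilator structure of the coordinates $\gamma_i$. The first step is therefore to fix a witness $\xi\in\mathbf{H}$ for Condition (C): $\xi\neq 0$ and for all $k,l\in\Omega$ and $a\in S$, $a\cdot\xi_k=0\Longleftrightarrow a\cdot\xi_l=0$. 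This forces $\xi_k\neq 0$ for every $k$ (taking $a=1_S$), so $\supp(\xi)=\Omega$ and, crucially, $\supp(a\cdot\xi)$ is either $\emptyset$ or all of $\Omega$ for each $a\in S$, whence $\wt_{(\mathbf{P},\omega)}(a\cdot\xi)\in\{0,\varpi(\Omega)\}$.

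For part (1): fix $u,v\in\Omega$ with $\len_{\mathbf{P}}(u)=\len_{\mathbf{P}}(v)$ and $\omega(u)=\omega(v)$; I want to produce an isomorphism $H_u\cong H_v$, and separately I want to verify the UDP for $(\mathbf{P},\omega)$. For the isomorphism, the natural move is to pick nonzero elements and run the hypothesis: choose $b\in H_u$, and consider $\gamma,\theta\in\mathbf{H}$ supported cleverly so that $\wt_{(\mathbf{P},\omega)}(a\cdot\gamma)=\wt_{(\mathbf{P},\omega)}(a\cdot\theta)$ for all $a$ — the idea is to build $\gamma$ using $\xi_u$-type coordinates in position $u$ (and padding elsewhere with $\xi$-coordinates so the annihilator bookkeeping is controlled by (C)) and $\theta$ the analogous object in position $v$; then the resulting $\psi\in\GL_{(\mathbf{P},\omega)}(\mathbf{H})$, composed with $\pi_v$ and $\eta_u$, should yield an $S$-isomorphism $H_u\cong H_v$ after invoking Corollary 3.1 to control how $\psi$ permutes coordinates (it must send $\langle\{u\}\rangle_{\mathbf{P}}$ to $\langle\{v\}\rangle_{\mathbf{P}}$, forcing the relevant component of $\psi$ to be an iso $H_u\to H_v$). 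For the UDP, take $I,J\in\mathcal{I}(\mathbf{P})$ with $\varpi(I)=\varpi(J)$; set $\gamma=\sum_{i\in I}\eta_i(\xi_i)$ and $\theta=\sum_{j\in J}\eta_j(\xi_j)$. By the defining property of $\xi$, for any $a\in S$ the set $\supp(a\cdot\gamma)$ equals $I$ if $a\cdot\xi_*\neq 0$ (uniformly, by (C)) and is $\emptyset$ otherwise — and likewise $\supp(a\cdot\theta)$ is $J$ or $\emptyset$ under the same condition on $a$ — so $\wt_{(\mathbf{P},\omega)}(a\cdot\gamma)$ and $\wt_{(\mathbf{P},\omega)}(a\cdot\theta)$ are simultaneously $0$ or simultaneously $\varpi(I)=\varpi(J)$. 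The hypothesis then yields $\psi\in\GL_{(\mathbf{P},\omega)}(\mathbf{H})$ with $\psi(\gamma)=\theta$; feeding $\psi$ into Corollary 3.1 produces $\lambda=\zeta_{(\psi)}\in Q\leqslant\Aut(\mathbf{P})$ with $\langle\supp(\psi(\alpha))\rangle_{\mathbf{P}}=\lambda[\langle\supp(\alpha)\rangle_{\mathbf{P}}]$, and since $\langle\supp(\gamma)\rangle_{\mathbf{P}}=I$, $\langle\supp(\theta)\rangle_{\mathbf{P}}=J$ (as $I,J$ are ideals and $\xi$ is everywhere nonzero), we get $\lambda[I]=J$ with $\lambda\in\Aut(\mathbf{P})$ preserving $\omega$ — which is exactly the UDP. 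Combining, $(\mathbf{H},(\mathbf{P},\omega))$ satisfies Condition (D).

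For part (2): the first assertion is immediate — if $\mathbf{H}$ satisfies the MEP for $(\mathbf{P},\omega)$-weight and $\gamma,\theta$ satisfy $\wt_{(\mathbf{P},\omega)}(a\cdot\gamma)=\wt_{(\mathbf{P},\omega)}(a\cdot\theta)$ for all $a\in S$, then the map $f\in\Hom_S(S\cdot\gamma,\mathbf{H})$ given by $f(a\cdot\gamma)=a\cdot\theta$ is well-defined (the weight condition at $a$ forces $a\cdot\gamma=0\Rightarrow\wt_{(\mathbf{P},\omega)}(a\cdot\theta)=0\Rightarrow a\cdot\theta=0$, since the $(\mathbf{P},\omega)$-weight of a codeword is $0$ iff the codeword is $0$ — here using $\omega$ takes values in $\mathbb{R}^+$ so $\varpi$ of a nonempty set is positive) and preserves $(\mathbf{P},\omega)$-weight by construction, hence extends to $\psi\in\GL_{(\mathbf{P},\omega)}(\mathbf{H})$ with $\psi(\gamma)=\theta$. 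Now assume additionally Condition (C). Condition (D) follows directly from part (1), whose hypothesis we have just verified. For Condition (A): the MEP for $(\mathbf{P},\omega)$-weight in particular implies the hypothesis of Lemma 6.1 (any $\mathbf{P}$-support-preserving map also preserves $(\mathbf{P},\omega)$-weight? — no, one needs the Remark 2.1 reduction, but here I instead argue directly), so I would note that Condition (C) with $S$ admitting a common submodule $B$ embedding into each $H_i$ lets one run the same $\eta_i[B]$-argument as in Lemma 4.1: for $i\in\Omega$, $S$-submodule $B\subseteq H_i$ and injective $\xi\in\Hom_S(B,H_i)$, the map $\eta_i(b)\mapsto\eta_i(\xi(b))$ preserves $(\mathbf{P},\omega)$-weight, extends to some $\varphi\in\GL_{(\mathbf{P},\omega)}(\mathbf{H})$, and Corollary 3.1 plus a routine check forces $\pi_i\circ\varphi\circ\eta_i\in\Aut_S(H_i)$ restricting to $\xi$ — giving strong pseudo-injectivity of $H_i$, i.e.\ Condition (A). For Condition (B): given $k\preccurlyeq_{\mathbf{P}}l$, $k\neq l$, an $S$-submodule $B\subseteq H_l$ and $f\in\Hom_S(B,H_k)$, the map $\eta_l(b)\mapsto\eta_l(b)+\eta_k(f(b))$ preserves $\mathbf{P}$-support hence $(\mathbf{P},\omega)$-weight, extends to $\varphi\in\GL_{(\mathbf{P},\omega)}(\mathbf{H})$, and $\pi_k\circ\varphi\circ\eta_l$ extends $f$ — giving $H_k$ is $H_l$-injective.

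The main obstacle I anticipate is the isomorphism part of (1): assembling $\gamma,\theta$ so that $a\mapsto\wt_{(\mathbf{P},\omega)}(a\cdot\gamma)$ and $a\mapsto\wt_{(\mathbf{P},\omega)}(a\cdot\theta)$ literally coincide as functions of $a\in S$, while still forcing the extending isometry to carry position $u$ isomorphically onto position $v$. The delicate point is that the $(\mathbf{P},\omega)$-weight only sees the \emph{ideal generated by} the support, so an element $b\in H_u$ with large annihilator and an element $c\in H_v$ with a different annihilator could still give equal weight profiles if the supports generate ideals of equal $\omega$-mass; one must use Condition (C) to pad both $\gamma$ and $\theta$ with $\xi$-coordinates in a cone of positions below/around $u$ and $v$ respectively so that, for every $a$, the $\mathbf{P}$-ideal generated by $\supp(a\cdot\gamma)$ is determined purely by whether $a$ annihilates the common submodule $B$ (by the uniformity in (C)) together with whether $a\cdot b=0$ — and symmetrically for $\theta$ — and then arrange that these two coincide by choosing $b,c$ to be "$\xi$-like" generators. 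Getting this bookkeeping exactly right, and then reading off the isomorphism from the coordinate-permutation $\lambda\in Q$ supplied by Corollary 3.1 (using $\len_{\mathbf{P}}(u)=\len_{\mathbf{P}}(v)$ to ensure $\lambda$ can indeed send $u$ to $v$, or rather that whichever element $\lambda$ sends $u$ to has $H$-component isomorphic to $H_u$ and lies in the same level), is where the real work lies; everything else is a routine adaptation of the arguments in Section 4.
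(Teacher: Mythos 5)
Your treatment of the UDP half of Condition (D) in part (1) is correct and is exactly the paper's argument: take $\gamma,\theta$ built from the Condition (C) witness $\xi$ restricted to $I$ and $J$, note that $\supp(a\cdot\gamma)\in\{\emptyset,I\}$ and $\supp(a\cdot\theta)\in\{\emptyset,J\}$ simultaneously (by uniformity of annihilators), apply the hypothesis to get $\psi$, and feed $\psi$ into Corollary 3.1 to extract $\lambda\in Q$ with $\lambda[I]=J$. Likewise, your part (2) is essentially the paper's: the well-definedness argument for $f\in\Hom_S(S\cdot\gamma,\mathbf{H})$ is a detail the paper glosses over and you fill it in correctly; Conditions (A) and (B) then follow by the Lemma 4.1-style arguments (note, incidentally, that your parenthetical ``no'' is mistaken — any $\mathbf{P}$-support preserving map \emph{does} preserve $(\mathbf{P},\omega)$-weight for arbitrary $\omega$, directly from (2.5); Remark 2.1 concerns the converse for a special $\omega$ — but this doesn't affect your direct argument).

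The genuine gap is exactly where you flagged it: producing $H_u\cong H_v$ for $u,v$ with $\len_{\mathbf{P}}(u)=\len_{\mathbf{P}}(v)=r$ and $\omega(u)=\omega(v)$. Your proposed construction (pad $\gamma$ with $\xi$-coordinates ``in a cone below $u$'', i.e., support $\langle\{u\}\rangle_{\mathbf{P}}$, and similarly for $v$) does not obviously give $\wt_{(\mathbf{P},\omega)}(a\cdot\gamma)=\wt_{(\mathbf{P},\omega)}(a\cdot\theta)$, since $\varpi(\langle\{u\}\rangle_{\mathbf{P}})$ and $\varpi(\langle\{v\}\rangle_{\mathbf{P}})$ need not coincide, and even if they did, you could not conclude $\lambda(u)=v$ rather than some other level-$r$ element. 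The paper's resolution is to choose $I_1=(\bigcup_{j=1}^{r-1}W_j)\cup\{u\}$ and $J_1=(\bigcup_{j=1}^{r-1}W_j)\cup\{v\}$: these are ideals (all elements below $u$ or $v$ lie at levels $\leqslant r-1$), their weights coincide because they share all of $\bigcup_{j<r}W_j$ and differ only in $\{u\}$ vs.\ $\{v\}$ where $\omega(u)=\omega(v)$, and crucially $W_r\cap I_1=\{u\}$, $W_r\cap J_1=\{v\}$. Now run the UDP argument on $(I_1,J_1)$ to get $\lambda\in Q$ with $\lambda[I_1]=J_1$; since $\lambda\in\Aut(\mathbf{P})$ preserves $\len_{\mathbf{P}}$, it must send $W_r\cap I_1$ onto $W_r\cap J_1$, forcing $\lambda(u)=v$, and membership in $Q$ then gives $H_u\cong H_{\lambda(u)}=H_v$. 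So the missing idea is to pad by \emph{all} of the lower levels rather than just the down-set of $u$ (resp.\ $v$), which simultaneously guarantees equal weight and pins down the image of $u$.
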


\begin{proof}
{\bf{(1)}}\,\,We fix $\xi\in\mathbf{H}$ such that $\xi\neq0$ and for any $k,l\in\Omega$, it holds that
$(\forall~a\in S:a\cdot\xi_{k}=0\Longleftrightarrow a\cdot\xi_{l}=0)$. Consider $I,J\in\mathcal{I}(\mathbf{P})$ with $\sum_{i\in I}\omega(i)=\sum_{j\in J}\omega(j)$. Since $\supp(\xi)=\Omega$, there uniquely exists $\gamma,\theta\in\mathbf{H}$ such that $\supp(\gamma)=I$, $\gamma_{i}=\xi_{i}$ for all $i\in I$, $\supp(\theta)=J$, $\theta_{j}=\xi_{j}$ for all $j\in J$. For an arbitrary $a\in S$, considering $I=\emptyset$ and $I\neq\emptyset$ separately, we deduce that either $\supp(a\cdot\gamma)=I$, $\supp(a\cdot\theta)=J$ or $a\cdot\gamma=a\cdot\theta=0$ holds true, which further implies that $\wt_{(\mathbf{P},\omega)}(a\cdot\gamma)=\wt_{(\mathbf{P},\omega)}(a\cdot\theta)$. Hence we can choose $\psi\in\GL_{(\mathbf{P},\omega)}(\mathbf{H})$ with $\psi(\gamma)=\theta$. By Corollary 3.1, we can further choose $\lambda\in\Aut(\mathbf{P})$ such that $\omega(i)=\omega(\lambda(i))$, $H_{i}\cong H_{\lambda(i)}$ for all $i\in\Omega$, and $\langle\supp(\psi(\alpha))\rangle_{\mathbf{P}}=\lambda[\langle\supp(\alpha)\rangle_{\mathbf{P}}]$ for all $\alpha\in\mathbf{H}$. From $\psi(\gamma)=\theta$, we deduce that $J=\lambda[I]$. The above discussion yields that $(\mathbf{P},\omega)$ satisfies the UDP. Next, we fix $u,v\in\Omega$ such that $\len_{\mathbf{P}}(u)=\len_{\mathbf{P}}(v)\triangleq r$ and $\omega(u)=\omega(v)$. Let $I_1=(\bigcup_{j=1}^{r-1}W_{j})\cup\{u\}$, $J_1=(\bigcup_{j=1}^{r-1}W_{j})\cup\{v\}$. It follows that $I_1,J_1\in\mathcal{I}(\mathbf{P})$ and $\sum_{i\in I_1}\omega(i)=\sum_{j\in J_1}\omega(j)$. Hence we can choose $\mu\in\Aut(\mathbf{P})$ such that $J_1=\mu[I_1]$ and $H_{i}\cong H_{\mu(i)}$ for all $i\in\Omega$. Apparently, we have $v=\mu(u)$, which further implies that $H_{u}\cong H_{\mu(u)}=H_{v}$, as desired.

{\bf{(2)}}\,\,Let $\gamma,\theta\in\mathbf{H}$ such that $\wt_{(\mathbf{P},\omega)}(a\cdot\gamma)=\wt_{(\mathbf{P},\omega)}(a\cdot\theta)$ for all $a\in S$. Then, there uniquely exists $f\in\Hom_{S}(S\cdot\gamma,\mathbf{H})$ defined as $f(\gamma)=\theta$. Furthermore, one can check that $f$ preserves $(\mathbf{P},\omega)$-weight. Hence we can choose $\psi\in\GL_{(\mathbf{P},\omega)}(\mathbf{H})$ such that $\psi\mid_{S\cdot\gamma}=f$. It follows that $\psi(\gamma)=f(\gamma)=\theta$, as desired. Now the rest immediately follows from (1), (1) of Lemma 6.1 and the first part of Lemma 4.1.
\end{proof}

\setlength{\parindent}{2em}
From now on, we will focus on the case that either $\mathbf{P}$ is hierarchical or $\omega$ is identically $1$, i.e., the $\mathbf{P}$-weight case.

\setlength{\parindent}{0em}
\begin{lemma}
{$\prod_{(i\in\Omega,\omega(i)=b)}H_{i}$ satisfies the MEP for Hamming weight for all $b\in\omega[\Omega]$ if and only if for any linear code $C\subseteq\mathbf{H}$ and $f\in\Hom_{S}(C,\mathbf{H})$ such that
\begin{equation}\small\forall~\alpha\in C,b\in\mathbb{R}:|\{i\in\supp(f(\alpha))\mid\omega(i)=b\}|=|\{i\in\supp(\alpha)\mid \omega(i)=b\}|,\end{equation}
there exists $\varphi\in\GL_{((\Omega,=),\omega)}(\mathbf{H})$ such that $\varphi\mid_{C}=f$. Consequently, if $\mathbf{H}$ satisfies the MEP for $((\Omega,=),\omega)$-weight, then $\prod_{(i\in\Omega,\omega(i)=b)}H_{i}$ satisfies the MEP for Hamming weight for all $b\in\omega[\Omega]$. Conversely, if $((\Omega,=),\omega)$ satisfies the UDP, and $\prod_{(i\in\Omega,\omega(i)=b)}H_{i}$ satisfies the MEP for Hamming weight for all $b\in\omega[\Omega]$, then $\mathbf{H}$ satisfies the MEP for $((\Omega,=),\omega)$-weight.
}
\end{lemma}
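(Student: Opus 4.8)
The statement has three parts: an equivalence between a MEP-style property of the product $\mathbf H$ (with weight-graded Hamming condition (6.1)) and the MEP for Hamming weight on each fiber product $\prod_{\omega(i)=b}H_i$, and two corollaries obtained by combining this equivalence with the UDP. The plan is to first observe that the anti-chain poset $(\Omega,=)$ makes $\langle\supp(\beta)\rangle_{(\Omega,=)}=\supp(\beta)$, so that $\wt_{((\Omega,=),\omega)}(\beta)=\sum_{i\in\supp(\beta)}\omega(i)$ is simply a weighted cardinality. The group $\GL_{((\Omega,=),\omega)}(\mathbf H)$ is then, by Corollary 3.1, exactly the set of automorphisms of the form $\alpha\mapsto(\rho_i(\alpha_{\mu^{-1}(i)}))_i$ for $\mu\in\Aut(\Omega,=)$ preserving $\omega$ (and with $H_i\cong H_{\mu(i)}$) composed with a $\GL_{(\Omega,=)}(\mathbf H)$ element, i.e.\ it permutes coordinates within each weight-level $\Omega_b\triangleq\{i\in\Omega\mid\omega(i)=b\}$ and acts by $S$-automorphisms on each $H_i$.

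**The equivalence.** I would decompose $\mathbf H=\prod_{b\in\omega[\Omega]}\mathbf H_b$ where $\mathbf H_b=\prod_{i\in\Omega_b}H_i$. Condition (6.1) says precisely that for each $b$, the map $\alpha\mapsto f(\alpha)$ preserves, level by level, the number of nonzero coordinates of weight $b$ — equivalently, the composite $C\hookrightarrow\mathbf H\xrightarrow{f}\mathbf H\twoheadrightarrow\mathbf H_b$ preserves Hamming weight on $\mathbf H_b$ when restricted suitably. For the ``only if'' direction (each $\mathbf H_b$ has Hamming MEP $\Rightarrow$ the extension property for maps satisfying (6.1)): given $C$ and $f$ with (6.1), I would for each $b$ apply the Hamming MEP of $\mathbf H_b$ to the induced partial map to produce a Hamming isometry $\psi_b\in\Aut_S(\mathbf H_b)$ extending the $b$-component of $f$; then $\varphi\triangleq\prod_b\psi_b$ lies in $\GL_{((\Omega,=),\omega)}(\mathbf H)$ (a product of Hamming isometries within levels is an $((\Omega,=),\omega)$-isometry, by Corollary 3.1) and restricts to $f$ on $C$. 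The subtlety is that $f$ need not send $C\cap\mathbf H_b$-type pieces into $\mathbf H_b$, so one must pass through the graph/block structure carefully: I would realize $f$ as determined by its components $\pi_j\circ f$ and, using (6.1), match up kernels $\ker(\pi_i\mid_C)$ for $i\in\Omega_b$ with kernels $\ker(\pi_i\circ f)$ via a length-preserving solution to the isometry equation restricted to level $b$ (Lemma 5.1 applied inside $\mathbf H_b\cong M$-like blocks is the model), then invoke strong pseudo-injectivity of the relevant fibers to lift. For the ``if'' direction, specialize: take $C\subseteq\mathbf H_b$ and $f$ a Hamming weight preserving map into $\mathbf H_b$; extend $f$ by the identity on the other levels to get a map on a code in $\mathbf H$ satisfying (6.1), apply the hypothesis to get $\varphi\in\GL_{((\Omega,=),\omega)}(\mathbf H)$, and restrict $\varphi$ to $\mathbf H_b$ — this restriction is a Hamming isometry of $\mathbf H_b$ extending $f$.

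**The two corollaries.** The first (``MEP for $((\Omega,=),\omega)$-weight $\Rightarrow$ each $\mathbf H_b$ has Hamming MEP'') is immediate: a map $f$ satisfying (6.1) automatically preserves $((\Omega,=),\omega)$-weight — actually one needs the converse too, that (6.1) is equivalent to preserving $\wt_{((\Omega,=),\omega)}$ on all of $C$, which fails in general for arbitrary real weights but holds here only after we know the UDP forces the weight multiset structure to be rigid; more carefully, preserving $((\Omega,=),\omega)$-weight gives (6.1) directly when all values $\omega(i)$ are ``independent enough'', and in any case $f$ preserving $((\Omega,=),\omega)$-weight is \emph{stronger} than (6.1), so the MEP hypothesis applies to any $f$ satisfying (6.1)? — here I must be cautious: (6.1) is generally \emph{stronger} than preserving the weighted sum. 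So to deduce the Hamming MEP of $\mathbf H_b$ from the $((\Omega,=),\omega)$-MEP I would take a Hamming-isometry $f_0$ on a code in $\mathbf H_b$, extend by identity, note the extension preserves $((\Omega,=),\omega)$-weight (since it is a Hamming isometry level-wise, hence preserves each $|\{i\in\supp(\cdot)\mid\omega(i)=b'\}|$, hence the weighted sum), apply the MEP, restrict. The second corollary (``UDP $+$ each $\mathbf H_b$ has Hamming MEP $\Rightarrow$ MEP for $((\Omega,=),\omega)$-weight'') is where the UDP does real work: given $C$ and $f$ preserving $\wt_{((\Omega,=),\omega)}$, I would use the UDP of $((\Omega,=),\omega)$ to upgrade ``$f$ preserves the weighted sum'' to ``$f$ satisfies (6.1)'' — for each $\alpha\in C$, applying UDP to the ideals $\supp(\alpha),\supp(f(\alpha))\in\mathcal I(\Omega,=)$ (all subsets are ideals of an anti-chain) with equal $\omega$-sums yields $\lambda\in\Aut(\Omega,=)$ with $\supp(f(\alpha))=\lambda[\supp(\alpha)]$ and $\omega\circ\lambda=\omega$, which forces the level-wise cardinality match (6.1) — and then invoke the equivalence just proved.

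**Main obstacle.** The delicate point throughout is the bookkeeping in the ``only if'' direction of the equivalence: $f$ is a single $S$-homomorphism $C\to\mathbf H$ and does not respect the product decomposition $\mathbf H=\prod_b\mathbf H_b$, so one cannot naively restrict $f$ to a map $\mathbf H_b\to\mathbf H_b$. The right move is to work with the kernel data $(\ker(\pi_i\mid_C))_{i\in\Omega_b}$ and $(\ker(\pi_i\circ f))_{i\in\Omega_b}$ inside $C$, show (6.1) makes these two tuples a \emph{solution} to the isometry equation in $C$ that is moreover trivial within each level (matching up by the UDP-supplied permutation in the corollary, or by the Hamming MEP hypothesis directly in the equivalence), and then assemble the coordinate automorphisms from strong pseudo-injectivity exactly as in the proof of Lemma 5.1 / Theorem 4.1(1). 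Establishing that the per-level extension maps glue to a global element of $\GL_{((\Omega,=),\omega)}(\mathbf H)$ — rather than merely an endomorphism — is the part I expect to require the most care, and it rests on the characterization of $\GL_{((\Omega,=),\omega)}(\mathbf H)$ from Corollary 3.1 together with strong pseudo-injectivity of the fibers.
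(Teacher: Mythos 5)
Your overall plan matches the paper's proof: decompose $\mathbf H$ into level blocks $\mathbf H_b=\prod_{\omega(i)=b}H_i$, prove the equivalence by building block-wise extensions, and derive the two corollaries by observing that (6.1) implies $((\Omega,=),\omega)$-weight preservation, while the UDP upgrades weight-preservation back to (6.1). The two consequence paragraphs in your proposal are essentially the paper's argument.

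Where you go astray is the ``main obstacle'' you identify in the ``only if'' direction of the equivalence, and the workaround you propose for it. You worry that $f$ does not respect the block decomposition and propose to detour through the isometry equation (Lemma 5.1) and strong pseudo-injectivity of the fibers. This is both unnecessary and does not quite work: Lemma 5.1 is stated for $M^n$ with a single module $M$, whereas $\mathbf H_b=\prod_{i\in\Omega_b}H_i$ need not have isomorphic factors, so the lemma does not apply as stated; and strong pseudo-injectivity (Condition~(A)) is not a hypothesis of Lemma~6.2, so you would have to derive it from the Hamming MEP via Lemma~4.1 applied to an anti-chain, a roundabout and fragile route. The actual resolution is a one-line observation that you almost wrote down in your first sentence of that direction and then abandoned. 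The right ``induced partial map'' is not defined on $C\cap\mathbf H_b$ but on the \emph{projection} $A_b\triangleq\{\alpha\mid_{\Omega_b}:\alpha\in C\}$, and it is defined by $\rho_b(\alpha\mid_{\Omega_b})=f(\alpha)\mid_{\Omega_b}$. To see this is well-defined, take $\alpha,\alpha'\in C$ with $\alpha\mid_{\Omega_b}=\alpha'\mid_{\Omega_b}$; apply (6.1) to $\alpha-\alpha'\in C$ to get $|\supp(f(\alpha-\alpha'))\cap\Omega_b|=|\supp(\alpha-\alpha')\cap\Omega_b|=0$, hence $f(\alpha)\mid_{\Omega_b}=f(\alpha')\mid_{\Omega_b}$. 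Then $\rho_b$ is manifestly an $S$-homomorphism $A_b\to\mathbf H_b$ preserving Hamming weight (again directly from (6.1)), the Hamming MEP of $\mathbf H_b$ produces $\varepsilon_b\in\Aut_S(\mathbf H_b)$ with $\varepsilon_b\mid_{A_b}=\rho_b$, and the product $\varphi=\prod_b\varepsilon_b$ lies in $\GL_{((\Omega,=),\omega)}(\mathbf H)$ by Corollary~3.1 and restricts to $f$ on $C$. No isometry equation, no pseudo-injectivity. The same projection trick is what makes your ``if'' direction restriction well-defined as well (there you use that $\varphi$ itself satisfies (6.1), via Corollary~3.1, to see $\varphi(\alpha)\mid_{\Omega_b}$ depends only on $\alpha\mid_{\Omega_b}$).
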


\begin{proof}
First, we prove the ``if'' part. Consider $c\in\omega[\Omega]$, and write $M\triangleq\prod_{(i\in\Omega,\omega(i)=c)}H_{i}$. Let $B$ be an $S$-submodule of $M$, and let $\rho\in\Hom_{S}(B,M)$ preserve Hamming weight. Define $C\triangleq\{\sum_{(i\in\Omega,\omega(i)=c)}\eta_{i}(\beta_{i})\mid\beta\in B\}$, and define $f\in\Hom_{S}(C,\mathbf{H})$ as $f(\sum_{(i\in\Omega,\omega(i)=c)}\eta_{i}(\beta_{i}))=\sum_{(i\in\Omega,\omega(i)=c)}\eta_{i}(\rho(\beta)_{i})$ for all $\beta\in B$. We note that $f$ satisfies (6.1). Hence we can choose $\varphi\in\GL_{((\Omega,=),\omega)}(\mathbf{H})$ with $\varphi\mid_{C}=f$. By Corollary 3.1, (6.1) holds true for $\mathbf{H}$ and $\varphi$. Hence there uniquely exists a Hamming weight isometry $\varepsilon\in\Aut_{S}(M)$ defined as $\varepsilon(\alpha\mid_{\{i\in\Omega:\omega(i)=c\}})=\varphi(\alpha)\mid_{\{i\in\Omega:\omega(i)=c\}}$ for all $\alpha\in\mathbf{H}$. Moreover, it is straightforward to verify that $\varepsilon\mid_{B}=\rho$, as desired.

\hspace*{4mm}\,\,Second, we prove the ``only if'' part. Let $C\subseteq\mathbf{H}$ be a linear code and let $f\in\Hom_{S}(C,\mathbf{H})$ satisfy (6.1). Consider an arbitrary $c\in\omega[\Omega]$. Define $A_{c}\triangleq\{\alpha\mid_{\{i\in\Omega:\omega(i)=c\}}\mid \alpha\in C\}$. Since $f$ satisfies (6.1), there uniquely exists $\rho_{c}\in\Hom_{S}(A_{c},\prod_{(i\in\Omega,\omega(i)=c)}H_{i})$ defined as $\rho_{c}(\alpha\mid_{\{i\in\Omega:\omega(i)=c\}})=f(\alpha)\mid_{\{i\in\Omega:\omega(i)=c\}}$ for all $\alpha\in C$. Moreover, we note that $\rho_{c}$ preserves Hamming weight. Hence we can choose a Hamming weight isometry $\varepsilon_{c}$ of $\prod_{(i\in\Omega,\omega(i)=c)}H_{i}$ such that $\varepsilon_{c}\mid_{A_{c}}=\rho_{c}$. Now define $\varphi\in\End_{S}(\mathbf{H})$ such that for any $\alpha\in \mathbf{H}$, we have $\varphi(\alpha)\mid_{\{i\in\Omega:\omega(i)=c\}}=\varepsilon_{c}(\alpha\mid_{\{i\in\Omega:\omega(i)=c\}})$ for all $c\in\omega[\Omega]$. It is straightforward to verify that $\varphi\in\GL_{((\Omega,=),\omega)}(\mathbf{H})$ and $\varphi\mid_{C}=f$, as desired.

\hspace*{4mm}\,\,Finally, we note that for a linear code $C\subseteq\mathbf{H}$ and $f\in\Hom_{S}(C,\mathbf{H})$, if $f$ satisfies (6.1), then $f$ preserves $((\Omega,=),\omega)$-weight; and if $((\Omega,=),\omega)$ satisfies the UDP and $f$ preserves $((\Omega,=),\omega)$-weight, then $f$ satisfies (6.1). Hence the rest immediately follows from the two proven parts.
\end{proof}

\setlength{\parindent}{2em}
Now we are ready to give the connections between the MEP for $(\mathbf{P},\omega)$-weight and the MEP for Hamming weight. The following theorem is the second main result of this subsection.

\setlength{\parindent}{0em}
\begin{theorem}
{Assume that $\mathbf{P}$ is hierarchical. Then, it holds that:

{\bf{(1)}}\,\,$\mathbf{H}$ satisfies the MEP for $(\mathbf{P},\omega)$-weight if and only if $\mathbf{H}$ satisfies Condition (A), $(\mathbf{H},\mathbf{P})$ satisfies Condition (B), and for any $r\in[1,m]$, $\prod_{i\in W_{r}}H_{i}$ satisfies the MEP for $((W_{r},=),\omega\mid_{W_{r}})$-weight;

{\bf{(2)}}\,\,If $\mathbf{H}$ satisfies the MEP for $(\mathbf{P},\omega)$-weight, then for any $r\in[1,m]$ and $b\in\omega[W_{r}]$, $\prod_{(i\in W_{r},\omega(i)=b)}H_{i}$ satisfies the MEP for Hamming weight;

{\bf{(3)}}\,\,If $(\mathbf{P},\omega)$ satisfies the UDP, $\mathbf{H}$ satisfies Condition (A), $(\mathbf{H},\mathbf{P})$ satisfies Condition (B), and for any $r\in[1,m]$, $b\in\omega[W_{r}]$, $\prod_{(i\in W_{r},\omega(i)=b)}H_{i}$ satisfies the MEP for Hamming weight, then $\mathbf{H}$ satisfies the MEP for $(\mathbf{P},\omega)$-weight;

{\bf{(4)}}\,\,Suppose that $\mathbf{H}$ satisfies Condition (C). Then, $\mathbf{H}$ satisfies the MEP for $(\mathbf{P},\omega)$-weight if and only if $\mathbf{H}$ satisfies Condition (A), $(\mathbf{H},\mathbf{P})$ satisfies Condition (B), $(\mathbf{H},(\mathbf{P},\omega))$ satisfies Condition (D), and for any $r\in[1,m]$, $b\in\omega[W_{r}]$, $\prod_{(i\in W_{r},\omega(i)=b)}H_{i}$ satisfies the MEP for Hamming weight.
}
\end{theorem}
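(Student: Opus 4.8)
The plan is to make Part (1) the structural core and then read off Parts (2)--(4) from it together with Lemma 6.2, Theorem 6.1 and a short descent argument for the UDP. First I would record the elementary geometry of $(\mathbf{P},\omega)$-weight on a hierarchical poset. Writing $M_{r}\triangleq\sum_{j=1}^{r}\sum_{i\in W_{j}}\omega(i)$, one checks that each $W_{r}$ is an anti-chain, each $\bigcup_{j=1}^{r}W_{j}$ is an ideal, that $\wt_{(\mathbf{P},\omega)}(\beta)\leqslant M_{r}\Longleftrightarrow\supp(\beta)\subseteq\bigcup_{j=1}^{r}W_{j}$, and that $\wt_{(\mathbf{P},\omega)}(\beta)=M_{r-1}+\sum_{i\in\supp(\beta)}\omega(i)$ for every nonzero $\beta\in\delta(W_{r})$. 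Two consequences are used repeatedly: a homomorphism from a code into $\delta(W_{r})$ preserves $(\mathbf{P},\omega)$-weight if and only if it preserves $((W_{r},=),\omega\mid_{W_{r}})$-weight; and every $\varphi\in\GL_{(\mathbf{P},\omega)}(\mathbf{H})$ carries each $\delta(\bigcup_{j=1}^{r}W_{j})$ onto itself (immediate from the weight inequality, or from Corollary 3.1 since the associated order automorphism preserves $\len_{\mathbf{P}}$).

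For the ``only if'' direction of Part (1): Condition (A) follows from Lemma 6.1(1) and Condition (B) from the first part of Lemma 4.1, because any map preserving $\mathbf{P}$-support preserves $(\mathbf{P},\omega)$-weight and any $(\mathbf{P},\omega)$-isometry lies in $\End_{S}(\mathbf{H})$. For the third condition, given a linear code $C\subseteq\prod_{i\in W_{r}}H_{i}$ and $f$ preserving $((W_{r},=),\omega\mid_{W_{r}})$-weight, I view $C$ inside $\delta(W_{r})\subseteq\mathbf{H}$, note that $f$ then preserves $(\mathbf{P},\omega)$-weight, extend it to $\varphi\in\GL_{(\mathbf{P},\omega)}(\mathbf{H})$, and set $\varepsilon\triangleq\pi_{W_{r}}\circ\varphi\mid_{\delta(W_{r})}$; using that $\varphi$ sends $\delta(W_{r})$ into $\delta(\bigcup_{j=1}^{r}W_{j})$ while $\varphi^{-1}$ sends $\delta(\bigcup_{j=1}^{r-1}W_{j})$ into itself, one verifies that $\varepsilon$, with inverse built analogously from $\varphi^{-1}$, is an automorphism of $\prod_{i\in W_{r}}H_{i}$ preserving $((W_{r},=),\omega\mid_{W_{r}})$-weight and restricting to $f$. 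The ``if'' direction of Part (1) I would prove by induction on $m$; the base case $m=1$ is exactly the third hypothesis. For the inductive step, given $C\subseteq\mathbf{H}$ and $f$ preserving $(\mathbf{P},\omega)$-weight, the idea is to reduce $f$ to a $\mathbf{P}$-support-preserving map and then invoke Theorem 4.1(2). Since $\ker(\pi_{W_{m}}\mid_{C})=C\cap\delta(\bigcup_{j=1}^{m-1}W_{j})$ is carried by $f$ into $\delta(\bigcup_{j=1}^{m-1}W_{j})$, the map $\pi_{W_{m}}\circ f$ factors through $\pi_{W_{m}}\mid_{C}$ as some $\bar f$ preserving $((W_{m},=),\omega\mid_{W_{m}})$-weight, which by hypothesis extends to an isometry of $\prod_{i\in W_{m}}H_{i}$ and then, by the identity on $\bigcup_{j=1}^{m-1}W_{j}$, to $\varphi_{1}\in\GL_{(\mathbf{P},\omega)}(\mathbf{H})$ via Corollary 3.1; the composite $g\triangleq\varphi_{1}^{-1}\circ f$ satisfies $g(\alpha)\mid_{W_{m}}=\alpha\mid_{W_{m}}$ for all $\alpha\in C$. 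Applying the inductive hypothesis to $\mathbf{P}\mid_{\bigcup_{j=1}^{m-1}W_{j}}$ on $\delta(\bigcup_{j=1}^{m-1}W_{j})$ gives an isometry extending $g\mid_{C\cap\delta(\bigcup_{j=1}^{m-1}W_{j})}$, extended by the identity on $W_{m}$ to $\varphi_{2}\in\GL_{(\mathbf{P},\omega)}(\mathbf{H})$; a case split on whether $\supp(\alpha)\cap W_{m}=\emptyset$ shows $\varphi_{2}^{-1}\circ g$ preserves $\mathbf{P}$-support, Theorem 4.1(2) extends it to $\psi\in\GL_{\mathbf{P}}(\mathbf{H})$, and $\varphi_{1}\circ\varphi_{2}\circ\psi\in\GL_{(\mathbf{P},\omega)}(\mathbf{H})$ extends $f$.

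Part (2) is then immediate from the third clause of Part (1) and the ``Consequently'' clause of Lemma 6.2 applied to $(W_{r},\omega\mid_{W_{r}})$. For Part (3), I would first show that the UDP of $(\mathbf{P},\omega)$ restricts to the UDP of $((W_{r},=),\omega\mid_{W_{r}})$: given $I,J\subseteq W_{r}$ of equal total $\omega$-weight, the ideals $(\bigcup_{j=1}^{r-1}W_{j})\cup I$ and $(\bigcup_{j=1}^{r-1}W_{j})\cup J$ have equal weight, so the UDP yields $\lambda\in\Aut(\mathbf{P})$ which, being $\len_{\mathbf{P}}$-preserving, restricts to a weight-preserving permutation of $W_{r}$ carrying $I$ to $J$; then the converse clause of Lemma 6.2 shows each $\prod_{i\in W_{r}}H_{i}$ satisfies the MEP for $((W_{r},=),\omega\mid_{W_{r}})$-weight, and the ``if'' direction of Part (1) concludes. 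For Part (4), under Condition (C): ``only if'' is Theorem 6.1(2) (which yields Conditions (A), (B), (D)) together with Part (2); ``if'' uses that Condition (D) subsumes the UDP, so Conditions (A), (B) and the UDP, together with the Hamming extension hypotheses, feed into Part (3).

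I expect the main obstacle to be the ``if'' direction of Part (1): one must verify carefully that straightening the top level leaves $\supp(g(\alpha))\cap W_{m}$ equal to $\supp(\alpha)\cap W_{m}$ while the inductive automorphism repairs the part below level $m$, and that the two extensions --- one supported on $W_{m}$, one on $\bigcup_{j=1}^{m-1}W_{j}$ --- do not interfere; more generally, keeping track of the $\len_{\mathbf{P}}$-preserving permutations supplied by Corollary 3.1, and of exactly when $\delta(\bigcup_{j=1}^{r}W_{j})$ is preserved, is the bulk of the bookkeeping. The forward direction, by contrast, is direct once the filtration identities are in hand.
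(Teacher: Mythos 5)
Your proposal is correct and follows essentially the same route as the paper: both directions of Part (1) hinge on the filtration by $\delta(\bigcup_{j\leqslant r}W_j)$ and on Corollary 3.1, Lemma 6.2, Theorem 4.1 and Lemma 6.1, and Parts (2)--(4) are then read off in the same way (Lemma 6.2 for (2), restriction of the UDP to each level plus the ``if'' of (1) for (3), Theorem 6.1(2) plus (2)/(3) for (4)). The one organizational difference is in the ``if'' direction of Part (1): the paper produces a single $\lambda\in\Aut(\mathbf{P})$ governing $\langle\supp(f(\alpha))\rangle_{\mathbf{P}}$ for all $\alpha\in C$ and then invokes Lemma 6.1(2), whereas you explicitly conjugate by $\varphi_1$ (top level) and $\varphi_2$ (induction below) to reduce $f$ to a $\mathbf{P}$-support-preserving map and apply Theorem 4.1(2) directly; these are equivalent, and both rely on exactly the same ingredients.
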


\begin{proof}
{\bf{(1)}}\,\,First, we prove the ``only if'' part. The first two assertions follow from Lemmas 6.1 and 4.1, respectively. Now consider $r\in[1,m]$. Let $D$ be an $S$-submodule of $\prod_{i\in W_{r}}H_{i}$, and let $g\in\Hom_{S}(D,\prod_{i\in W_{r}}H_{i})$ such that $g$ preserves $((W_{r},=),\omega\mid_{W_{r}})$-weight. Define $C\triangleq\{\sum_{i\in W_{r}}\eta_{i}(\gamma_{i})\mid \gamma\in D\}$, and define $f\in\Hom_{S}(C,\mathbf{H})$ as $f(\sum_{i\in W_{r}}\eta_{i}(\gamma_{i}))=\sum_{i\in W_{r}}\eta_{i}(g(\gamma)_{i})$ for all $\gamma\in D$. Since $\mathbf{P}$ is hierarchical, $f$ preserves $(\mathbf{P},\omega)$-weight. Hence we can choose $\varphi\in\GL_{(\mathbf{P},\omega)}(\mathbf{H})$ such that $\varphi\mid_{C}=f$. Define $\tau\in\End_{S}(\prod_{i\in W_{r}}H_{i})$ as $\tau(\gamma)=\varphi(\sum_{i\in W_{r}}\eta_{i}(\gamma_{i}))\mid_{W_{r}}$. It follows from the facts $\varphi\in\GL_{(\mathbf{P},\omega)}(\mathbf{H})$ and $\mathbf{P}$ is hierarchical that $\tau$ is a $((W_{r},=),\omega\mid_{W_{r}})$-weight isometry of $\prod_{i\in W_{r}}H_{i}$. Finally, from $\varphi\mid_{C}=f$, one can check that $\tau\mid_{D}=g$, as desired.

\hspace*{4mm}\,\,Second, we prove the ``if'' part. Let $C\subseteq\mathbf{H}$ be a linear code, and let $f\in\Hom_{S}(C,\mathbf{H})$ such that $f$ preserves $(\mathbf{P},\omega)$-weight. Fix $r\in[1,m]$ such that $C\subseteq\delta(\bigcup_{j=1}^{r}W_{j})$, and let $C_1\triangleq C\cap\delta(\bigcup_{j=1}^{r-1}W_{j})$. Since $\mathbf{P}$ is hierarchical, the following two statements hold true:

$(i)$\,\,$f[C]\subseteq\delta(\bigcup_{j=1}^{r}W_{j})$, $f^{-1}[\delta(\bigcup_{j=1}^{r-1}W_{j})]=C_1$;

$(ii)$\,\,$\varpi(\supp(\alpha)\cap W_{r})=\varpi(\supp(f(\alpha))\cap W_{r})$ for all $\alpha\in C$.

By induction, we assume that $f\mid_{C_1}$ extends to a $(\mathbf{P},\omega)$-weight isometry of $\mathbf{H}$. By Corollary 3.1, we can choose $\mu\in\Aut(\mathbf{P})$ such that $\omega(i)=\omega(\mu(i))$, $H_{i}\cong H_{\mu(i)}$ for all $i\in\Omega$ and $\langle\supp(f(\alpha))\rangle_{\mathbf{P}}=\mu[\langle\supp(\alpha)\rangle_{\mathbf{P}}]$ for all $\alpha\in C_{1}$. Now let $D=\{\alpha\mid_{W_{r}}\mid \alpha\in C\}$. By $(i)$ and $(ii)$, there uniquely exists $g\in\Hom_{S}(D,\prod_{i\in W_{r}}H_{i})$ defined as $g(\alpha\mid_{W_{r}})=f(\alpha)\mid_{W_{r}}$ for all $\alpha\in C$. Moreover, $g$ preserves $((W_{r},=),\omega\mid_{W_{r}})$-weight. Hence we can choose a $((W_{r},=),\omega\mid_{W_{r}})$-weight isometry $\rho\in\Aut_{S}(\prod_{i\in W_{r}}H_{i})$ such that $\rho\mid_{D}=g$. Applying Corollary 3.1 to $\prod_{i\in W_{r}}H_{i}$, $((W_{r},=),\omega\mid_{W_{r}})$ and $\rho$, we can choose a permutation $\sigma$ of $W_{r}$ such that $\omega(i)=\omega(\sigma(i))$, $H_{i}\cong H_{\sigma(i)}$ for all $i\in W_{r}$ and $\supp(g(\gamma))=\sigma[\supp(\gamma)]$ for all $\gamma\in D$. From the definition of $D$ and $g$, we infer that $\supp(f(\alpha))\cap W_{r}=\sigma[\supp(\alpha)\cap W_{r}]$ for all $\alpha\in C$. Now define $\lambda:\Omega\longrightarrow\Omega$ as $\lambda\mid_{W_{r}}=\sigma$ and $\lambda\mid_{\Omega-W_{r}}=\mu\mid_{\Omega-W_{r}}$. Since $\mathbf{P}$ is hierarchical, we have $\lambda\in\Aut(\mathbf{P})$. Moreover, it can be readily verified that $\omega(i)=\omega(\lambda(i))$, $H_{i}\cong H_{\lambda(i)}$ for all $i\in \Omega$. Now for an arbitrary $\alpha\in C$, we will show that $\langle\supp(f(\alpha))\rangle_{\mathbf{P}}=\lambda[\langle\supp(\alpha)\rangle_{\mathbf{P}}]$. If $\alpha\in C_1$, then we have $\langle\supp(f(\alpha))\rangle_{\mathbf{P}}=\mu[\langle\supp(\alpha)\rangle_{\mathbf{P}}]$ and $\langle\supp(\alpha)\rangle_{\mathbf{P}}\subseteq\bigcup_{j=1}^{r-1}W_{j}$, which, along with $\lambda\mid_{\Omega-W_{r}}=\mu\mid_{\Omega-W_{r}}$, implies that $\langle\supp(f(\alpha))\rangle_{\mathbf{P}}=\lambda[\langle\supp(\alpha)\rangle_{\mathbf{P}}]$, as desired. Therefore in the following, we assume that $\alpha\in C-C_1$. By $\lambda\mid_{W_{r}}=\sigma$, we have $\supp(f(\alpha))\cap W_{r}=\lambda[\supp(\alpha)\cap W_{r}]$. Moreover, by $(i)$, we have $\supp(\alpha)\subseteq\bigcup_{j=1}^{r}W_{j}$, $\supp(\alpha)\nsubseteq\bigcup_{j=1}^{r-1}W_{j}$, $\supp(f(\alpha))\subseteq\bigcup_{j=1}^{r}W_{j}$, $\supp(f(\alpha))\nsubseteq\bigcup_{j=1}^{r-1}W_{j}$. Since $\mathbf{P}$ is hierarchical and $\lambda\in\Aut(\mathbf{P})$, we have
\begin{eqnarray*}
\begin{split}
\langle\supp(f(\alpha))\rangle_{\mathbf{P}}&=\langle\supp(f(\alpha))\cap W_{r}\rangle_{\mathbf{P}}=\langle\lambda[\supp(\alpha)\cap W_{r}]\rangle_{\mathbf{P}}\\
&=\lambda[\langle\supp(\alpha)\cap W_{r}\rangle_{\mathbf{P}}]=\lambda[\langle\supp(\alpha)\rangle_{\mathbf{P}}],
\end{split}
\end{eqnarray*}
as desired. Now Theorem 4.1 implies that $\mathbf{H}$ satisfies the MEP for $\mathbf{P}$-support, which, along with (2) of Lemma 6.1, further implies that $f$ extends to a $(\mathbf{P},\omega)$-weight isometry of $\mathbf{H}$, as desired.

{\bf{(2)}}\,\,For $r\in[1,m]$, it follows from (1) that $\prod_{i\in W_{r}}H_{i}$ satisfies the MEP for $((W_{r},=),\omega\mid_{W_{r}})$-weight, and hence the desired result follows from applying Lemma 6.2 to $\prod_{i\in W_{r}}H_{i}$ and $((W_{r},=),\omega\mid_{W_{r}})$.

{\bf{(3)}}\,\,Consider an arbitrary $r\in[1,m]$. We first show that $((W_{r},=),\omega\mid_{W_{r}})$ satisfies the UDP. Let $U,V\subseteq W_{r}$ with $\varpi(U)=\varpi(V)$. Since $\mathbf{P}$ is hierarchical, we have $\varpi(\langle U\rangle_{\mathbf{P}})=\varpi(\langle V\rangle_{\mathbf{P}})$. Hence we can choose $\lambda\in\Aut(\mathbf{P})$ such that $\langle V\rangle_{\mathbf{P}}=\lambda[\langle U\rangle_{\mathbf{P}}]$ and $\omega(i)=\omega(\lambda(i))$ for all $i\in\Omega$. It follows that $\mu\triangleq\lambda\mid_{W_{r}}$ is a permutation of $W_{r}$, $V=\mu[U]$ and $\omega(i)=\omega(\mu(i))$ for all $i\in W_{r}$, as desired. Applying Lemma 6.2 to $\prod_{i\in W_{r}}H_{i}$ and $((W_{r},=),\omega\mid_{W_{r}})$, we deduce that $\prod_{i\in W_{r}}H_{i}$ satisfies the MEP for $((W_{r},=),\omega\mid_{W_{r}})$-weight. Now the desired result immediately follows from (1).

{\bf{(4)}}\,\,The ``if'' part follows from (3), and the ``only if'' part follows from (2) along with (2) of Theorem 6.1, as desired.
\end{proof}

\setlength{\parindent}{2em}
In the following corollary, we apply Theorems 6.1 and 6.2 to $\mathbf{P}$-weight. In particular, we show that if $\mathbf{H}$ satisfies Condition (C), then the MEP for $\mathbf{P}$-weight can be discussed without assuming $\mathbf{P}$ to be hierarchical.

\setlength{\parindent}{0em}
\begin{corollary}
{{\bf{(1)}}\,\,Suppose that $\mathbf{P}$ is hierarchical. Then, $\mathbf{H}$ satisfies the MEP for $\mathbf{P}$-weight if and only if $\mathbf{H}$ satisfies Condition (A), $(\mathbf{H},\mathbf{P})$ satisfies Condition (B), and for any $r\in[1,m]$, $\prod_{i\in W_{r}}H_{i}$ satisfies the MEP for Hamming weight.

{\bf{(2)}}\,\,Suppose that $\mathbf{H}$ satisfies Condition (C). Then, $\mathbf{H}$ satisfies the MEP for $\mathbf{P}$-weight if and only if $\mathbf{H}$ satisfies Condition (A), $(\mathbf{H},\mathbf{P})$ satisfies Conditions (B) and (E), and for any $r\in[1,m]$, $\prod_{i\in W_{r}}H_{i}$ satisfies the MEP for Hamming weight.
}
\end{corollary}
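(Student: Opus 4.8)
The plan is to deduce Corollary 6.1 from Theorems 6.1 and 6.2 together with Lemma 2.1, using the fact that $\mathbf{P}$-weight is precisely $(\mathbf{P},\omega)$-weight for $\omega$ identically $1$. The first step I would record is an elementary translation: when $\omega$ is identically $1$, each $W_{r}$ ($r\in[1,m]$) is nonempty (as $m$ is the largest cardinality of a chain in $\mathbf{P}$), the pair $((W_{r},=),\omega\mid_{W_{r}})$ consists of an anti-chain together with the constant weight $1$, and hence by (2.6) the $((W_{r},=),\omega\mid_{W_{r}})$-weight on $\prod_{i\in W_{r}}H_{i}$ is exactly the Hamming weight; consequently the MEP for $((W_{r},=),\omega\mid_{W_{r}})$-weight coincides with the MEP for Hamming weight. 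Moreover $\omega[W_{r}]=\{1\}$, so $\prod_{(i\in W_{r},\omega(i)=b)}H_{i}=\prod_{i\in W_{r}}H_{i}$ for the unique relevant $b$.

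For part (1), with $\mathbf{P}$ hierarchical I would apply Theorem 6.2(1) to $(\mathbf{P},\omega)$ with $\omega$ identically $1$. The condition appearing there, that for every $r$ the module $\prod_{i\in W_{r}}H_{i}$ satisfy the MEP for $((W_{r},=),\omega\mid_{W_{r}})$-weight, becomes, by the translation above, the condition that $\prod_{i\in W_{r}}H_{i}$ satisfy the MEP for Hamming weight, which yields the stated equivalence verbatim.

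For the ``if'' direction of part (2), I would observe that Condition (E) includes the assertion that $\mathbf{P}$ is hierarchical, so Conditions (A), (B) and the Hamming-MEP hypothesis feed directly into the ``if'' direction of part (1) and give the MEP for $\mathbf{P}$-weight (the isomorphism clause of Condition (E) is not even needed here). For the ``only if'' direction, $\mathbf{P}$ is not assumed hierarchical and this must be extracted: assuming the MEP for $\mathbf{P}$-weight together with Condition (C), Theorem 6.1(2) gives that $\mathbf{H}$ satisfies Conditions (A) and (B) and that $(\mathbf{H},(\mathbf{P},\omega))$ satisfies Condition (D); since $\omega$ is identically $1$, Lemma 2.1 upgrades Condition (D) to Condition (E), whose very definition forces $\mathbf{P}$ to be hierarchical. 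With hierarchy now in hand, the ``only if'' direction of part (1) (equivalently Theorem 6.2(2)) supplies the remaining assertion that each $\prod_{i\in W_{r}}H_{i}$ satisfies the MEP for Hamming weight.

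No genuinely new difficulty arises; the corollary is essentially a bookkeeping consequence of the preceding theorems. The one point deserving attention is the ``only if'' direction of part (2): the passage from the MEP for $\mathbf{P}$-weight to the hierarchy of $\mathbf{P}$ is not immediate but runs through Condition (C), Theorem 6.1 and Lemma 2.1, and it is exactly this step that allows the hierarchy hypothesis of part (1) to be dropped in favour of Condition (C).
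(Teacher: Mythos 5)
Your proposal is correct and follows exactly the same route as the paper: part (1) is Theorem 6.2(1) specialized to the constant weight $1$ (after observing that $((W_r,=),1)$-weight coincides with Hamming weight), and the ``only if'' direction of part (2) obtains Conditions (A), (B), (D) from Theorem 6.1(2) under Condition (C), upgrades (D) to (E) via Lemma 2.1 to establish that $\mathbf{P}$ is hierarchical, and then invokes part (1). Your extra observations (that each $W_r$ is nonempty, and that the isomorphism clause of Condition (E) plays no role in the ``if'' direction) are accurate and slightly more explicit than the paper's one-line proof, but the underlying argument is identical.
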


\begin{proof}
Part (1) follows from applying (1) of Theorem 6.2 to the constant $1$ map. Moreover, the ``if'' part of (2) follows from (1), and the ``only if'' part of (2) follows from applying (2) of Theorem 6.1 to the constant $1$ map, along with (1) and Lemma 2.1.
\end{proof}

\setlength{\parindent}{2em}
Next, combining Theorem 6.2 and Proposition 5.2, we give some sufficient conditions for the MEP for $(\mathbf{P},\omega)$-weight.

\setlength{\parindent}{0em}
\begin{theorem}
{Suppose that $\mathbf{P}$ is hierarchical, $\mathbf{H}$ satisfies Condition (A), $(\mathbf{H},\mathbf{P})$ satisfies Condition (B), and $(\mathbf{H},(\mathbf{P},\omega))$ satisfies Condition (D). Further assume that for any $r\in[1,m]$ and $b\in\omega[W_{r}]$, one of the following five conditions holds:

{\bf{(1)}}\,\,For any $i\in W_{r}$ such that $\omega(i)=b$, $H_{i}$ is finite and $\soc_{S}(H_{i})$ is cyclic;

{\bf{(2)}}\,\,All the $S$-submodules of $\prod_{(i\in W_{r},\omega(i)=b)}H_{i}$ are cyclic;

{\bf{(3)}}\,\,$\prod_{(i\in W_{r},\omega(i)=b)}H_{i}$ is finite and has a non-cyclic $S$-submodule, and it holds that $|\{i\in W_{r}\mid\omega(i)=b\}|\leqslant\zeta_{S}(\prod_{(i\in W_{r},\omega(i)=b)}H_{i})-1$;

{\bf{(4)}}\,\,$|\{i\in W_{r}\mid\omega(i)=b\}|\leqslant2$;

{\bf{(5)}}\,\,For any proper ideal $Q$ of $S$, it holds that $S/Q$ is infinite.

Then, $\mathbf{H}$ satisfies the MEP for $(\mathbf{P},\omega)$-weight.
}
\end{theorem}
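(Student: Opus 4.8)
The plan is to derive the theorem from part~(3) of Theorem~6.2. That statement requires $\mathbf{P}$ to be hierarchical, $(\mathbf{P},\omega)$ to satisfy the UDP, $\mathbf{H}$ to satisfy Condition~(A), and $(\mathbf{H},\mathbf{P})$ to satisfy Condition~(B); the first is a hypothesis here, while the other three are supplied by Condition~(D) (which contains the UDP) together with Conditions~(A) and~(B). Hence it suffices to prove that for every $r\in[1,m]$ and every $b\in\omega[W_{r}]$ the module $\prod_{(i\in W_{r},\omega(i)=b)}H_{i}$ satisfies the MEP for Hamming weight, and this is what I would establish for each fixed $r$ and $b$.

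So fix such $r$ and $b$, write $I\triangleq\{i\in W_{r}\mid\omega(i)=b\}$ and $n\triangleq|I|\geqslant1$. Every $i\in I$ has $\len_{\mathbf{P}}(i)=r$ and $\omega(i)=b$, so the second half of Condition~(D) gives $H_{i}\cong H_{i'}$ for all $i,i'\in I$; fixing $i_{0}\in I$ and putting $M\triangleq H_{i_{0}}$, Condition~(A) says that $M$ is strong pseudo-injective. Choosing $S$-isomorphisms $\phi_{i}:H_{i}\longrightarrow M$ for $i\in I$ and letting $\Phi:\prod_{i\in I}H_{i}\longrightarrow M^{n}$ act coordinatewise by $\Phi(\beta)_{i}=\phi_{i}(\beta_{i})$, we obtain an $S$-isomorphism that preserves supports, hence Hamming weight; therefore $\prod_{(i\in W_{r},\omega(i)=b)}H_{i}$ satisfies the MEP for Hamming weight if and only if $M^{n}$ does. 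I would then prove the latter by showing that every $S$-submodule $C\subseteq M^{n}$ and every Hamming-weight-preserving $f\in\Hom_{S}(C,M^{n})$ extend to a Hamming weight isometry of $M^{n}$, which I would obtain from Proposition~5.2 applied to $M$, $n$ and $C$, once $C$ is shown to satisfy one of the five conditions listed there.

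The final step is to match the five alternatives of the present theorem (for this $r,b$) against the five of Proposition~5.2 (for an arbitrary $C\subseteq M^{n}$), using $\prod_{(i\in W_{r},\omega(i)=b)}H_{i}\cong M^{n}$ and $|I|=n$. Alternatives~(4) and~(5) are literally Proposition~5.2~(4) and~(5) and do not involve $C$. Alternative~(1) forces $M\cong H_{i_{0}}$ to be finite with $\soc_{S}(M)$ cyclic, so $C$ is finite and Proposition~5.2~(1) applies. Alternative~(2) says every $S$-submodule of $M^{n}$ is cyclic, hence every $S$-submodule of $C$ is cyclic and Proposition~5.2~(2) applies. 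The delicate case is alternative~(3): it gives that $M^{n}$ is finite with a non-cyclic $S$-submodule and $n\leqslant\zeta_{S}(M^{n})-1$, so $C$ is finite; if every $S$-submodule of $C$ is cyclic we are again under Proposition~5.2~(2), and otherwise $C$ has a non-cyclic submodule, so $\zeta_{S}(C)$ is defined and I claim $\zeta_{S}(C)\geqslant\zeta_{S}(M^{n})$. Indeed, any non-cyclic $S$-submodule of $C$ is a non-cyclic $S$-submodule of $M^{n}$, and for such a submodule $W$ the set of submodules below $W$, together with all the relevant M\"{o}bius-function values $\mu(U,W)$, is the same computed in the submodule lattice of $C$ or in that of $M^{n}$; hence the quantity $\tfrac{1}{2}\sum_{(U\subseteq W)}|\mu(U,W)|$ is unchanged, and the minimum defining $\zeta_{S}(C)$ ranges over a subset of the modules used for $\zeta_{S}(M^{n})$, giving $\zeta_{S}(C)\geqslant\zeta_{S}(M^{n})\geqslant n+1$, i.e. $n\leqslant\zeta_{S}(C)-1$, which is Proposition~5.2~(3). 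In every case Proposition~5.2 produces the required extension, so $M^{n}$, and therefore $\prod_{(i\in W_{r},\omega(i)=b)}H_{i}$, satisfies the MEP for Hamming weight, and Theorem~6.2~(3) then completes the proof. The monotonicity of $\zeta_{S}$ under passage to submodules is the only genuinely non-formal ingredient, and is the step I would expect to be the main, if mild, obstacle.
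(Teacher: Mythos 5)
Your proposal is correct and matches the paper's argument step for step: use Conditions (A) and (D) to identify $\prod_{(i\in W_{r},\omega(i)=b)}H_{i}$ with $M^{n}$ for a single strong pseudo-injective $M$, invoke Proposition~5.2 for arbitrary $C\subseteq M^{n}$ to get the MEP for Hamming weight, and then apply Theorem~6.2(3). The only place where you add detail beyond what the paper writes out is the monotonicity $\zeta_{S}(C)\geqslant\zeta_{S}(M^{n})$ in case~(3); the paper asserts without elaboration that alternative~(3) yields either (2) or (3) of Proposition~5.2, and your observation that intervals $[U,W]$ (hence M\"obius values) computed in the submodule lattice of $C$ agree with those computed in the lattice of $M^{n}$ is precisely the justification that makes that assertion work.
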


\begin{proof}
Fixing $r\in[1,m]$, $b\in\omega[W_{r}]$, and let $n\triangleq|\{i\in W_{r}\mid \omega(i)=b\}|$. Since $\mathbf{H}$ satisfies Condition (A) and $(\mathbf{H},(\mathbf{P},\omega))$ satisfies Condition (D), we can choose a strong pseudo-injective left $S$-module $M$ such that $H_{i}\cong M$ for all $i\in W_{r}$ with $\omega(i)=b$. Moreover, we identify $\prod_{(i\in W_{r},\omega(i)=b)}H_{i}$ with $M^{n}$. For an arbitrary $S$-submodule $C\subseteq M^{n}$, we note that for $a\in\{1,2,4,5\}$, if $(a)$ is satisfied, then $(a)$ of Proposition 5.2 holds; and if (3) is satisfied, then either (2) or (3) of Proposition 5.2 holds. By Proposition 5.2, any Hamming weight preserving map $f\in\Hom_{S}(C,M^{n})$ extends to a Hamming weight isometry of $M^{n}$. It follows that $\prod_{(i\in W_{r},\omega(i)=b)}H_{i}$ satisfies the MEP for Hamming weight. Now (3) of Theorem 6.2 concludes the proof.
\end{proof}

\setlength{\parindent}{2em}
Finally, we consider the special case that $S$ is an Artinian simple ring.

\setlength{\parindent}{0em}
\begin{theorem}
{Suppose that $S$ is an Artinian simple ring. Let $e\in\mathbb{Z}^{+}$ and $\mathbb{D}$ be a division ring such that $S$ is isomorphic to $Mat_{e}(\mathbb{D})$. For any finitely generated left $S$-module $X$, let $\len_{S}(X)$ denote the length of its composition series. Further assume that $H_{i}\neq\{0\}$ for all $i\in\Omega$. Then, we have:

{\bf{(1)}}\,\,Assume that $\mathbf{P}$ is hierarchical. Then, $\mathbf{H}$ satisfies the MEP for $(\mathbf{P},\omega)$-weight if and only if $\mathbf{H}$ is finitely generated, $(\mathbf{H},(\mathbf{P},\omega))$ satisfies Condition (D), and when $S$ is finite, for any $r\in[1,m]$, $b\in \omega[W_{r}]$, it holds that either $(\forall~i\in W_{r}~s.t.~\omega(i)=b:\len_{S}(H_{i})\leqslant e)$ or $|\{i\in W_{r}\mid\omega(i)=b\}|\leqslant(\prod_{i=1}^{e}(|\mathbb{D}|^{i}+1))-1$.

{\bf{(2)}}\,\,$\mathbf{H}$ satisfies the MEP for $\mathbf{P}$-weight if and only if $\mathbf{H}$ is finitely generated, $(\mathbf{H},\mathbf{P})$ satisfies Condition (E), and when $S$ is finite, for any $r\in[1,m]$, it holds that either $(\forall~i\in W_{r}:\len_{S}(H_{i})\leqslant e)$ or $|W_{r}|\leqslant(\prod_{i=1}^{e}(|\mathbb{D}|^{i}+1))-1$.
}
\end{theorem}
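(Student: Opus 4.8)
The plan is to reduce the statement to the Hamming-weight results of Sections~5 and~6, exploiting that an Artinian simple ring is semisimple. Write $S\cong Mat_{e}(\mathbb{D})$. Then every left $S$-module is a direct sum of copies of the unique (up to isomorphism) simple left $S$-module $V\cong Mat_{e,1}(\mathbb{D})$, a finitely generated one is isomorphic to $V^{k}\cong Mat_{e,k}(\mathbb{D})$ with $k=\len_{S}(\cdot)$, and $\mathbf{H}=\prod_{i\in\Omega}H_{i}$ is finitely generated if and only if each $H_{i}$ is. Since every module over $S$ is injective, $(\mathbf{H},\mathbf{P})$ automatically satisfies Condition~(B), and by Remark~2.4 $\mathbf{H}$ satisfies Condition~(C); moreover, if $S$ is finite then $\mathbb{D}$ is a finite division ring, hence a finite field by Wedderburn's little theorem, so Lemma~5.3 applies with $\mathbb{F}=\mathbb{D}$. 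The crucial observation, given that $H_{i}\neq\{0\}$ for all $i$, is that Condition~(A) is equivalent to $\mathbf{H}$ being finitely generated: I would show that $V^{k}$ is strong pseudo-injective precisely when $k<\infty$ --- if $k<\infty$, then for a submodule $B\subseteq V^{k}$ and an injective $h\in\Hom_{S}(B,V^{k})$, semisimplicity yields $V^{k}=B\oplus B'=h(B)\oplus C$ with $\len_{S}(B)=\len_{S}(h(B))$, forcing $B'\cong C$, so $h$ extends to an automorphism; if $k=\infty$, taking $B=V^{k}$ and $h$ an injective, non-surjective shift endomorphism shows Condition~(A) fails.

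For part~(1): $\mathbf{P}$ is hierarchical and $\mathbf{H}$ satisfies Condition~(C), so part~(4) of Theorem~6.2 gives that $\mathbf{H}$ satisfies the MEP for $(\mathbf{P},\omega)$-weight if and only if $\mathbf{H}$ satisfies Conditions~(A) and~(B), $(\mathbf{H},(\mathbf{P},\omega))$ satisfies Condition~(D), and $\prod_{(i\in W_{r},\omega(i)=b)}H_{i}$ satisfies the MEP for Hamming weight for all $r\in[1,m]$, $b\in\omega[W_{r}]$. By the first paragraph, Condition~(B) is automatic and Condition~(A) is equivalent to $\mathbf{H}$ finitely generated; and once ``$\mathbf{H}$ finitely generated together with Condition~(D)'' is assumed, all the $H_{i}$ with $i\in W_{r}$, $\omega(i)=b$ are isomorphic to a single module $\cong Mat_{e,k}(\mathbb{D})$, $k=\len_{S}(H_{i})$, so $\prod_{(i\in W_{r},\omega(i)=b)}H_{i}\cong Mat_{e,k}(\mathbb{D})^{\,n}$ with $n=|\{i\in W_{r}\mid\omega(i)=b\}|$. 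If $S$ is infinite, this module is finitely generated, hence strong pseudo-injective, the only proper ideal of $S$ is $\{0\}$, and $S/\{0\}=S$ is infinite, so part~(5) of Proposition~5.2 gives the MEP for Hamming weight with no extra restriction; if $S$ is finite, Lemma~5.3 gives it if and only if $k\leqslant e$ or $n\leqslant(\prod_{i=1}^{e}(|\mathbb{D}|^{i}+1))-1$. Assembling these equivalences (and noting that under Condition~(D) the clause ``$\len_{S}(H_{i})\leqslant e$ for all $i\in W_{r}$ with $\omega(i)=b$'' is exactly $k\leqslant e$) yields the asserted characterization.

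Part~(2) runs identically with $\omega$ identically $1$, using part~(2) of Corollary~6.1 in place of Theorem~6.2(4): $\mathbf{H}$ satisfies the MEP for $\mathbf{P}$-weight if and only if $\mathbf{H}$ satisfies Conditions~(A), (B) and~(E) and $\prod_{i\in W_{r}}H_{i}$ satisfies the MEP for Hamming weight for every $r$. Here Condition~(E) is equivalent to Condition~(D) by Lemma~2.1 and it already supplies the isomorphisms $H_{u}\cong H_{v}$ for $u,v\in W_{r}$, so the Hamming-weight analysis above applies verbatim with $b$ suppressed and $n=|W_{r}|$. The routine part of the proof is the module-theoretic bookkeeping; the one step that needs genuine care, and which I expect to be the main obstacle, is the equivalence of Condition~(A) with $\mathbf{H}$ being finitely generated, since this is exactly what injects the finite-generation hypothesis into the statement and, in particular, shows that the MEP can fail even when $S$ is an infinite Artinian simple ring.
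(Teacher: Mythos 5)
Your proof is correct and takes essentially the same route as the paper's: reduce to Theorem~6.2(4) (resp.\ Corollary~6.1(2)) after noting that over an Artinian simple ring Condition~(B) is automatic, Condition~(C) holds, and Condition~(A) is equivalent to $\mathbf{H}$ being finitely generated, then dispatch the Hamming-weight MEP via Proposition~5.2(5) in the infinite case and Lemma~5.3 in the finite case. The one difference is cosmetic: you spell out the argument that a module over $Mat_{e}(\mathbb{D})$ is strong pseudo-injective iff finitely generated, which the paper simply asserts, and you invoke Proposition~5.2(5) directly where the paper cites Theorem~6.3(5) (itself proved from Proposition~5.2(5)).
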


\begin{proof}
By Remark 2.2, $\mathbf{H}$ satisfies Condition (C). Since $S$ is Artinian simple, $(\mathbf{H},\mathbf{P})$ satisfies Condition (B); and moreover, a left $S$-module is strong pseudo-injective if and only if it is finitely generated. Hence $\mathbf{H}$ satisfies Condition (A) if and only if $\mathbf{H}$ is finitely generated. We also note that if $S$ is finite, then $\mathbb{D}$ is necessarily a finite field by Wedderburn's theorem.

{\bf{(1)}}\,\,By (4) of Theorem 6.2, we assume that $\mathbf{H}$ is finitely generated and $(\mathbf{H},(\mathbf{P},\omega))$ satisfies Condition (D). Hence for given $r\in[1,m]$, $b\in\omega[W_{r}]$, there uniquely exists $k\in\mathbb{Z}^{+}$ such that for any $i\in W_{r}$ with $\omega(i)=b$, we have $\len_{S}(H_{i})=k$, and hence $H_{i}\cong Mat_{e,k}(\mathbb{D})$. Now if $S$ is infinite, then the result follows from (5) of Theorem 6.3; and if $S$ is finite, then the result follows from (4) of Theorem 6.2 and Lemma 5.3, as desired.

{\bf{(2)}}\,\,By (2) of Corollary 6.1 and Lemma 2.1, the desired result immediately follows from applying (1) to the constant $1$ map.
\end{proof}

\subsection{Connections with other coding-theoretic properties}

\setlength{\parindent}{2em}
In this subsection, we assume that $S=\mathbb{F}$ is a finite field with $|\mathbb{F}|=q$, $(k_{i}\mid i\in\Omega)$ is a family of positive integers, and $\mathbf{H}=\prod_{i\in\Omega}\mathbb{F}^{k_{i}}$. With respect to $(\mathbf{P},\omega)$-weight, we will compare the MEP with some other coding-theoretic properties including the MacWilliams identity, Fourier-reflexivity of partitions, the UDP and that whether $\GL_{(\mathbf{P},\omega)}(\mathbf{H})$ acts transitively on codewords with the same $(\mathbf{P},\omega)$-weight.

As usual, the inner product $\langle~,~\rangle:\mathbf{H}\times\mathbf{H}\longrightarrow \mathbb{F}$ is defined as $\langle\alpha,\beta\rangle=\sum_{i\in\Omega}\sum_{t=1}^{k_{i}}\alpha_{i,t}\beta_{i,t}$, where for $\alpha\in\mathbf{H}$ and $i\in\Omega$, $\alpha_{i,t}$ denotes the $t$-th entry of $\alpha_{i}\in\mathbb{F}^{k_{i}}$. For any linear code $C\subseteq \mathbf{H}$, we let $C^{\bot}\triangleq\{\beta\in\mathbf{H}\mid\text{$\langle\alpha,\beta\rangle=0$ for all $\alpha\in C$}\}$ denote the dual code of $C$. A \textit{partition} of $\mathbf{H}$ is a collection of nonempty disjoint subsets of $\mathbf{H}$ whose union is $\mathbf{H}$. Consider a partition $\Gamma$ of $\mathbf{H}$. For any $\beta,\theta\in\mathbf{H}$, we write $\beta\sim_{\Gamma}\theta$ if $\beta$ and $\theta$ belong to the same member of $\Gamma$, and for any $D\subseteq\mathbf{H}$, we refer to the sequence $(|D\cap B|\mid B\in\Gamma)$ as the \textit{$\Gamma$-distribution} of $D$.

\setlength{\parindent}{0em}
\begin{definition}
{\bf{(1)}}\,\,We let $\mathcal{Q}(\mathbf{H},\mathbf{P},\omega)$ and $\mathcal{Q}(\mathbf{H},\mathbf{P})$ denote the partitions of $\mathbf{H}$ such that for any $\beta,\theta\in\mathbf{H}$, $\beta\sim_{\mathcal{Q}(\mathbf{H},\mathbf{P},\omega)}\theta$ if and only if $\wt_{(\mathbf{P},\omega)}(\beta)=\wt_{(\mathbf{P},\omega)}(\theta)$, and $\beta\sim_{\mathcal{Q}(\mathbf{H},\mathbf{P})}\theta$  if and only if $\wt_{\mathbf{P}}(\beta)=\wt_{\mathbf{P}}(\theta)$. Moreover, the partitions $\mathcal{Q}(\mathbf{H},\mathbf{\overline{P}},\omega)$ and $\mathcal{Q}(\mathbf{H},\mathbf{\overline{P}})$ are defined in a parallel fashion.

{\bf{(2)}}\,\,We say that $(\mathbf{P},\omega)$ admits MacWilliams identity if for any two linear codes $C_1,C_2$ with the same $\mathcal{Q}(\mathbf{H},\mathbf{\overline{P}},\omega)$-distribution, ${C_1}^{\bot}$ and ${C_2}^{\bot}$ have the same $\mathcal{Q}(\mathbf{H},\mathbf{P},\omega)$-distribution. We say that $\mathbf{P}$ admits MacWilliams identity if $(\mathbf{P},\omega)$ admits MacWilliams identity when $\omega$ is identically $1$.

{\bf{(3)}}\,\,Let $\chi$ be a nontrivial additive character of $\mathbb{F}$. For a partition $\Gamma$ of $\mathbf{H}$, let $\textbf{\textit{l}}(\Gamma)$ denote the partition of $\mathbf{H}$ such that for any $\alpha,\gamma\in \mathbf{H}$, $\alpha\sim_{\textbf{\textit{l}}(\Gamma)}\gamma$ if and only if $\sum_{\beta\in B}\chi(\langle\alpha,\beta\rangle)=\sum_{\beta\in B}\chi(\langle\gamma,\beta\rangle)$ for all $B\in\Gamma$. A partition $\Gamma$ of $\mathbf{H}$ is said to be \textit{Fourier-reflexive} if $\textbf{\textit{l}}(\textbf{\textit{l}}(\Gamma))=\Gamma$.
\end{definition}

\setlength{\parindent}{2em}
We note that (2) of Definition 6.1 follows [35, Definition 12], [30, Definition I.2] and [41, Definition 2], and (3) of Definition 6.1 follows [22, Definition 1.2]. It is known that the Fourier-reflexivity of a partition is independent of the choice of the nontrivial additive character (see [20, Theorem 2.4], [22, Page 4]). Hence from now on, we fix a nontrivial additive character $\chi$ of $\mathbb{F}$.

For $\mathbf{P}$-weight, it has been proven in [34, Theorem 3] that if $k_{i}=1$ for all $i\in\Omega$, then each of the aforementioned property is equivalent to $\mathbf{P}$ being hierarchical. For $(\mathbf{P},\omega)$-weight with $\mathbf{P}$ hierarchical, Machado and Firer have given a necessary and sufficient condition for  $(\mathbf{P},\omega)$ to admit MacWilliams identity in [35, Theorem 7], and when $|\mathbb{F}|=2$, they have given a necessary and sufficient condition for the MEP in [35, Theorem 8] (also see [18, Theorems 10 and 12]). Their results show that for binary field alphabet, the MEP is strictly stronger than the MacWilliams identity.

Now we further generalize the above mentioned results. The following theorem and its corollary are the main results of this subsection. We note that a large part of them is known, as detailed in Remark 6.1. Despite such a fact, we gather several properties together and compare them with each other, and show that the MEP is strictly stronger than all the others.

\setlength{\parindent}{0em}
\begin{theorem}
{Consider the following seven statements:

{\bf{(1)}}\,\,$\mathbf{H}$ satisfies the MEP for $(\mathbf{P},\omega)$-weight;

{\bf{(2)}}\,\,For any $\gamma,\theta\in\mathbf{H}$ with $\wt_{(\mathbf{P},\omega)}(\gamma)=\wt_{(\mathbf{P},\omega)}(\theta)$, there exists $\psi\in\GL_{(\mathbf{P},\omega)}(\mathbf{H})$ such that $\psi(\gamma)=\theta$;

{\bf{(3)}}\,\,$(\mathbf{H},(\mathbf{P},\omega))$ satisfies Condition (D);

{\bf{(4)}}\,\,$\mathcal{Q}(\mathbf{H},\mathbf{\overline{P}},\omega)=\textbf{\textit{l}}(\mathcal{Q}(\mathbf{H},\mathbf{P},\omega))$;

{\bf{(5)}}\,\,$(\mathbf{P},\omega)$ admits MacWilliams identity;

{\bf{(6)}}\,\,$\mathcal{Q}(\mathbf{H},\mathbf{P},\omega)$ is Fourier-reflexive;

{\bf{(7)}}\,\,For any $r\in[1,m]$ and $b\in\omega[W_{r}]$, either $(\forall~i\in W_{r}~s.t.~\omega(i)=b:k_{i}=1)$ or $|\{i\in W_{r}\mid\omega(i)=b\}|\leqslant q$ holds true.

Then, we have $(1)\Longrightarrow(2)$, $(2)\Longleftrightarrow(3)$, $(3)\Longrightarrow(4)$, $(4)\Longleftrightarrow(5)$ and $(5)\Longrightarrow(6)$. If $\mathbf{P}$ is hierarchical, then $(1)\Longleftrightarrow((3)\wedge(7))$. If $\mathbf{P}$ is hierarchical and $\omega$ is integer-valued, then $(2)$--$(6)$ are equivalent to each other.
}
\end{theorem}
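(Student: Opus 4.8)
The plan is to establish the listed implications one by one, extracting the structural content of each. For $(1)\Rightarrow(2)$, given $\gamma,\theta\in\mathbf{H}$ with $\wt_{(\mathbf{P},\omega)}(\gamma)=\wt_{(\mathbf{P},\omega)}(\theta)$, I would apply the MEP to the cyclic code $C=\mathbb{F}\gamma$ and the map $a\gamma\mapsto a\theta$: over a field $\supp(a\gamma)=\supp(\gamma)$ for $a\neq 0$, so this map is well defined and $(\mathbf{P},\omega)$-weight preserving, and any extension carries $\gamma$ to $\theta$. For $(2)\Leftrightarrow(3)$, note first that $\mathbf{H}$ satisfies Condition (C) (Remark 2.2, since $\mathbb{F}$ is Artinian simple and $\mathbb{F}^{k_i}\neq\{0\}$); then $(2)$ is stronger than — indeed over a field equivalent to — the hypothesis of Theorem 6.1(1), which yields $(2)\Rightarrow(3)$. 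Conversely, assuming $(3)$, the UDP supplies $\lambda\in\Aut(\mathbf{P})$ with $\langle\supp(\theta)\rangle_{\mathbf{P}}=\lambda[\langle\supp(\gamma)\rangle_{\mathbf{P}}]$ and $\omega\circ\lambda=\omega$; since $\lambda$ preserves $\len_{\mathbf{P}}$, the isomorphism clause of Condition (D) gives $H_i\cong H_{\lambda(i)}$ for all $i$, so $\lambda$ lies in the group $Q$ of Corollary 3.1. I would then construct $\psi\in\Aut_S(\mathbf{H})$ of the ``$\lambda$-triangular'' shape characterised by Lemma 3.1 (diagonal blocks $\pi_{\lambda(i)}\circ\psi\circ\eta_i$ invertible, $\pi_j\circ\psi\circ\eta_i=0$ unless $j\preccurlyeq_{\mathbf{P}}\lambda(i)$), choosing the off-diagonal entries by processing the coordinates $i$ from $\mathbf{P}$-maximal downwards so that $\psi(\gamma)=\theta$: at a maximal $i$ one has $\gamma_i\neq 0\Leftrightarrow\theta_{\lambda(i)}\neq 0$, at a non-maximal $i$ any shortfall is absorbed into an off-diagonal entry from some $i'>_{\mathbf{P}}i$ with $\gamma_{i'}\neq 0$, and if no such $i'$ exists then $i\notin\langle\supp(\gamma)\rangle_{\mathbf{P}}$ forces $\theta_{\lambda(i)}=0$. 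Corollary 3.1 then certifies $\psi\in\GL_{(\mathbf{P},\omega)}(\mathbf{H})$.

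For $(3)\Rightarrow(4)$ I would compute the Fourier coefficients of the weight blocks explicitly. Writing $B_w$ for the set of codewords of $(\mathbf{P},\omega)$-weight $w$, Möbius inversion over the distributive lattice $\mathcal{I}(\mathbf{P})$ together with the product formula asserting that $\sum_{\beta\in\delta(I)}\chi(\langle\alpha,\beta\rangle)$ equals $|\delta(I)|$ if $\supp(\alpha)\cap I=\emptyset$ and $0$ otherwise, expresses $\sum_{\beta\in B_w}\chi(\langle\alpha,\beta\rangle)$ as a sum of terms involving the M\"obius function of $\mathcal{I}(\mathbf{P})$, the sizes $|\delta(I')|$, and the indicator of $I'\cap\supp(\alpha)=\emptyset$. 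Since $I'\cap\supp(\alpha)=\emptyset\Leftrightarrow I'\cap\langle\supp(\alpha)\rangle_{\mathbf{\overline{P}}}=\emptyset$, this depends on $\alpha$ only through the filter $\langle\supp(\alpha)\rangle_{\mathbf{\overline{P}}}$, and — using the UDP in its complementary ``filter'' form and the $\Aut(\mathbf{P})$-invariance of $|\delta(\cdot)|$ coming from the isomorphism clause of Condition (D) — only through $\wt_{(\mathbf{\overline{P}},\omega)}(\alpha)$. Hence $\mathcal{Q}(\mathbf{H},\mathbf{\overline{P}},\omega)$ refines $\textbf{\textit{l}}(\mathcal{Q}(\mathbf{H},\mathbf{P},\omega))$; equality follows by a cardinality count, since $|\textbf{\textit{l}}(\Gamma)|\geq|\Gamma|$ always (the matrix of block-character-sums factors as an invertible character table times a full-column-rank incidence matrix) and $|\mathcal{Q}(\mathbf{H},\mathbf{P},\omega)|=|\mathcal{Q}(\mathbf{H},\mathbf{\overline{P}},\omega)|$ via the complementation bijection $\mathcal{I}(\mathbf{P})\to\mathcal{I}(\mathbf{\overline{P}})$, $I\mapsto\Omega-I$. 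The equivalence $(4)\Leftrightarrow(5)$ and the implication $(5)\Rightarrow(6)$ I would draw from the standard dictionary between abstract MacWilliams identities and the operator $\textbf{\textit{l}}$ (Poisson summation shows the $\textbf{\textit{l}}(\Gamma)$-distribution of $C$ determines the $\Gamma$-distribution of $C^{\bot}$, and the matching sharpness statement is known), again upgrading a refinement to an equality with the same two cardinality facts, so that reflexivity $(6)$ drops out because $\textbf{\textit{l}}(\textbf{\textit{l}}(\Gamma))\geq\Gamma$ always and the relevant block-counts agree.

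When $\mathbf{P}$ is hierarchical I would prove $(1)\Leftrightarrow((3)\wedge(7))$ by feeding the analysis of Sections 4 and 6.1 into Lemma 5.3 with $e=1$. For ``$\Rightarrow$'': $(1)\Rightarrow(2)\Rightarrow(3)$ is already established, and Theorem 6.2(2) gives that $\prod_{(i\in W_r,\,\omega(i)=b)}\mathbb{F}^{k_i}$ satisfies the MEP for Hamming weight for each $r$ and $b\in\omega[W_r]$; by Condition (D) the exponents $k_i$ are constant on $\{i\in W_r\mid\omega(i)=b\}$, so this product is of the form $(\mathbb{F}^{k})^{n}$ and Lemma 5.3 ($e=1$) translates its Hamming-MEP into exactly the dichotomy $k=1$ or $n\leq q$, that is, into $(7)$. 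For ``$\Leftarrow$'': over $\mathbb{F}$ each $\mathbb{F}^{k_i}$ is injective and strong pseudo-injective, so Conditions (A), (B) hold automatically; $(3)$ is Condition (D); and $(7)$ with Lemma 5.3 ($e=1$) supplies the Hamming-MEP hypothesis of Theorem 6.2(4), whence $(1)$.

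Under the extra assumption that $\mathbf{P}$ is hierarchical and $\omega$ is integer-valued, the statements $(2)$--$(6)$ are already chained by $(2)\Leftrightarrow(3)\Rightarrow(4)\Leftrightarrow(5)\Rightarrow(6)$, so it remains to prove $(6)\Rightarrow(3)$, and this is the step I expect to be the main obstacle. I would argue by contraposition: if Condition (D) fails then, since $\mathbf{P}$ is hierarchical (and a short computation shows ideal-weight coincidences cannot occur across different length levels), the failure is confined to one level $W_r$ — either the weighted Hamming weight $((W_r,=),\omega\mid_{W_r})$ violates the UDP, or two coordinates of $W_r$ with equal $\omega$-value have different block dimensions $k_i$. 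Using that $\omega$ is integer-valued, in the first case I would replace each $i\in W_r$ by $\omega(i)$ duplicated coordinates to turn the level partition into an ordinary Hamming partition over a (possibly non-uniform) product of copies of $\mathbb{F}^{k_i}$; in the second case the dimension gap is already manifest. In each case I would combine the known description of non-Fourier-reflexive Hamming-type partitions (the $e=1$ obstructions underlying Lemmas 5.2 and 5.3) with the explicit M\"obius/Fourier computation from the $(3)\Rightarrow(4)$ step, transported from $W_r$ to all of $\mathbf{H}$, to exhibit two codewords of distinct $(\mathbf{P},\omega)$-weight lying in a common block of $\textbf{\textit{l}}(\textbf{\textit{l}}(\mathcal{Q}(\mathbf{H},\mathbf{P},\omega)))$, contradicting $(6)$. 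The delicate points are the reduction to a single level and the bookkeeping with the M\"obius function of $\mathcal{I}(\mathbf{P})$ when lifting the local obstruction to the global partition.
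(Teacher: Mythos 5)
Your overall skeleton is correct and you have correctly isolated the implications that need proof, but you diverge from the paper in two significant ways, one of them leaving a genuine gap.

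For $(3)\Rightarrow(2)$, you attempt a direct ``triangular'' construction of $\psi$ in the spirit of Lemma 3.1(3). This can in fact be pushed through (the choices of diagonal block $\rho_i$ and the off-diagonals targeting $\lambda(i)$ are independent across $i$, so no ordering is actually needed; the only nontrivial check is that $i$ maximal in $\supp(\gamma)$ forces $\theta_{\lambda(i)}\neq 0$, which follows from $\langle\supp(\theta)\rangle_{\mathbf{P}}=\lambda[\langle\supp(\gamma)\rangle_{\mathbf{P}}]$). The paper instead observes that over a finite field $\mathbf{H}$ is finite-dimensional, so Corollary 4.1(2) gives the MEP for $\mathbf{P}$-support, hence the hypothesis of Lemma 6.1 holds, and then invokes Lemma 6.1(3) directly. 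Your route is more hands-on and avoids Section 4 machinery; the paper's route is shorter once that machinery is in place. Both are valid. The hierarchical biconditional $(1)\Leftrightarrow((3)\wedge(7))$ you prove by combining Theorem 6.2 with Lemma 5.3 at $e=1$; this is exactly what the paper's cited Theorem 6.4(1) does internally, so no issue there.

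The genuine gap is in the Fourier-analytic chain. The paper outsources $(3)\Rightarrow(4)$, $(4)\Leftrightarrow(5)$ and (crucially) $(6)\Rightarrow(3)$ under the hierarchical, integer-valued hypothesis to reference [51] (Theorem 10 and Proposition 16), and $(4)\Rightarrow(6)$ to the cardinality identity $|\mathcal{Q}(\mathbf{H},\mathbf{\overline{P}},\omega)|=|\mathcal{Q}(\mathbf{H},\mathbf{P},\omega)|$ together with [20, Theorem 2.4]. You instead try to reprove these from scratch. Your sketches of $(3)\Rightarrow(4)$ and $(4)\Leftrightarrow(5)$ via M\"obius inversion, Poisson summation, and a refinement-plus-cardinality argument are pointed in the right direction but are not complete proofs; they are plausible but would need substantial work. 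More seriously, for $(6)\Rightarrow(3)$ you explicitly admit that this is ``the main obstacle'' and that the reduction to a single level and the lifting of the local obstruction are ``delicate points'' you have not resolved. That step is the entire content needed to close the equivalence of $(2)$--$(6)$ under the hierarchical, integer-valued hypothesis, and as written your argument does not establish it. You should either complete the contrapositive argument (the relevant computation is exactly [51, Proposition 16], whose proof you would essentially have to reproduce) or cite that result as the paper does.
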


\begin{proof}
We note that $(1)\Longrightarrow(2)\Longrightarrow(3)$ follows from Theorem 6.1, $(3)\Longrightarrow(4)$ follows from [51, Theorem 10], $(4)\Longleftrightarrow(5)$ follows from [51, Proposition 16], and $(4)\Longrightarrow(6)$ follows from the fact that $|\mathcal{Q}(\mathbf{H},\mathbf{\overline{P}},\omega)|=|\mathcal{Q}(\mathbf{H},\mathbf{P},\omega)|$, together with [20, Theorem 2.4]. Now we prove $(3)\Longrightarrow(2)$. By Corollary 4.1, $\mathbf{H}$ satisfies the MEP for $\mathbf{P}$-support. Let $\gamma,\theta\in\mathbf{H}$ with $\wt_{(\mathbf{P},\omega)}(\gamma)=\wt_{(\mathbf{P},\omega)}(\theta)$. Since $(\mathbf{P},\omega)$ satisfies the UDP, we can choose $\mu\in\Aut(\mathbf{P})$ such that $\langle\supp(\theta)\rangle_{\mathbf{P}}=\mu[\langle\supp(\gamma)\rangle_{\mathbf{P}}]$ and $\omega(i)=\omega(\mu(i))$ for all $i\in\Omega$. For any $i\in\Omega$, it follows from $\omega(i)=\omega(\mu(i))$, $\len_{\mathbf{P}}(i)=\len_{\mathbf{P}}(\mu(i))$ that $k_{i}=k_{\mu(i)}$. By (3) of Lemma 6.1, we can choose $\psi\in\GL_{(\mathbf{P},\omega)}(\mathbf{H})$ such that $\psi(\gamma)=\theta$, which further establishes (2), as desired. Moreover, if  $\mathbf{P}$ is hierarchical, then $(1)\Longleftrightarrow((3)\wedge(7))$ follows from (1) of Theorem 6.4; and if $\mathbf{P}$ is hierarchical and $\omega$ is integer-valued, then by [51, Proposition 16], we have $(6)\Longleftrightarrow(3)$, which further establishes the equivalence between $(2)$--$(6)$, as desired.
\end{proof}

\setlength{\parindent}{0em}
\begin{corollary}
{Consider the following seven statements:

{\bf{(1)}}\,\,$\mathbf{H}$ satisfies the MEP for $\mathbf{P}$-weight;

{\bf{(2)}}\,\,For any $\gamma,\theta\in\mathbf{H}$ with $\wt_{\mathbf{P}}(\gamma)=\wt_{\mathbf{P}}(\theta)$, there exists $\psi\in\Aut_{\mathbb{F}}(\mathbf{H})$ such that $\psi$ preserves $\mathbf{P}$-weight and $\psi(\gamma)=\theta$;

{\bf{(3)}}\,\,$(\mathbf{H},\mathbf{P})$ satisfies Condition (E);

{\bf{(4)}}\,\,$\mathcal{Q}(\mathbf{H},\mathbf{\overline{P}})=\textbf{\textit{l}}(\mathcal{Q}(\mathbf{H},\mathbf{P}))$;

{\bf{(5)}}\,\,$\mathbf{P}$ admits MacWilliams identity;

{\bf{(6)}}\,\,$\mathcal{Q}(\mathbf{H},\mathbf{P})$ is Fourier-reflexive;

{\bf{(7)}}\,\,For any $r\in[1,m]$, either $(\forall~i\in W_{r}:k_{i}=1)$ or $|W_{r}|\leqslant q$ holds true.

Then, we have $(1)\Longleftrightarrow((3)\wedge(7))$ and $(2)\Longleftrightarrow(3)\Longleftrightarrow(4)\Longleftrightarrow(5)\Longleftrightarrow(6)$.
}
\end{corollary}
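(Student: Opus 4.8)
The plan is to derive Corollary 6.2 by specializing Theorem 6.1 to the case $\omega\equiv 1$, using Lemma 2.1 to translate Condition (D) into Condition (E) and the UDP into $\mathbf{P}$ being hierarchical, and then quoting Corollary 6.1(2) for the equivalence $(1)\Longleftrightarrow((3)\wedge(7))$.

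First I would establish $(1)\Longleftrightarrow((3)\wedge(7))$. By Corollary 6.1(2), $\mathbf{H}$ satisfies the MEP for $\mathbf{P}$-weight if and only if $\mathbf{H}$ satisfies Condition (A), $(\mathbf{H},\mathbf{P})$ satisfies Conditions (B) and (E), and for every $r\in[1,m]$ the module $\prod_{i\in W_r}H_i=\prod_{i\in W_r}\mathbb{F}^{k_i}$ satisfies the MEP for Hamming weight. Since $S=\mathbb{F}$ is a finite field, Condition (A) and Condition (B) are automatic (every finite-dimensional vector space is strong pseudo-injective and injective), so these drop out. Writing $\prod_{i\in W_r}\mathbb{F}^{k_i}$, one observes that if all $k_i=1$ for $i\in W_r$ this is $\mathbb{F}^{|W_r|}$, which always satisfies the MEP for Hamming weight; otherwise it is isometric to $\mathbb{F}^{k}$-blocks and Lemma 5.3 (with $e=1$, so $\prod_{i=1}^{1}(q^i+1)=q+1$) tells us the MEP for Hamming weight holds if and only if $|W_r|\leqslant q$. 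Hence the per-level condition in Corollary 6.1(2) is exactly statement $(7)$, and Condition (E) is exactly statement $(3)$, giving $(1)\Longleftrightarrow((3)\wedge(7))$.

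Next I would prove the chain $(2)\Longleftrightarrow(3)\Longleftrightarrow(4)\Longleftrightarrow(5)\Longleftrightarrow(6)$. The cleanest route is to invoke Theorem 6.1 directly with $\omega\equiv 1$: by Remark 2.2, $\mathbf{H}=\prod_{i\in\Omega}\mathbb{F}^{k_i}$ satisfies Condition (C) (it is an Artinian simple ring with all $H_i\neq\{0\}$), and by Lemma 2.1, $(\mathbf{H},(\mathbf{P},\mathbf{1}))$ satisfies Condition (D) if and only if $(\mathbf{H},\mathbf{P})$ satisfies Condition (E). So Theorem 6.1(1) (applied to the constant $1$ function) shows that $(2)\Longrightarrow(3)$ when Condition (C) holds, and Theorem 6.1(2)'s first assertion, specialized, gives that $(2)$ holds whenever the hypotheses of Theorem 6.1(1) are met. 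More precisely, one runs the argument of Theorem 6.3's proof with $\omega\equiv 1$: $(3)\Longrightarrow(2)$ because Condition (E) forces $\mathbf{P}$ hierarchical (Lemma 2.1), hence the UDP holds, hence one produces $\mu\in\Aut(\mathbf{P})$ with $\langle\supp(\theta)\rangle_{\mathbf{P}}=\mu[\langle\supp(\gamma)\rangle_{\mathbf{P}}]$ and $k_i=k_{\mu(i)}$ (equal lengths give equal block sizes by Condition (E)), and then Lemma 6.1(3) produces the desired $\psi$. For the partition-theoretic equivalences $(3)\Longleftrightarrow(4)\Longleftrightarrow(5)\Longleftrightarrow(6)$: since $\omega\equiv 1$ is integer-valued and $\mathbf{P}$ is hierarchical (forced by $(3)$), the final clause of Theorem 6.1 gives that $(2)$–$(6)$ are equivalent; conversely one needs that $(4)$, $(5)$, or $(6)$ each already force $\mathbf{P}$ to be hierarchical so that the equivalence is not vacuous — this follows from [34, Theorem 3] (cited in the text), which shows each of these properties is equivalent to $\mathbf{P}$ being hierarchical in the $k_i=1$ case, and the general-block argument reduces to that case by the structure of $W_r$.

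The main obstacle I anticipate is the direction $(4)\Longrightarrow(3)$ (equivalently $(5)\Longrightarrow(3)$ or $(6)\Longrightarrow(3)$): these are \emph{converses} to the implications packaged in Theorem 6.1, and Theorem 6.1 only asserts $(2)$–$(6)$ are equivalent \emph{under the standing assumption that $\mathbf{P}$ is hierarchical}. So the real work is to show that, with no a priori hierarchy assumption, the MacWilliams identity (or Fourier-reflexivity of $\mathcal{Q}(\mathbf{H},\mathbf{P})$) already forces $\mathbf{P}$ to be hierarchical and the block sizes within a level to be equal, i.e. Condition (E). The cleanest way to handle this is to reduce to the coordinate-wise case by noting that $\mathbf{P}$-weight only depends on $\supp$, invoke [34, Theorem 3] for the reduced poset (where the underlying ground set has been "unfolded" so that each block $\mathbb{F}^{k_i}$ is replaced by $k_i$ copies of a rank-one coordinate carrying weight $\len_{\mathbf{P}}(i)$), and check that Fourier-reflexivity/MacWilliams identity for the blocked code is equivalent to the same for the unfolded code — which requires that equal-length blocks have equal size, precisely Condition (E). Once that reduction is in place, [34, Theorem 3] closes every remaining converse, and the corollary follows. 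Everything else is a routine specialization of already-proven results, so I would keep the write-up short and defer the bookkeeping to the cited theorems.
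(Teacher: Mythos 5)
Your handling of $(1)\Longleftrightarrow((3)\wedge(7))$ is sound and essentially unwinds the paper's own route: the paper cites Theorem 6.4(2), whose proof in turn goes through Corollary 6.1(2) and Lemma 5.3 with $e=1$, exactly as you describe. (One small point: to invoke Lemma 5.3 you need all $k_i$ within a level $W_r$ to be equal, but this is automatic in each direction of the biconditional because $(3)$ appears on both sides.) Also note that several of your references to ``Theorem 6.1'' in the second half should read Theorem 6.5 — Theorem 6.5 is the one whose final clause gives the $(2)$--$(6)$ equivalences under the hierarchy hypothesis, and whose proof contains the $(3)\Rightarrow(2)$ argument you sketch.

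The genuine gap is exactly at the point you flag: you need one converse $(4)$, $(5)$, or $(6)\Rightarrow(3)$ with no a priori hierarchy assumption to close the cycle. The paper does this by citing the external result [50, Theorem II.4], which states directly that $\mathcal{Q}(\mathbf{H},\mathbf{P})$ is Fourier-reflexive if and only if $(\mathbf{H},\mathbf{P})$ satisfies Condition (E), i.e., $(6)\Longleftrightarrow(3)$; together with the chain $(2)\Leftrightarrow(3)\Rightarrow(4)\Leftrightarrow(5)\Rightarrow(6)$ from Theorem 6.5, this yields the full equivalence $(2)$--$(6)$. Your proposed substitute — ``unfolding'' each block $\mathbb{F}^{k_i}$ into $k_i$ rank-one coordinates and reducing to [34, Theorem 3] — is not developed and runs into a real obstruction: $\mathbf{P}$-weight on the block space counts $|\langle\supp(\beta)\rangle_{\mathbf{P}}|$ over $\Omega$, whereas the corresponding support-ideal in the unfolded space has cardinality $\sum_{j\in\langle\supp(\beta)\rangle_{\mathbf{P}}} k_j$, so the two $\mathbf{P}$-weight partitions are different objects unless all $k_j=1$; you would be comparing a poset metric with a weighted poset metric, which is precisely the harder case you are trying to resolve. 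Your remark that the comparison ``requires that equal-length blocks have equal size, precisely Condition (E)'' is circular — that is the conclusion you need, not a hypothesis you can invoke. Without the cited block-level Fourier-reflexivity characterization (or an independent proof of it), the cycle does not close.
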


\begin{proof}
By [50, Theorem II.4], we have $(6)\Longleftrightarrow(3)$. Hence the equivalence between $(2)$--$(6)$ follows from applying Theorem 6.5 to the constant $1$ map. Moreover, $(1)\Longleftrightarrow((3)\wedge(7))$ follows from (2) of Theorem 6.4, as desired.
\end{proof}

\setlength{\parindent}{2em}
\begin{remark}
{In Theorem 6.5, if $\mathbf{P}$ is hierarchical and $\omega$ is integer-valued, then $(2)\Longleftrightarrow(3)$ has been established in [18, Proposition 4] for the case that $\omega(i)=k_{i}$ for all $i\in\Omega$, $(3)\Longleftrightarrow(5)$ has been established in [35, Theorem 7], and $(1)\Longleftrightarrow((3)\wedge(7))$ has been established in [35, Theorem 8] when $q=2$. In Corollary 6.2, $(3)\Longleftrightarrow(4)$ is a special case of [20, Theorems 5.4 and 5.5], and $(3)\Longleftrightarrow(5)$ has been established in [41, Theorems 1 and 2]. Since a partition of $\mathbf{H}$ is Fourier-reflexive if and only if it induces an association scheme (see [52, Theorem 1], [20, Section 2]), if we set $k_{i}=1$ for all $i\in\Omega$, then Corollary 6.2 recovers the equivalence between $\mathbf{P}$ to be hierarchical and parts 1, 2, 3, 4, 6 of [34, Theorem 3]. Based on Theorem 6.5 and Corollary 6.2, we conclude that for weighted poset metric, the MEP is strictly stronger than all the other properties considered in this subsection.
}
\end{remark}

\end{document}